\documentclass[american]{article}
\usepackage[T1]{fontenc}
\usepackage{geometry}
\geometry{verbose,tmargin=3cm,bmargin=3cm,lmargin=2cm,rmargin=2cm}
\usepackage{array}
\usepackage{textcomp}
\usepackage{graphicx}
\usepackage{xr}
\usepackage{amsfonts}
\usepackage{amsbsy}
\usepackage{amstext}
\usepackage{amsthm}
\usepackage{amssymb}
\usepackage{amsmath}
\usepackage{libertine}
\usepackage{bbold}
\usepackage{soul}
\makeatletter
\renewcommand{\@biblabel}[1]{\quad#1.}
\makeatother

\makeatletter


\usepackage{tikz}
\usetikzlibrary{positioning,calc,arrows}
\usepackage{dsfont}
\newtheorem{algo}{Algorithm}

\usepackage{multirow}
\usepackage{subcaption}
\definecolor{green}{RGB}{0,150,0}

\makeatother

\usepackage{babel}
  \addto\captionsamerican{\renewcommand{\corollaryname}{Corollary}}
  \addto\captionsamerican{\renewcommand{\lemmaname}{Lemma}}
  \addto\captionsamerican{\renewcommand{\propositionname}{Proposition}}
  \addto\captionsamerican{\renewcommand{\theoremname}{Theorem}}
  \providecommand{\corollaryname}{Corollary}
  \providecommand{\lemmaname}{Lemma}
  \providecommand{\propositionname}{Proposition}
\providecommand{\theoremname}{Theorem}
\usepackage{float}
\newcommand{\insertfig}[2][1.]{\begin{center}\includegraphics[width=#1\textwidth]{#2}\end{center}}


\newcommand{\citep}[1]{(\cite{#1})}


\title{Models of protein production along the cell cycle: an investigation of possible sources of noise}

\author{ Renaud Dessalles\footnote{Dept. of Biomathematics, UCLA, Los Angeles, CA 90095-1766}, Vincent Fromion*\footnote{MaIAGE, INRA, Universit\' e Paris-Saclay, 78350 Jouy-en-Josas, France, * Corresponding author: vincent.fromion@inra.fr},
Philippe Robert\footnote{INRIA de  Paris, 2 rue Simone Iff,  75589 Paris  Cedex 12, France}}
\date{}

\begin{document}

\numberwithin{equation}{section}
\newtheorem{thm}{\protect\theoremname}
  \newtheorem{prop}[thm]{\protect\propositionname}
  \newtheorem{cor}[thm]{\protect\corollaryname}
  \newtheorem{lem}[thm]{\protect\lemmaname}

\newcommand{\diff}{\mathop{}\mathopen{}\mathrm{d}}
\renewcommand\labelitemi{---}

\global\long\def\P#1{\mathbb{P}\left[#1\right]}

\global\long\def\E#1{\left\langle #1\right\rangle }

\global\long\def\R{\mathbb{R}}

\global\long\def\N{\mathbb{N}}

\global\long\def\V#1{\text{{\rm Var}}\left[#1\right]}

\global\long\def\Cov#1{\text{{\rm Cov}}\left[#1\right]}

\global\long\def\Ehat#1{\overline{\left\langle #1\right\rangle }}

\global\long\def\Ehat#1{\overline{\raisebox{0pt}[1.1\height]{$\left\langle #1 \right\rangle $}}}

\global\long\def\Vhat#1{\overline{\raisebox{0pt}[1.2\height]{\text{{\rm Var}}}}\left[#1\right]}

\newcommand\ind[1]{ \mathbb{1}_{\left\{#1\right\}}}

\global\long\def\cal#1{\mathcal{#1}}

\global\long\def\Vhatt#1#2{\overline{\raisebox{0pt}[1.2\height]{\text{{\rm Var}}}_{#2}}\left[#1\right]}

\maketitle

\vspace*{0.2in}

%
%


\section*{Abstract}
In this article, we quantitatively study, through stochastic models, the effects of several intracellular phenomena, such as cell volume growth, cell division, gene replication as well as fluctuations of available RNA polymerases and ribosomes. These phenomena are indeed rarely considered in classic models of protein production and no relative quantitative comparison among them has been performed. The parameters for a large and representative class of proteins are determined using experimental measures. The main important and surprising conclusion of our study is to show that despite the significant fluctuations of free RNA polymerases and free ribosomes, they bring little variability to protein production contrary to what has been previously proposed in the literature. After verifying the robustness of this quite counter-intuitive result, we  discuss its possible origin from a theoretical view, and interpret it as the result of a mean-field effect.


\section*{Introduction}

For some time now, fluorescent microscopy methods have provided quantitative measurements of gene expression on the level of individual cells, see for instance \cite{elowitz_stochastic_2002,ozbudak_regulation_2002}.  Measurements have shown that protein production is a highly variable process, even for genetically identical cells in constant environmental conditions.  The fluctuations can negatively affect genetic expression and impact the behavior of the  cell (see \cite{losick_stochasticity_2008}), or, on the contrary, beneficially participate in strategies to adapt to a changing environment \citep{balaban_bacterial_2004,acar_stochastic_2008}.

More recently, \cite{taniguchi_quantifying_2010} performed an extensive quantification of the variability of the gene expression of around a thousand different genes in \emph{E. coli}. One experiment per gene was performed in their study: for each cell of a population, the total fluorescence of the protein associated with that gene was measured; this quantity was then normalized by cell volume and the fluorescence of a single protein (see the Supplementary Material of \cite{taniguchi_quantifying_2010}), resulting in the protein concentration in each cell of the population. Furthermore, for a significant number of the genes considered, messenger RNA production is also quantified in each cell: using an mRNA-sequencing technique (RNA-seq), they were able to estimate the average production of 841 different types of mRNA, and using fluorescence \emph{in situ} hybridization, they could even measure some types of mRNA (137 types) with single molecule precision in each cell of a population. Statistics over the population then gave the average concentration of each protein and RNA messenger, as well as the coefficient of variation (CV) for both quantities.

By comparing the behavior of the CV with that associated with the classic two-stage model of gene expression proposed in \cite{rigney_stochastic_1977,berg_model_1978} and reviewed in \cite{paulsson_models_2005} (see below), Taniguchi et al. show that the behavior of the messengers' variability with respect to their abundances resembles the one expected on the basis of the two-stage model.  For protein variability, they identify two regimes of protein variability depending on the average protein concentration. For infrequently expressed proteins, the protein CV is shown to be inversely proportional to the average concentration, in accordance with the one expected with the two-stage model (as it is shown in \cite{taniguchi_quantifying_2010}). By contrast, for highly produced proteins, the CV becomes independent of the average protein concentration and the quantified variability is then significantly larger than the one expected on the basis of the two-stage model. This effect does not seem specific to the type of protein, as all the highly expressed proteins are similarly impacted; therefore, a possible gene-specific phenomenon (such as the variability induced by the regulation of the gene) seems unlikely to explain the additional noise observed. After having ruled out several possibilities, the authors then proposed in their study cell-scale phenomena as possible explanations for this shift, in particular the fluctuations in the availability of RNA-polymerases and ribosomes in the production of different proteins. But other cell-scale mechanisms can also contribute significantly to protein variability, for example the partitioning that occurs at division, in which each compound (mRNA or protein) goes to either of the two daughter cells \citep{huh_non-genetic_2011,huh_random_2011,swain_intrinsic_2002}, or the gene replication event in which the transcription rate is doubled at some point in the cell cycle \citep{swain_intrinsic_2002}. Finally, some assumptions considered in the two-stage model are questionable when applied in the context of Taniguchi~et~al., for example the fact that a death process acting on the proteins models the dilution effect due to cell volume growth (see \cite{fromion_stochastic_2013}).

The study of Taniguchi~et~al. gives an extensive set of measurements for the majority of \emph{E. coli} genes. This turns out to be very useful to link theoretical models with experimental data, not only to determine the parameters of the models, but also to compare the predictions of these models to experimental results. This was developed by Taniguchi~et~al. for the two-stage model, in which they concluded  this classic model misses some cell-scale mechanisms to fully reflect experimental protein variability observed. Yet they did not try to develop models that include cell-scale mechanisms to definitely corroborate their hypothesis.

To tackle this limitation, we propose in this paper a model that integrates several cellular mechanisms which are not present in the two-stage model, and which are usually regarded as possible contributors to the variability of proteins. To set the model parameters, we used the measurements of 841 genes provided by Taniguchi~et al.; those for which both the average mRNA and protein concentration have been measured. The predictions of analytical formulas or simulations using these parameters can then reveal and quantify the contribution of each of the cell mechanisms considered to the variability in gene expression.

In the next subsection, we present a review of the classic two-stage model as it is broadly used in the literature. We will explicitly present its limitations in regard to cell-scale mechanisms that are suspected to play a definitive role in protein variability. We then present aspects that need to be changed in order to represent these cell-scale mechanisms and that make it possible  a quantitative comparison on their effect on protein expression. The Results section presents the predictions and results in three steps: we successively study 1) the effect on protein variance of random partition at division, 2) gene replication, and 3) fluctuations in the availability of RNA-polymerases and ribosomes. In the Discussion section, we discuss the comparison with the variability predicted with our models to those experimentally observed. We will show that even if some mechanisms significantly add variability to the system, it is not enough to explain the variance of highly expressed proteins observed experimentally; contrary to what has been suggested by Taniguchi~et~al. We will also discuss some phenomena not modeled that could impact the protein variability. The Materials and Methods section exhaustively describes our complete model, derived from the two-stage model, and the procedure used to analyze it.

\subsection*{\label{twostagemodel} Limitations of the Two-Stage Model}

	We based our approach on  one of the simplest classic stochastic models of protein production, the two-stage model \citep{rigney_stochastic_1977,berg_model_1978,paulsson_models_2005,shahrezaei_analytical_2008}.  This model describes the production and degradation of mRNAs and proteins of one particular type. In contrast to a three-stage model \citep{paulsson_models_2005}, it considers a constitutive gene, as it does not integrate a regulatory stage at the transcription initiation level, and thus represents the expression of a constitutive gene. The additional noise observed in Taniguchi~et~al. uniformly impacts highly expressed proteins, and therefore seems to be the result of cell-scale phenomena.  Even if gene-specific mechanisms such as gene regulation can have a large impact on protein variance (see \cite{schwabe2012,dessalles_stochastic_2017} for instance), we rather are looking at mechanisms at the scale of the cell to explain the feature observed in Taniguchi~et~al.

The two-stage model represents the evolution of the random variables $M$ and $P$, respectively representing the number of mRNAs and proteins associated with this gene inside a single cell (see Figure~\ref{fig:mod_paulsson_scheme}).  It is important to note that the quantities considered in this model are integers. In Figure~\ref{fig:mod_paulsson_scheme}, $M$ and $P$ respectively denote the number of mRNAs and proteins inside the cell, but their concentrations are not explicit in the model, it does not incorporate the notion of cell volume.

\begin{figure}
	\begin{center}\insertfig[.6]{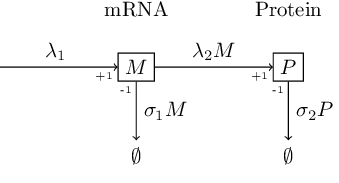}\end{center}
\caption{\label{fig:mod_paulsson_scheme}\textbf{Classic two-stage model of one constitutive gene.} $M$ and $P$ represent the respective numbers of mRNAs and proteins in a given cell. The parameter $\lambda_1$ is the molecular-specific rate at which mRNAs are produced, $\lambda_2M$ is the rate at which proteins are created, $\sigma_1$ indicates the molecular-specific mRNA degradation rate and $\sigma_2$ is the rate at which protein leaves the cell (by the effect of volume growth and division). }
\end{figure}

Both the transcription and translation mechanisms are represented in the model, as well as mRNA degradation. There is also a protein decay mechanism that can either represent the degradation of the protein or the dilution effect induced by the growth of cell volume, since the proteolysis often occurs on a timescale much longer than the cell cycle (see \cite{koch_protein_1955}). All the events, like the production and degradation of mRNAs and proteins, are assumed to occur at exponentially distributed random times, the rates of which depend on the current state of the system.

The simplicity of the two-stage model is such that the identification of closed-form formulas for the mean and variance of mRNAs and proteins is possible (see \cite{paulsson_models_2005}); moreover as it only represents mechanisms that are specific to protein production, the predicted variance can be naturally regarded as representative of the intrinsic noise, as has been done by Taniguchi~et~al. But the model lacks several features that are necessary for making quantitative comparisons with experimental results, and  neglects several mechanisms that might have an impact on protein variance.

\begin{description}

\item [No representation of the volume.] The fact that cell volume is not explicitly represented in the two-stage model (and those derived from it) does not seem to have been specifically highlighted in the literature. It appeared to us to be an important limitation of classic models (such as the two-stage model) for the three following intertwined reasons:
\begin{itemize}

\item Quantitative comparison with experimental concentrations. As previously stated, the two-stage model describes the evolution of the \emph{number} of messengers and proteins in a single cell. It appears that the measurements of Taniguchi~et~al. represent concentrations, as this quantity is often more relevant than the numbers as it determines the speed of each reaction  through the law of mass action. The direct quantitative comparison of means and variances obtained from the two-stage model to those obtained experimentally is not completely licit. In order to have a model for bacterial growth than can be compared with the experimental data, it is necessary to explicitly describe the volume, in order to represent concentrations in the cytosol, and not only numbers.

\item Impact of volume evolution on concentration. In real cells, the simple growth of cell volume has a direct mechanical impact on compound concentrations through dilution. This effect of dilution of proteins is approximated in the two-stage model by a degradation process at  exponentially distributed random times \citep{paulsson_models_2005}, whose rate $\sigma_2P$ depends on the protein number, the parameter $\sigma_2$ being linked to the doubling time. But in reality, dilution is a different process and should be represented as an explicit volume increase rather than a protein disappearance rate. Furthermore, the effect of volume growth on messenger concentration is completely neglected in the two-stage model (due to their quick degradation).

\item Impact of volume evolution on reaction rates. Volume growth also has another impact on protein production, as it tends to slow down the reaction rates by diminishing the reactant concentrations in the law of mass action. This will have important consequences, as we will explicitly represent RNA-polymerases and ribosomes (see below), and the volume growth will impact their respective concentrations and thus global protein production rate.

\end{itemize}

\item [No partition at division.] The two-stage model represents the expression of one gene in a single cell, so protein decay represents the tendency of proteins to disappear from the cell through the effect of division. In reality, division is a relatively sudden event that partitions both mRNAs and proteins in the two daughter cells.

\item [No gene replication.] The two-stage model does not consider genetic replication events, implicitly assuming that the gene promoter number is the same, thus keeping the rate of transcription $\lambda_{1}$ constant. In reality, the activity of the gene is linked to its promoter concentration, which tends to decrease with the cell cycle as the volume increases, until the promoter is replicated.  At this point, the rate of transcription is doubled, thus possibly inducing a transcription burst.

\item [Constant availability of RNA-polymerases and ribosomes.] RNA-polymerases and ribosomes are needed in protein production for transcription and translation respectively. As these resources are shared among all the productions of all the different proteins, their availability fluctuates over  time. Since the two-stage model is a gene-centered model (it represents the expression of only one gene), it is not able to represent the competitive interactions between the different production processes for available RNA-polymerases and ribosomes. In particular, the mRNA production rate $\lambda_{1}$ and protein production rate per mRNA $\lambda_{2}$ are constant, as if the concentrations of available RNA-polymerases and ribosomes remain constant throughout the cell cycle.
\end{description}

Several of these external  mechanisms have been experimentally and theoretically
tackled in the literature: random partitioning at division in
\cite{huh_random_2011,soltani_intercellular_2016,bertaux2018}, gene replication
and gene dosage in
\cite{soltani_intercellular_2016,barziv2016,narula2015,walker_generation_2016,kempe2014,padovanmerhar2015,thomas2019}
or fluctuations in the availability of ribosomes in
\cite{fromion_stochastic_2015}). Nonetheless, all these papers concentrate at
one mechanisms at a time and there is no global perspective with quantitative
comparison on the respective impact of these mechanisms. Questions like ``What
are the respective quantitative effects on protein fluctuations of random
partitioning, gene replication or the sharing of ribosomes and RNA polymerases
when comparing to the intrinsic noise due to the protein production mechanism
itself?'' still need to be handled. Theoretical models, associated with
biologically relevant parameters will allow us to answer this kind of
challenging questions.

During the writing of the current article, two studies \cite{thomas2018,lin2018} were published: they both have a similar global approach that tend to represent whole cell cycle with DNA replication and division. Yet our study model differently several key aspects of the protein productions. In Lin~and~Amir~\cite{lin2018}, the sequestration of RNA polymerases and ribosome during the elongation are not explicitly modelled; like many of the previous studied cited above, the rate of each reaction only depends on the total number of compounds rather their concentration thus do not take into account the dilution effect (for instance, all things equal otherwise, the activity of a single promoter will tend to decrease as the cell volume grows, simply by dilution). In Thomas et al.~\cite{thomas2018}, has a broader approach by gathering proteins into groups (Transporters, Enzymes, Housekeeping, Ribosomes, etc.) and the rate of productions are specific to these groups; if global cell mechanisms are more integrated in this modelling, our approach has a gene specific precision that can be used  to see the effect of different cell mechanisms on the whole diversity of different proteins. In both case, neither of these study has a direct quantitative comparison with experimental measures; as our study fit its parameters and compare directly its predictions to the experimental measures of Taniguchi~et~al.\cite{taniguchi_quantifying_2010}.

\section*{Results}\label{results}

In the Materials and Methods section, we describe in detail our approach that considers models that integrates the features that are missing in the two-stage model previously described. In order to determine the relative impact of each of these aspects, we have proceeded in successive steps of increasing complexity. Below is presented the results concerning three intermediate models that successively incorporate a specific feature; the impact of each of these features on protein variance will be studied one at a time. The three intermediate models are the  following:
\begin{itemize}
\item The first intermediate model only considers the growth of the cell and study the impact of the partitioning at division. In particular, each gene concentration, as well as the free RNA-polymerase and ribosome concentrations, are considered constant during the whole cell cycle. It focuses on the effect of the partition at division on protein variance.
\item The second intermediate model then study the effect of gene replication, while still considering cell volume growth and partitions at division but keeping free RNA-polymerase and ribosome concentrations as constant.
\item Then is considered the complete model, with fluctuating free RNA-polymerases and ribosomes, replication, volume growth and partitions at division.
\end{itemize}
The next three subsections will present each of these models one at a time. For each of the models, the parameters are fitted in order to correspond to each of the genes of Taniguchi~et~al. and the protein concentration variance using the procedure described in Materials and Methods.

\subsection*{\label{subsec:model1}Impact of Random Partitioning}

\begin{figure}
	{
	\phantomsubcaption\label{fig:mod1_principle}
	\phantomsubcaption\label{fig:mod1_scheme}
	\phantomsubcaption\label{fig:mod1_profile}
	\phantomsubcaption\label{fig:mod1_var_ratio}
	}
	\insertfig{Fig2}
\caption{\label{fig:mod1}\textbf{Intermediate model with volume growth and partition at division.} (A):  The model considers transcription,  translation, volume growth, and  partitioning at division. (B): Model of production for one type of protein. The parameters $\lambda_{1}$, $\sigma_{1}$ and $\lambda_{2}$ are specific to the type of protein. The gene concentration is considered as constant, giving a rate of transcription proportional to cell volume. (C): The evolution of the normalized protein concentration during the cell cycle for three proteins. The central thick line represents the average production during the cell cycle and the colored area  the standard deviation. (D) Main: For each type of protein, the variance in the case of random partition divided by the variance in the case of exact partition as a function of the Fano Factor (see main text and
Section~1.5 of S1~Appendix). The proportion of variance attributed to the effect of the random partitioning globally follows the prediction of the simplified model in blue dashed line (see Section~1.5 of S1~Appendix).
 Section~
Inset: Comparison between the two partitions for the protein OmpC. The thick line is the average protein concentration during the cell cycle and the colored areas correspond to the standard deviation in the two cases. }
\end{figure}

The first intermediate model focuses on the effect of random partitioning. Fluctuations in the availability of RNA-polymerases and ribosomes, as well as the gene replication is therefore not considered. As the global sharing of ribosomes and RNA-polymerases are not represented, it results in a model where there are no interactions between the productions of different proteins; each production can be considered separately. It is a gene-centered model and one can focus on the production of one particular type of protein; $M$ and $P$ being respectively the number of mRNAs and proteins of this type.

Figure~\ref{fig:mod1} depicts this intermediate model. The parameter $\lambda_{1}$ of transcription (resp. $\lambda_{2}$ for the translation) implicitly includes some aspects specific to the gene (the promoter\textendash polymerase affinity, for instance) and the effective constant concentration of free RNA-polymerases (resp. free ribosomes). As the gene is in constant concentration, the rate of mRNA creation increases alongside the volume; it is therefore equal to $\lambda_{1}V(t)$, with $V(t)$ the volume of the cell at time $t$. Hence, the rate of mRNA production per volume unit remains constant.

For this intermediate model, a deterministic volume growth is considered.  Based on experimental results (see \cite{wang_robust_2010}), we represent the growth of the bacteria volume as exponential. For this model, if $t$ is the time spent since the last division, the volume is given by
\[
V(t)=V_{0}2^{t/\tau_{D}},
\]
with $V_{0}$ being the typical size of a cell at birth and $\tau_{D}$ the duration of the cell cycle. The explicit description of the volume leads to consider the concentration of mRNA $M(t)/V(t)$ and proteins $P(t)/V(t)$ at any time $t$ of the cell cycle.

Note that in this model and the subsequent ones, we interpret the division mechanism as a ``sizer'' model: the division occurs when the cell reaches the volume $2V_0$ \citep{tyson_sloppy_1986,wang_robust_2010,soifer_single-cell_2014,osella_concerted_2014}. Even if in the present model, the exact dependence of the volume to the cell cycle time can bring us to either interpret it as a ``timer'' (occurring at time $\tau_D$) or a ``sizer'' (occurring when reaching volume $2V_0$ ); the ``sizer'' interpretation will be preferred as it is the one that ensures cell size stability and has been shown to be a good first approximation to explain experimental cell size distributions \citep{robert_division_2014,osella_concerted_2014}. The last intermediate  model will later explicitly use a ``sizer'' mechanism.

In this first model, we study the effect of  partitioning on protein variance. Two mechanisms of segregating compounds at division are compared, either an exact or a random partitioning.
\begin{itemize}
\item for the exact partitioning,  the number of proteins and mRNAs  at division are equally allocated between the two daughter cells. Clearly, this mechanism does not have an impact on the variance of the mRNA and protein concentrations;
\item for the random partitioning at division, each mRNA and protein has an equal chance to be in either one of the two daughter cells (so that with probability $1/2$, they are in the next cell of interest). An additional variability,  due to this random allocation, should be therefore  added in this case.
\end{itemize}
As depicted in Materials and Methods, we first perform a theoretical analysis of this model in order to predict its mean concentrations of mRNA and protein averaged over the cell cycle: respectively $\Ehat{M/V}$ and $\Ehat{P/V}$ as they are defined in Equations~\eqref{eq:Ehat_P} of the Materials and Methods section. In this model, the mean concentration $\E{M(t)/V(t)}$ and $\E{P(t)/V(t)}$ \emph{during} the cell cycle remain constant. We then have that for any time $t$ of the cell cycle:
\begin{equation}
\E{M(t)/V(t)}=\Ehat{M/V}=\frac{\lambda_{1}\tau_{D}}{\sigma_{1}\tau_{D}+\log2}\qquad\text{and}\qquad\E{P(t)/V(t)}=\Ehat{P/V}=\frac{\lambda_{2}\tau_{D}}{\log2}\cdot\frac{\lambda_{1}\tau_{D}}{\sigma_{1}\tau_{D}+\log2}\label{eq:mod1_EM_EP}
\end{equation}
Proofs of these formulas can be found in
Sections~1.1.1 and 1.1.2 of S1~Appendix. The mRNA degradation times and the time of cell cycles are directly
measured in \cite{taniguchi_quantifying_2010}, thus setting the parameters
$\sigma_1$ and $\tau_D$. With that, these previous formulas make it possible, for every
gene considered in \cite{taniguchi_quantifying_2010}, to set the parameters
$\lambda_{1}$ and $\lambda_{2}$ in order to have an average production that
corresponds to those experimentally measured. This gives a series of parameters
corresponding to a representative sample of real bacterial genes (more details
in Section~1.2 
of S1~Appendix).  As described in
Section~1.3
 of S1~Appendix, simulations are
performed using an algorithm derived from Gillespie method in order to determine
the evolution of protein concentration across the cell cycle and determine its
variance averaged over the cell cycle $\Vhat{P/V}$ -- as it is defined in
Equation~\eqref{eq:Vhat_P} of Materials and Methods section.  We then check that the
average protein and mRNA concentrations correspond to those experimentally
measured (see S1 Fig).

Figure~\ref{fig:mod1_profile} shows results of exact simulations (using Gillepsie-related algorithm): the figures show the profile of protein concentration during the cell cycle for three representative genes (\emph{adk}, \emph{fabH} and \emph{yjiE}) which are respectively highly, moderately and lowly expressed. As predicted theoretically, the mean concentration $\E{P(t)/V(t)}$ does not change across the cell cycle (it is due to the fact that the gene concentration remains constant).

Figure~\ref{fig:mod1_var_ratio} shows the proportion of variance that is added with the introduction of random partitioning.  It appears that for all genes, their protein variance indeed increases with random partition, up to be doubled for some genes (like for the protein OmpC whose profile is shown in the inset, where the average production remains constant in both cases, but the random partition increases protein variance at the beginning of the cell cycle.).

The $x$-axis of Figure~\ref{fig:mod1_var_ratio} is somewhat unusual as it is the protein Fano Factor, defined as $\Vhat{P/V}/\Ehat{P/V}$.  It is used because the proportion of variance added by the random partition shows a remarkable clear dependence to the Fano Factor: proteins with a low Fano Factor are particularly more impacted.  Note that rare  proteins also tend to have low Fano factor (see S1 Fig (B)), 
 the global tendency remains the same with having the average production as an x-axis, even if this dependence is less strong (see S1 Fig (C)).

In order to theoretically explain this clear dependence on the Fano factor, another simplified model, that focuses  only on the partition effect without considering volume growth, has been analyzed (details in Section~1.5
 of S1~Appendix). Its predictions are shown in blue dotted line of the main Figure~\ref{fig:mod1_var_ratio}. It globally predicts the proportion of noise that can be attributed to the random partitioning. It confirms that this effect is only significant for proteins with very low Fano factor.

We also want to decompose the protein concentration variance in the same way as it is done experimentally using the dual reporter technique \cite{elowitz_stochastic_2002} where are observed correlated and non-correlated variances between the expression of two proteins with identical promoters: the part of the variance that is not correlated between the two protein expressions is interpreted as solely due to the variability of the production mechanism itself; while the correlated variance is due to the common environment in the cell of the two genes (such as the cell volume, the number of available RNA polymerases, etc.) that influences equally the two productions.
Hilfinger~and~Paulson~\cite{hilfinger_separating_2011} showed that the theoretical counterpart of such decomposition is the environmental state decomposition of the variance (see details in Materials~and~Methods). With the environmental state decomposition, the protein variance
$\Vhat{P/V}$ can be decomposed into the two terms $\Vhatt{P/V}{int}$ and $\Vhatt{P/V}{ext}$ that would respectively represent the intrinsic (uncorrelated variance in the dual reporter technique) and extrinsic contribution (correlated variance in the dual reporter technique)
to protein variance. A general formula for this decomposition can be found in the Materials~and~Methods, but in the current model, as the only external environment
considered is the cell cycle (represented by the cell volume $V$), the docomposition of the variance of each protein $\Vhat{P/V}$ is the sum of
\begin{align}
\Vhatt{P/V}{int} & =  \frac{1}{\tau_{D}}\int_{0}^{\tau_{D}}\V{P(t)/V(t)}\diff t,\label{eq:var_1}\\
\Vhatt{P/V}{ext} & =  \frac{1}{\tau_{D}}\int_{0}^{\tau_{D}}\E{P(t)/V(t)}^{2}\diff t-\Ehat{P/V}^{2}\label{eq:var_2}
\end{align}
(see Section~1.4 
of~S1~Appendix for more details).

In this intermediate model,
even with random partitioning, such a decomposition shows surprisingly no
external contribution $\Vhatt{P/V}{ext}$ for every protein (since the protein
concentration $P(t)/V(t)$ remains constant across the cell cycle). It is
therefore remarkable that this decomposition only captures a part of what is
generally accepted as the extrinsic noise. We verified this decomposition by a
simulation that reproduces the dual reporter technique: we considered the
expression of two identical promoters, with the same parameters, in the same
cell. The covariance of the two protein concentrations measured in the
simulation is much smaller than the variance of each protein concentration (see
Section~1.4
in S1~Appendix and S1~Fig (D)). 

\subsection*{\label{subsec:model2} Impact of Gene Replication}

\begin{figure}
	{
	\phantomsubcaption\label{fig:mod2_principle}
	\phantomsubcaption\label{fig:mod2_scheme}
	\phantomsubcaption\label{fig:mod2_profiles}
	\phantomsubcaption\label{fig:mod2_va_ratio_and_profile_AdK}
	}
	\insertfig{Fig3}
\caption{\label{fig:mod2}\textbf{Intermediate model with cell cycle and gene replication.} (A): Features of the intermediate model. The model now takes into account replication. (B): The production of one particular type of protein. The number of mRNAs and proteins are respectively $M$ and $P$; the difference with the previous model is the introduction of replication at a time $\tau_{R}$ after the cell birth where the transcription rate is doubled. (C): The evolution of the normalized protein concentration during the cell cycle for three proteins. The central thick line and the colored areas represent the same quantities as in Figure~\ref{fig:mod1_profile}. (D) Main: for each type of protein, protein variance of the previous model (gene in constant concentration and random partitioning) divided by the one in this model. The replication paradoxically tends to slightly diminish the fluctuations of the concentration. Insight: on the left, the concentration through the cell cycle for the protein Adk (a close-up of the one presented in Figure~\ref{fig:mod2_profiles}); on the right, profile of a modified version of Adk with parameters chosen in order to minimize the variance (see main text).}
\end{figure}

Taking back the model previously described with volume growth and random
partitioning at division, we introduce the notion of gene replication. As in the
case of the slowing growing bacteria of \cite{taniguchi_quantifying_2010}, we
consider only one DNA replication per cell cycle. The gene is represented for
now on as an entity that is replicated at some instant $\tau_{R}$ of the cell
cycle, hence doubling the transcription rate ($\tau_R$ is determined depending
on the position of the gene on the chromosome, see Materials and Methods): before
a time $\tau_{R}$ after the cell birth, the rate of transcription will be
$\lambda_{1}$, after the time $\tau_{R}$, the rate of transcription is doubled
(see Figure~\ref{fig:mod2_scheme}).

The mean of mRNAs concentration can be determined at any moment of the cell cycle. At any time $t$ in the cell cycle (with $0\leq t<\tau_{D}$ ), it is given by,
\begin{equation}
\E{M(t)/V(t)}=\frac{\lambda_{1}}{\sigma_{1}V(t)}\left[1-\frac{e^{-(t+\tau_{D}-\tau_{R})\sigma_{1}}}{2-e^{-\tau_{D}\sigma_{1}}}+\ind{t\geq\tau_{R}}\left(1-e^{-(t-\tau_{R})\sigma_{1}}\right)\right],\label{eq:mod2-EMs}
\end{equation}
with $\ind{x\geq y}$ being the Heaviside function ($\ind{x\geq y}=1$ if $x\geq y$ and $0$ otherwise). One can refer to Section 2.1.1 
 of S1~Appendix for the proof.

Similarly, the mean of  protein concentration can also be explicitly determined. Before the replication, if $t$ indicates the time after the birth of the cell (i.e. $0\leq t<\tau_{R}$ ), we can determine the mean of $P(t)/V(t)$ as a function of the mean of $M(0)$ and $P(0)$. Similarly, after the replication, the mean of $P(t+\tau_{R})/V(t+\tau_{R})$ (with $t$ such as $0\leq t<\tau_{D}-\tau_{R}$ ) is known as a function of the mean of $M(\tau_{R})$ and $P(\tau_{R})$.  They are given by the formula,
\begin{equation}
\E{P(t+\tau)/V(t+\tau)}=\frac{1}{V(t+\tau)}\left[\E{P(\tau)}+\frac{\eta\lambda_{1}}{\sigma_{1}}\lambda_{2}t+\left(\E{M(\tau)}-\frac{\eta\lambda_{1}}{\sigma_{1}}\right)\frac{1-e^{-\sigma_{1}t}}{\sigma_{1}}\right],\label{eq:mod2-EPs}
\end{equation}
with $\tau=0$ and $\eta=1$ for the case before the replication
and $\tau=\tau_{R}$ and $\eta=2$ for the case after the replication.
At steady state we are able to determine explicit values for $\E{P(0)}$
and $\E{P(\tau_{R}}$ as a function of the parameters (see Section~2.1.2 
 of S1~Appendix).

As for the previous model, Equations~\eqref{eq:mod2-EMs} and~\eqref{eq:mod2-EPs} make it possible, for every gene experimentally measured in \cite{taniguchi_quantifying_2010}, to give a set of parameters $\lambda_{1}$, $\sigma_{1}$ and $\lambda_{2}$ that corresponds to it (Section~2.2 
of S1~Appendix and S2 Fig (A)).
 To determine the variance of protein concentration, simulations can be performed, but it is noticeable that we also managed to have  formulas for mRNA and protein variances. These formulas greatly simplify the analysis of this intermediate model (see Sections~2.1.1 and 2.1.2 
 of S1~Appendix).

Figure~\ref{fig:mod2_profiles} presents the normalized profiles for three different proteins with different replication times in the cell cycle: their normalized average concentration (thick line) and their normalized standard deviation (colored area) during the cell cycle are shown. Globally, these profiles present little changes compared to the ones of the previous model (Figure~\ref{fig:mod1_profile}). In the left figure of the inset of Figure~\ref{fig:mod2_va_ratio_and_profile_AdK} is shown the profile of the protein Adk (a closeup of the first profile of Figure~\ref{fig:mod2_profiles}). It appears that the mean concentration at any given time $t$ of the cell cycle $\E{P(t)/V(t)}$ (the thick line of the profile) is not constant during the cell cycle, as it was the case in the model of Figure~\ref{fig:mod1}. The curve of $\E{P(t)/V(t)}$ fluctuates around $2\%$ of the global average protein concentration $\Ehat{P/V}$.

The main Figure~\ref{fig:mod2_va_ratio_and_profile_AdK} shows the effect of replication on the variance: it represents the ratio of protein variance between the previous model (with genes in constant concentration and random partitioning) and the current model with gene replication.  For all the genes, the variances predicted show little difference from the previous intermediate model. The ratio is even surprisingly slightly above one for many genes, indicating that for these genes the replication tends to reduce the variance.

As for the previous intermediate model, we can use the environmental state decomposition to separate the part of variance $\Vhatt{P/V}{int}$ (defined in Equation~\eqref{eq:var_1}) specific to the gene expression and $\Vhatt{P/V}{ext}$ (defined in Equation~\eqref{eq:var_2}) attributed to cell cycle fluctuations. As the mean concentration, $\E{P(t)/V(t)}$ is no longer constant during the cell cycle $\Vhatt{P/V}{ext}$ is no longer null. Yet it appears that $\Vhatt{P/V}{ext}$ only represents a very small part of the global variance $\Vhat{P/V}$ (for $99\%$ of the genes, it represents less than $2\%$). For  this intermediate model, the extrinsic contribution of DNA replication computed with this decomposition is small.

Using the analytical formula of protein variance (Section~2.1.2 
 of S1~Appendix) and by performing variations on certain parameters, we can analyze the effect of several aspects on protein variance in this model.
By considering a given protein (protein Adk), we subsequently modified different parameters while making sure that the average protein concentration remain unchanged, to see how each of these changes impact the protein variance:
\begin{itemize}
	\item Changing the position of the gene on the chromosome (specifically changing $\tau_R$ and slightly adapting the gene activation rate $\lambda_1$ to keep the same average mRNA concentration)
	\item Changing the mRNA lifetime (by increasing the gene activity  $\lambda_1$ and increasing the mRNA degradation rate $\sigma_1$ so that it keeps the same average mRNA concentration)
	\item Changing the mRNA number (by increasing the gene activity  $\lambda_1$ while decreasing the mRNA activity $\lambda_2$ so that it keeps the same average protein concentration)
\end{itemize}

Results are shown in S2 Fig(C) of S1 appendix. 
Changing the gene position have almost no impact on global protein variance $\Vhat{P/V}$. The effect of mRNA lifetime is more noticeable as a shorter mRNA lifetime can diminish protein variance at most about $40\%$. The mRNA number seems to have the most important effect on protein production: for the same average protein concentration, having more mRNAs greatly diminish protein variance; such effect has been experimentally observed \citep{blake_noise_2003,ozbudak_regulation_2002}. This can be interpreted as lower bursting effect in protein production: as it is known that mRNAs in few copies with large activity display a protein production with large bursts, conversely a large number of mRNAs less active leads to a more stable protein production.

The right insight of S2 Fig(C) of S1 Appendix  
shows an example of such a protein with reduced variance: this protein is based on Adk, the protein average production is the same but there are ten times more mRNAs, with a ten times shorter lifetime. The variance is indeed reduced but with the cost of the production of additional mRNAs. Yet, even in this case, the protein expression is not strongly cycle-dependent (see inset of S2 Fig(D) of S1 Appendix); 
 in particular, this profile is not precise enough to be used as a ``trigger'' for periodic cell events (such as DNA replication initiation, or partition at division): the evolution of the protein concentration across the cell cycle is not precise enough to robustly distinguish different phases of the cell cycle.

\subsection*{Impact of the Sharing of RNA-polymerases and Ribosomes }\label{subsec:model3}

\begin{figure}
	\insertfig[.8]{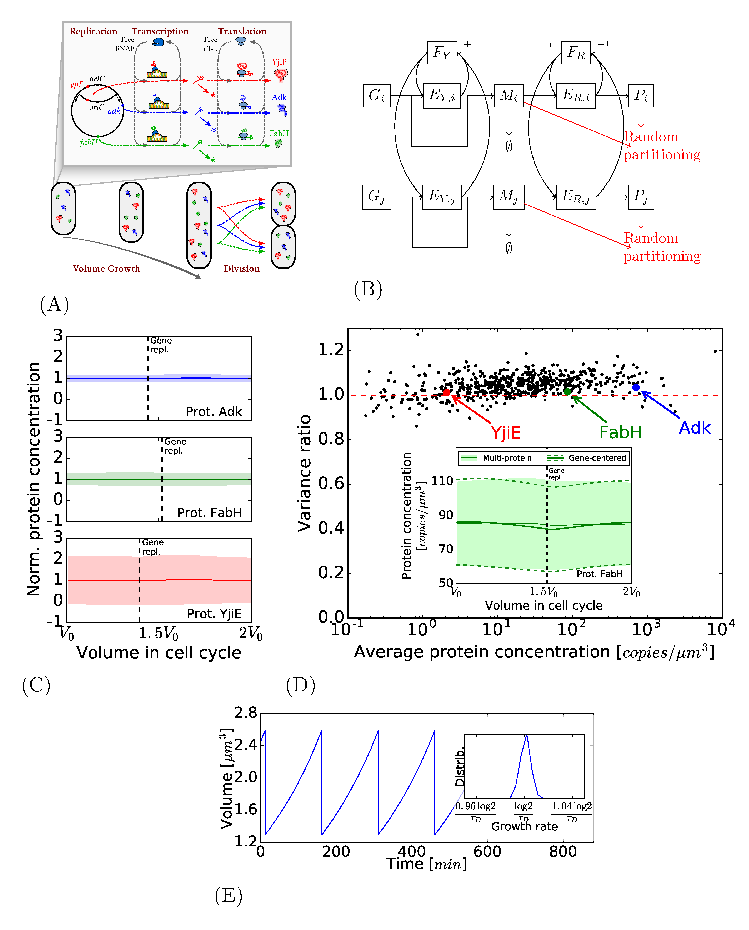}
	{
	\phantomsubcaption\label{fig:mod3_principle}
	\phantomsubcaption\label{fig:mod3_scheme}
	\phantomsubcaption\label{fig:mod3_profiles}
	\phantomsubcaption\label{fig:mod3_var_ratio}
	\phantomsubcaption\label{fig:mod3_exp-growth}
	}
\caption{\label{fig:mod3}\textbf{Complete model.} (A): The model now considers the sharing of RNA-polymerases and ribosomes between the different productions. (B): The model of production of all proteins. The $i$-th gene is associated with one particular type of mRNA (whose number is $M_{i}$) and proteins (whose number is $P_{i}$).  The number of free RNA-polymerases (resp. ribosomes) is $F_{Y}$ (resp.  $F_{R}$), the number of those sequestered on the $i$-th gene is $E_{Y,i}$ (resp. $E_{R,i}$). (C): The evolution of the normalized protein concentration during the cell cycle for three proteins. The central thick line and the colored areas represent the same quantities as in Figure~\ref{fig:mod1_profile}. (D) Main: variance in the previous model with replication divided by the variance in this model with sharing of RNA-polymerases and ribosomes. Insight: the concentration through the cell cycle for the protein FabH (a close-up of the one presented in Figure~\ref{fig:mod3_profiles}).  (E) Main: A simulation sample that shows that cell volume grows exponentially from around $V_{0}$ up to around $2V_{0}$. Insight: The growth rates of the simulation are centered around the expected growth rate $\log2/\tau_{D}$. }
\end{figure}

We now consider the complete model, which includes a limited amount of RNA-polymerases and ribosomes: the model is explained in detail in the Materials and Methods. Now, RNA-polymerases and ribosomes are explicitly represented in the model and each of these macromolecules is considered either allocated (i.e. sequestered on a gene if it is an RNA-polymerase, or on an mRNA if it is a ribosome), or free (i.e. either moving freely in the cytoplasm or, in the case of RNA-polymerases, potentially non-specifically sliding on the DNA).

The previous intermediate models were ``gene-centered'': each class of proteins was considered independently from each other.  The common sharing of RNA-polymerases and ribosomes is an additional key feature that leads us to investigate a multi-protein model where all the genes of the bacteria are considered altogether. For each type of protein $i$, we denote by $G_{i}(t)$, $M_{i}(t)$ and $P_{i}(t)$ respectively the number of gene copies, of messengers and of proteins at time $t$ in the cell cycle. For each gene $i$, $E_{Y,i}(t)$ is the number of RNA-polymerases sequestered on the $i$-th gene for transcription and $E_{R,i}(t)$ is the number of ribosomes sequestered on an mRNA of type $i$ for translation. The non-allocated RNA-polymerases and ribosomes are respectively denoted by $F_{Y}(t)$ and $F_{R}(t)$.  In a first step, we have considered that the gene pool of \cite{taniguchi_quantifying_2010} ($841$ genes with their mRNA and protein expression measured) would represent the whole genome. We will later see that the addition of new genes does not change significantly  our results.

New ribosomes and RNA-polymerases are added to the system as cell volume increases: in a first step, these macromolecules are regularly added such as their total concentration in the cell remains constant during the cell cycle; we will later consider a more realistic way to represent RNA-polymerase and ribosome production.

The previous intermediate models represented the production of a specific type of protein immersed into a ``background environment'' where the cell grows and divides, the current model includes simultaneously all the genes altogether. In this model, as we are on the scale of the whole cell, we would like to model the impact of global protein production on the cell growth. We therefore can no longer consider that the production of each type of protein has no effect on the global performance of the cell. The volume $V(t)$ depends now on the global production of proteins, and it is not an independent and deterministic feature anymore. As the density of cell components tends to be constant in real-life experiment \citep{marr_growth_1991} and proteins represent more than half of the dry mass of the cell \citep{neidhardt_chemical_1996}, the model considers the volume as proportional to the current total mass of proteins in the cell. The mass of each protein is given by the length of its peptide chain (see Section~3.2.1 
 of S1~Appendix for an exhaustive description of the model). Like in the previous model, division occurs when the cell reaches the volume $2V_0$ (making it a ``sizer'' model).

The processes of mRNA and protein productions are both separated in two parts: the binding and initiation on one side, and the elongation and termination on the other side. The rate at which an RNA-polymerase is sequestered on a gene of type $i$ at time $t$ depends on the copy number of the $i$-th gene $G_{i}(t)$, the free RNA-polymerase concentration $F_{Y}(t)/V(t)$ and a parameter $\lambda_{1,i}$ specific to the gene that takes into account the RNA-polymerase\textendash promoter affinity. The elongation rate of each mRNA only depends on the average transcription speed and the length of the gene. The mechanism for translation is similar.

As this model is more complex than the previous ones, the complete analytical description of mRNA and protein dynamics seems to be out of reach. To address this problem, we try to predict the average behavior of this model using a system of ordinary differential equations (ODEs). We used the predictions of these equations to fix the parameters. An \emph{a posteriori} validation has been made to check that this system of ODEs well predicts the average behavior of the stochastic model (see Section~3.2.1 
 of S1~Appendix for more details on the procedure).

In Figure~\ref{fig:mod3_exp-growth} is shown an example of the cell volume evolution in a simulation. It appears that the volume seems to grow exponentially during the cell cycle. The number of available ribosomes and RNA-polymerases also changes rapidly, of the order of the second for the RNA-polymerases, and of the order of one tenth of a second for the ribosomes (see S3 Fig(C) and S3 Fig(D) of S1 Appendix).

Figure~\ref{fig:mod3_var_ratio} compares the results of the simulations with the previous intermediate model with gene replication and random partitioning. It shows that, for $90\%$ of the genes, the interactions between protein productions only represent at most $10\%$ of variability.  In inset is shown the example of the protein FabH profile during the cell cycle, showing that sharing of RNA-polymerases and ribosomes introduces little change.

We also analyze the model using the environmental state decomposition.  Two genes of the dual reporter technique would undergo the same volume growth with the same evolution of free RNA-polymerases and ribosomes; as a consequence, the common cellular environment on which the decomposition is operated now includes the concentration of free RNA-polymerases and ribosomes (in addition to the cell cycle). Details can be found in Section~3.7 
 of S1~Appendix.  It appears that for all the genes, the extrinsic contribution of the variance $\Vhatt{P_{i}/V}{ext}$ represents only a very small portion of $\Vhat{P_{i}/V}$ (for $99\%$ of the cells, the ratio $\Vhatt{P_{i}/V}{ext}/\Vhat{P_{i}/V}$ represents less than $1\%$).

We compared these simulation results with a simplified theoretical model: Section~3.6 
of S1~Appendix presents a multi-protein model, that is inspired by the one described in \cite{fromion_stochastic_2015}.  Even if it is a multi-protein model as it represents the expression of a large number of genes altogether, it is a simpler model than the one presented here as it considers separately transcription and translation, and it does not consider neither volume growth, partitioning at division, nor DNA replication. We show that the predicted distributions of free RNA-polymerases (and ribosomes in the adapted model) fits well the one observed in our simulations (see S4 Fig). 
 As we will see in the discussion, this good correspondence between the models would suggest that the mean-field mathematical properties proven for the simplified model could be applied to our complete model.

\subsection*{Model and Parameter Sensitivity}

\begin{figure}
	\insertfig{Fig5}
	{
	\phantomsubcaption\label{fig:additional-genes}
	\phantomsubcaption\label{fig:production-RNAP-ribo}
	}
\caption{\label{fig:param_sensibility}\textbf{ Model and parameter sensitivity.} (A): Creation of additional genes by sampling of mRNA and protein concentrations in accordance to their correlation (see Section~3.6.1 
 of S1~Appendix for details on the procedure). (B):  A variation on the complete model where the RNA-polymerases and ribosomes are produced by the cell. Above: clear correlation between the total number of ribosomes and the estimation of volume growth. Below: comparison between the two versions of the model (with or without production of RNA-polymerases and ribosomes); the distribution of protein number distribution of FabH protein for cells of different volumes in both cases.}
\end{figure}

The complete model supposes a series of modeling and parameter choices that might legitimately influence protein production. We have analyzed several of these aspects and have  shown that they do not appear to  significantly change the results previously presented.
\begin{description}
\item [{Quantity of free RNA-polymerases and ribosomes}.] The average concentration of free RNA-polymerases and ribosomes in a cell cannot be deduced from Taniguchi~et al. They are nevertheless needed to estimate the parameters of our model, see Section 3.3 
 of S1~Appendix. Globally, one can expect to have a low concentration of free ribosomes and a higher concentration of free RNA-polymerases, see Section~3.5.1  
 of S1~Appendix. But precise numbers  seem to be difficult to obtain.  We therefore perform several simulations with different values for these concentrations (for each macromolecule, a concentration taking successively 1, 10, 100 and 1000~$\text{copies}/\text{\textmu m}^3$), without significant changes. See S4~Fig 
 and  Section~3.5 
 of S1~Appendix for details.
\item [{Additional Genes}.] As previously said, to perform our first simulations, only 841 genes from which the average mRNA and protein concentration have been measured in Tanuguchi~et~al. are considered. To have a pool of proteins that might represent a global gene expression in \emph{E. coli}, we studied the case of a simulation with a  set of genes that represent about 2000 genes, more in  accordance to the expected number of genes expressed in a growing condition. To propose realistic parameters for these fictional genes, we sample them according to different empirical distributions observed in the empirical data, and also by taking into account the possible correlations observed (the correlation that exists between the average mRNA and protein concentration for instance). See Figure~\ref{fig:additional-genes} and Section~3.6.1 
 of S1~Appendix. No changes in protein concentration variance is observed.

\item [{Non-specific binding of RNA-polymerases}.] It has been proposed that many of the RNA-polymerases are non-specifically bound on the DNA (see \cite{klumpp_growth-rate-dependent_2008} for instance). We have done a simulation where RNA-polymerase can bind non-specifically on the DNA. When in this state, they are not available for the transcription. As previously it does not appear to change the protein expression behavior.
 See Section~3.6.3 
 of S1~Appendix.

\item [{Production of RNA-polymerases and ribosomes}.] The total amount of RNA-polymerases and ribosomes (whether free or not) were at first considered in constant concentration: the RNA-polymerases and ribosomes were added as cell volume increases. We have done a simulation that considers a way to represent their production to have a more realistic representation: both RNA-polymerases and ribosomes are produced as if they were one of the proteins of the system (this goal of the simulation is just to have an insight of the effect of RNA-polymerase and ribosome production, not to represent precisely their production mechanisms). The introduction of such mechanism indeed changes some aspects of the simulation: in particular, the growth of the cell is more erratic as it then directly correlated with the total number of ribosomes (see Figure~\ref{fig:production-RNAP-ribo}, above). But as the production of proteins increases with a higher number of ribosomes, so does the volume of the cell. In terms, of concentration, the induced fluctuations in the number of ribosomes have little impact in terms of protein concentration variance (see Figure~\ref{fig:production-RNAP-ribo}, below). See Section~3.6.2 
 of S1~Appendix for more details.

\item [{Precision in the division and DNA replication initiation timing}.] The initial simulation triggers DNA replication initiation and division when the cell reaches some volume. Yet fluctuations in the timing of division has been proposed to have an impact on the protein variance \cite{thomas2019}.
 We investigate approximate models of division and of DNA replication initiation by introducing a volume-dependent rate of division as it is commonly used in the literature, see \cite{tyson_sloppy_1986,wang_robust_2010,soifer_single-cell_2014,osella_concerted_2014} (see Section~3.6.4 
 of S1~Appendix).
 As previously (when we investigated the effect of the production of RNA-polymerases and ribosomes), if the protein \emph{number} is impacted, the protein \emph{concentration} remain relatively unchanged due to the fact that fluctuations induced by uncertainty in the timing of division affect correlatively the protein number and the cell volume.

\end{description}

\section*{\label{discussion}Discussion}

\subsection*{Interpretation of the Model Predictions}

The experimental data of Taniguchi et al. gave us the opportunity to systematically and quantitatively inspect the impact on protein variance of many cell mechanisms that are not often considered in stochastic models of protein production. The broad variety of genes experimentally measured in Taniguchi et al. has been a good opportunity for us to realistically test our models for a wide number of different genes, with different mRNA and protein concentrations, different mRNA lifetimes or gene position on the chromosome or gene length.

From this analysis, it appears that among the different features included in the model, the random partitioning has the most significant effect on the variance of protein concentration, especially for the less expressed proteins. We recover here one of the conclusions made in \cite{thomas2018}. The gene replication induces little difference (it even tends, to a small extent, to reduce the variance in some cases); the important fluctuations of free RNA-polymerases and ribosomes have little impact on protein production, which does not fit the hypothesis made in \cite{taniguchi_quantifying_2010}. It is confirmed by the environmental state decomposition, which separates the intrinsic and extrinsic contribution to protein variance (in an analog way as it is done with the dual reporter technique): the extrinsic contributions represent at most a few percents of the total variance.

We interpret the surprising little impact of sharing of RNA-polymerases and ribosomes on the proteins variance by noticing the similarities of our model with the one described in \cite{fromion_stochastic_2015}. Indeed, as previously explained, the global behavior of free RNA-polymerases and ribosomes can be predicted by a simplified model derive from \cite{fromion_stochastic_2015}, where the RNA-polymerases and ribosomes are shared among the different productions. The main result of \cite{fromion_stochastic_2015} is a mean-field theorem: as the number of genes increases, the production process of different types of proteins can be seen as independent production processes. The reason is that the dynamic of free RNA-polymerases and ribosomes is much faster than the production of mRNAs and proteins of one particular type. The rate at which an mRNA or protein is elongated only depends on the ``local steady state'' concentrations of free RNA-polymerases and ribosomes (a similar phenomenon can be found in \cite{dessalles_stochastic_2017b}). Our model seems to display such mechanism: with a global sharing of RNA-polymerases and ribosomes by a large amount of protein productions, the dynamic of free RNA-polymerases and ribosomes is faster than the production of each mRNA and protein of each type. As a consequence, it is not surprising to see that this multi-protein model, which takes into account the production of all proteins displays little difference with the intermediate ``gene-centered'' model due to a mean-field effect.

\subsection*{Comparison with Experimental Measures}

In the end, we can compare the results of our models with the experiments. Firstly, one can remark that the profile of the mean production $\E{P(t)/V(t)}$ (the plain line in Figure~\ref{fig:mod2_va_ratio_and_profile_AdK} is representative of all cells) during the cell cycle corresponds to the one observed experimentally in \cite{walker_generation_2016}. Furthermore, the maximum deviation of the average $\E{P(t)/V(t)}$ around the global average protein $\Ehat{P/V}$ (red dashed line in Figure~\ref{fig:mod2_va_ratio_and_profile_AdK}) is between $2\%$ and $4\%$ for all the proteins of our models, and  Walker et al. measure such fluctuations also around $2\%$ of the global average for genes at different positions on the chromosome (see Figure 1.d and Figure S6.b of \cite{walker_generation_2016}).

Secondly, we compared the global mRNA  and protein fluctuations predicted in our models with those measured in Taniguchi et al.  Figures~\ref{fig:final_pannel1} and~\ref{fig:final_pannel}  shows respectively, for every gene, the protein CV of mRNAs and proteins (resp. defined as $\Vhat{M/V}/\Ehat{M/V}^{2}$ and  $\Vhat{P/V}/\Ehat{P/V}^{2}$) against their respective average concentration (resp. $\Ehat{M/V}$ and $\Ehat{P/V}$); it is compared in both cases with the same results obtained experimentally (respectively corresponding to measurements shown in Figures~2.D and~2.B of \cite{taniguchi_quantifying_2010}). For the mRNAs (Figure~\ref{fig:final_pannel1}), the noise globally scales inversely the average mRNA concentration. Experimental measurements in \cite{taniguchi_quantifying_2010} -- made using the FISH technique for 137 of the highest expressed mRNAs -- show a similar tendency (the CVs are normalized on the figure because the uncertainty in the cell volume at birth can introduce a shift, but the shape of the normalized CV of both experimental and simulated CV remains exactly the same regardless of this effect). For the protein CV (Figure~\ref{fig:final_pannel}), it appears that the noise approximately scales inversely the average protein concentration like in the first ``intrinsic noise'' regime of \cite{taniguchi_quantifying_2010}.  But unlike in the \cite{taniguchi_quantifying_2010} experiment, there is no lower plateau for highly expressed proteins: for the highest produced proteins, the CV should be in the order of $10^3$ fold higher than the one predicted. It confirms that the features considered here cannot correctly explain the noise observed experimentally.

\begin{figure}
	\insertfig{Fig6}
	{
	\phantomsubcaption\label{fig:final_pannel1}
	\phantomsubcaption\label{fig:final_pannel}
	}
\caption{ \textbf{Comparison with experimental dataset of Taniguchi et al.} (A): The normalized CV of mRNA concentration
(defined as $\protect\Vhat{M/V}/\protect\Ehat{M/V}^2$) for each gene
predicted by the complete model (in yellow, the corresponding experimental measurements).
(B): The CV of protein concentration for each gene
predicted by the complete model (in yellow, the corresponding experimental measurements).}
\end{figure}

We can propose different interpretations to explain the discrepancy between the predictions of the models and the experimental measures. For the  biological processes not included in our models and that can have an impact on the variability, one can first mention the gene regulation as in our models, all the genes are considered as constitutive. The introduction of a gene regulation can indeed induce a large variability in protein concentration \citep{paulsson_models_2005,shahrezaei_analytical_2008}. Nonetheless, the ``extrinsic noise plateau'' observed in Taniguchi et al. only concerns the proteins and not the mRNAs (compare the yellow areas of Figures~\ref{fig:final_pannel1} and ~\ref{fig:final_pannel}). As a consequence, one can expect that the mechanism explaining the extrinsic noise plateau takes place at the translation level and not at the transcription. Moreover for highly expressed proteins, the protein CV is independent of the protein expression; it is therefore not gene-dependent as it would be the case for gene regulation. Finally, we have considered a simple model with gene regulation (like the three-stage model of \cite{paulsson_models_2005}), and determined the regulations parameters in order to predict protein variance observed in \cite{taniguchi_quantifying_2010}; we came with an activation/deactivation timescale has to be very high (several times the doubling time) in order to reproduce the variance experimentally observed, which is way above the typical biologically expected parameters.

One can also mention other possible mechanisms not represented in our models such as the fluctuations of availability of amino-acids or free RNA nucleotides in the medium, thus inducing additional fluctuations in the translation speed. Even if one can see here a clear analogy with the fluctuations in RNA-polymerase and ribosome availability (which also impact the transcription and translation speeds), the different timescales of the dynamics of amino-acids or free RNA nucleotides might result in a different effect. One can also challenge the hypothesis time intervals between events are modeled by exponentially distributed random variables: for instance, elongation times would be better represented as having Erlang distribution, that is the sum of independent exponential random variables. Yet, some results incline to say that it has a limited impact \citep{fromion_stochastic_2013}. Also, in this model, the binding and initiation (either of RNA-polymerases or ribosomes) are considered as a single event. A more precise representation would be to describe them as two different processes (Reference \cite{siwiak_transimulation_2013} gives for instance a median transcriptional initiation time of $15\,\text{t}$ which is of the same order of magnitude as the elongation time).

One can also consider that, as this effect mainly affects proteins with the highest fluorescence, it is possible that some saturation induces a bias in the estimation of variance of highly produced proteins. To our knowledge, exhaustive measures of \cite{taniguchi_quantifying_2010} for mean and variance of protein and mRNA concentrations have not been replicated at the same scale, so we have not been able to confront our results to other measures.

\section*{\label{methods}Materials and Methods}

\subsection*{\label{method:complete model} The Complete Model}
In this subsection is presented in detail the complete model of Figure~\ref{fig:mod3} that includes the sharing of RNA-polymerases and Ribosomes (the other models being simplifications of this complete model). It represents, for any gene, both the number of mRNA and protein molecules associated with the gene inside a given cell. But, contrary to the two-stage model (Figure~\ref{fig:mod_paulsson_scheme}) it also explicitly represents the volume $V(t)$ that is changing across the time $t$ due to cell growth, so that, if $M(t)$ and $P(t)$ represent the respective number of mRNAs and proteins of a given gene at any time $t$, one can now explicitly represent their concentration by
$$M(t)/V(t)\qquad\text{and}\qquad P(t)/V(t).$$
Furthermore, contrary to the two-stage model, all genes in the bacteria are represented altogether here (in order to represent the global sharing of RNA-polymerases and ribosomes in the different  productions of proteins) and the division (to represent a partitioning of components at septation). When dividing, the model focuses on only one of the two daughter cells in order to follow one lineage of cells.

\begin{figure}
	\insertfig{Fig7}
	{
	\phantomsubcaption\label{fig:procedure}
	\phantomsubcaption\label{fig:mrna_prot_taniguchi}
	\phantomsubcaption\label{fig:Quantitative-summary-Model0-1}
	}
\caption{\textbf{Experimental data and modeling principles.} (A): The complete model. (B): Scheme of analysis to determine the variance of protein concentration for every gene predicted in each intermediate model.  (C): \cite{taniguchi_quantifying_2010} measures of mRNAs and proteins for $841$ genes. (D): Box-plots presenting the parameters deduced from experiments to corresponds to the model of Figure~\ref{fig:mod1}; for the other models, these rates are in the same order of magnitude. }
\end{figure}

\begin{description}
\item [{Volume~growth~and~division}.] The volume of the cell is represented in the model and increases alongside the growth of the cell. As the number of mRNAs and proteins of each type is represented, this volume also makes it possible to explicitly represent their concentration inside the cell. When the cell doubles its volume, division occurs: it is a sizer model. All the compounds (mRNAs and proteins) are then randomly partitioned in the two daughter cells (this partition is considered as equally likely  as each compound has an equal chance to be in either one of the two daughter cells, we do not consider strong asymmetry in cell volume division).
Then the model only follows one of the two daughter cells beginning a new cell cycle.

\item [{Units~of~Production}.] Each type of protein has a specific type of mRNA and a unique gene associated with (in particular, there is no notion of operons). In the $i$-th unit of production, the number of gene copies $G_i$, mRNAs $M_i$, and proteins $P_i$ inside the cell is explicitly represented. Each copy of the gene can be transcribed into an mRNA.  The mRNA can then be translated into a protein until its degradation, the degradation rate is specific to the type of mRNA. We do not consider any rate of protein degradation:  the proteolysis occurring in a timescale much longer than the cell cycle (see \cite{koch_protein_1955}) for most proteins, its decay is then dominated by protein partition that occurs at division.

\item [{DNA-Replication}.] Each gene can be present in one or two copies in the cell (only one DNA replication is considered as in the slowly growing cells of Taniguchi et al.).  The instant of replication of each gene is simply determined by its position on the chromosome. When replicated, the rate of transcription of the gene is doubled.

\item [{RNA-polymerases~and~ribosomes}.] The production of mRNAs and proteins respectively requires RNA-polymerases and ribosomes. The concentrations of non-allocated (or free) RNA-polymerases and ribosomes respectively determine the rates of transcriptions and translations. During the time of elongation, the RNA-polymerase (resp. ribosome) remains sequestered on the DNA (resp. the mRNA). As the cell grows, new RNA-polymerases and ribosomes are created and participate in the production of proteins.
\end{description}

\subsection*{\label{method:analysis} Analysis of Each Intermediate Models}

Each of the intermediate models is systematically analyzed with the same method, see Figure~\ref{fig:procedure}. The average behavior of the model is analytically predicted (either with exact formulas for the first two intermediate models, or approximately for the last complete model); it makes it possible, for each of the 841 genes of Taniguchi~et~al.  (for which both the average protein and mRNA production have been measured), to determine the set of parameters of the model. An overview of the parameters hence determined is shown in Figure~\ref{fig:Quantitative-summary-Model0-1}.  Simulations are then performed \textemdash{} using methods derived from Gillespie algorithm \textemdash{} with these parameters over 10000 cell cycles: it makes it possible  to check the accuracy of the average concentration and to predict the variance of the concentration of proteins. For each model, we then have the variance of protein concentration predicted for the wide range of genes measured in \cite{taniguchi_quantifying_2010}.

\subsection*{\label{methods:mean variance} Means and Variances}
Throughout this paper we use the notation $\E{X(t)}$ and $\V{X(t)}$ for the mean and variance of a random variable $X(t)$ at a time $t$ of the cell cycle. We introduced the concentrations for mRNAs and proteins. If $M(t)$ and $P(t)$ denote the random variables representing the number of mRNAs and proteins of a given type at time $t$ and $V(t)$ is the volume at this instant, the corresponding concentrations are $M(t)/V(t)$ and $P(t)/V(t)$. One of the goals of this work is to study the properties of the  mean, $\E{P(t)/V(t)}$, and variance, $\V{P(t)/V(t)}$,  of these concentrations.  These quantities correspond to the mean and the variance of a population of synchronized cells of volume $V(t)$. The measurements of \cite{taniguchi_quantifying_2010} consider a cell population in exponential growth; by consequence, we also have to define the notions of global mean and variance for a heterogeneous population (see \cite{collins_rate_1962,sharpe_bacillus_1998,robert_division_2014} for the population distribution in exponential growth). By denoting by $u$ the age distribution of the cell population, we can define the global average $\Ehat{P/V}$ and variance $\Vhat{P/V}$ averaged over the population,
\begin{align}
\Ehat{P/V} & =\int_{0}^{\tau_{D}}\E{P(t)/V(t)}u(t)\,\diff t\label{eq:Ehat_P}\\
\Vhat{P/V} & =\int_{0}^{\tau_{D}}\left[\E{\left(P(t)/V(t)\right)^{2}}-\Ehat{P/V}^{2}\right] u(t)\,\diff t.\label{eq:Vhat_P}
\end{align}
We observe that the choice of the age distribution  $u$
(either uniform or corresponding to an exponentially growing population)
does not seem to have much impact; see Section~2.2.3 
 of S1~Appendix for more details. We therefore simply considered $u$ as uniform in the Results section.

\subsection*{\label{methods:Intrinsic extrinsic} Intrinsic and Extrinsic Effects}
As for the previous studies of \cite{elowitz_stochastic_2002,swain_intrinsic_2002}, we want to decompose protein variance that can be attributed to the intrinsic expression from the one due to extrinsic mechanisms.  It appears that this decomposition can be computed using two different ways:

\begin{itemize}
	\item First, there is the method used in Taniguchi~et~al. As the intrinsic noise is usually attributed to the protein production mechanism alone, a model that represents only transcription and translation,
	as the classic two-stage model, are usually considered as predicting the intrinsic noise.

	Yet, in our case, the consideration of a more realistic mechanism for protein disappearance (through segregation at division rather than regular decay) prevent us from using directly the two-stage models as a quantitative representation of the intrinsic variance. Our very first intermediate model, where the number of proteins and mRNAs are exactly halved at division, is considered as our baseline model. This baseline model is very close to the two-stage model in that sense that it contains no other features than those intrinsically linked to protein production. Therefore, we use protein variance predicted by this baseline model as intrinsic protein variance.

	From this baseline model, any additional variance added by the introduction to the model of external mechanisms (random partition at division, gene replication, etc.) would be considered as extrinsic. For every type of protein, we will look how the global variance $\Vhat{P/V}$ changes with the subsequent introduction of the external mechanisms.

 \item Secondly, it appears that the previous method of extrinsic noise estimation is not exactly the same from the first attempt made by \cite{elowitz_stochastic_2002} using the dual reporter technique. \cite{hilfinger_separating_2011} showed that decomposition using the dual reporter technique can be interpreted as an estimator of the environmental state decomposition (also called the law of total variance). It decomposes  protein variance between the effects specifically due to the stochastic nature of the instants of birth and death of mRNAs and proteins (intrinsic noise) and the external effect of the biological environment (extrinsic noise).  If $Z$ represents the state of the cell, the number of RNA-polymerases, the volume, etc\ldots, then the protein concentration $P/V$ can be decomposed such as,

 \begin{align}
 \Vhat{P/V}  =\underbrace{\Vhatt{P/V}{int}}_{\text{unexplained by }Z}
 +\underbrace{\Vhatt{P/V}{ext}}_{\text{explained by }Z},\label{eq:ESD}
 \end{align}
where:
\begin{align*}
	\Vhatt{P/V}{int} & =\int_{0}^{\tau_{D}}\E{\V{P(t)/V(t)|Z(t)}}_{Z(t)}u(t)\,dt\\
	\Vhatt{P/V}{ext} & =\int_{0}^{\tau_{D}}\left[\E{\E{P(t)/V(t)|Z(t)}^{2}}_{Z(t)}-\Ehat{P/V}^{2}\right]u(t)\,dt
\end{align*}
where $\Ehat{\cdot}_{Z(t)}$ indicates the integration over the variable $Z$ in a population of cells of age $t$. (Note that in the case of the two first intermediate models, the environment $Z$ is simply the deterministic volume and simplified expressions for $\Vhatt{P/V}{int}$ and $\Vhatt{P/V}{ext}$ in these cases can be found in Equations~\eqref{eq:var_1} and~\eqref{eq:var_2}.)

 The variable $Z$ represents the common environment in which the two similar genes of the dual reporter technique evolve; yet mathematically, it is dependent on the model that we consider (for each model, it is what is considered as being part of the ``environment" of the gene). We have therefore explicitly described for each intermediate models what it represents in this context (the second intermediate model with gene replication shows an illustrative example of this decomposition). Once the variable $Z$ fixed, the decomposition is explicit and separates the total variance in two parts: $\Vhatt{P/V}{int}$ that corresponds to the intrinsic contributions and  $\Vhatt{P/V}{ext}$ corresponds to the variance induced by external contributions represented by the environment $Z$ (volume fluctuation, concentrations of free RNA-polymerases and ribosomes, etc.) The term $\Vhatt{P/V}{int}$ is indeed the variance that can be expected from a model without any external fluctuation. A model like the two-stage model does not consider any change in the environment of protein production, so the term $\Vhatt{P/V}{ext}$  of the decomposition would be null.

\end{itemize}

 In the intermediate models, we quantify the external contributions by these two means, either by looking at the increase of the global variance $\Vhat{P/V}$, or by performing the environmental state decomposition and looking the portion of external variance $\Vhatt{P/V}{int}/\Vhat{P/V}$ predicted. We have seen that these two ways to quantify the external contributions of protein variance are not strictly equivalent.


\section*{Acknowledgments}
We would like to thank Stephanie Lewkiewicz, Marc Dinh and  Wolfram Liebermeister for  their critical reading of the manuscript.



\setcounter{figure}{0}
\renewcommand{\figurename}{\hspace{-3pt}}
\renewcommand{\thefigure}{S\arabic{figure} Fig}
\setcounter{figure}{0}

\begin{figure}[H]
	\insertfig{S1_Fig}
	{
	\phantomsubcaption\label{fig:S1A}
	\phantomsubcaption\label{fig:S1B}
	\phantomsubcaption\label{fig:S1C}
	\phantomsubcaption\label{fig:S1D}
	}
\caption{\label{fig:S1} \textbf{Intermediate model with volume growth and partition at division: correspondence of simulations with experimental concentration, Fano Factor and average concentration, variance ratio and average concentration, example of dual reporter technique.} (A) Comparison
of the average productions of proteins (main) and mRNAs (inset) obtained
in the simulations and those experimentally measured. (B) The Fano
factor as a function of the average protein production: the production
with the lowest Fano Factor tends to be the less expressed. (C) For each
type of protein, the variance in the case of exact partition divided by
the variance in the case of random partition as a function of the
protein average production. (D) Example of the dual reporter technique with two promoters corresponding to the protein Adk.
}
\end{figure}

\begin{figure}[H]
	\insertfig{S2_Fig}
	{
	\phantomsubcaption\label{fig:S2A}
	\phantomsubcaption\label{fig:S2B}
	\phantomsubcaption\label{fig:S2C}
	\phantomsubcaption\label{fig:S2D}
	}

	\caption{\label{fig:S2}  \textbf{Intermediate model with cell cycle and gene replication: parameters, correspondence of simulations with experimental concentration, influence of protein parameters on the the variance of its concentration, profile of a protein with an extreme low variance.} (A): Quantitative summary of the parameters for this model. (B): In the case of simulations, comparison of the average productions of protein (main) and mRNAs (inset) obtained in the simulations and those experimentally measured. Note that, in this case, in addition to the simulations, we can directly use theoretical formulas to directly predict the \emph{variance} of each protein (see Section 2 of S1~Appendix) (C): Evolution of $\protect\Vhat{P/V}$ while varying successively the gene position in the DNA, the mRNA number and the mRNA lifetime while keeping $\protect\Ehat{P/V}$ constant. (D): Main: Profile of a modified version of AdK with higher transcription rate (approximately ten times more) and a lesser mRNA lifetime (ten times less). The variance is reduced, but it is not enough to clearly separate between the distributions at birth (at time $t=0$) and at the replication of the gene (at time $t=\tau_{R}$) (Inset).}
\end{figure}

\begin{figure}[H]
	\insertfig{S3_Fig}
	{
	\phantomsubcaption\label{fig:S3A}
	\phantomsubcaption\label{fig:S3B}
	\phantomsubcaption\label{fig:S3C}
	\phantomsubcaption\label{fig:S3D}
	}
	\caption{\label{fig:S3} \textbf{Complete model: parameters, correspondence of simulations with experimental concentration, evolution of free RNA-polymerases and free ribosomes.} (A): Quantitative summary of the parameters. (Different choice of $\overline{f_{Y}}$ and $\overline{f_{R}}$ when computing the parameters induce little changes for the rate of transcription per gene  $\lambda_{1,i}\overline{f_{Y}}$ and the rate of translation per mRNA  $\lambda_{2,i}\overline{f_{R}}$). (B): Ratio between the average concentration for protein (main figure) and mRNA (inset) in simulation and in experiments.  (C) and (D): the respective means of free RNA-polymerases and ribosomes at each moment of the cell cycle in the simulations (solid lines) and the ones predicted by the system of ODEs (dashed lines). Inset: an example of the dynamics of free RNA-polymerases and ribosomes for one simulation.}
\end{figure}

\begin{figure}[H]
    \insertfig[.8]{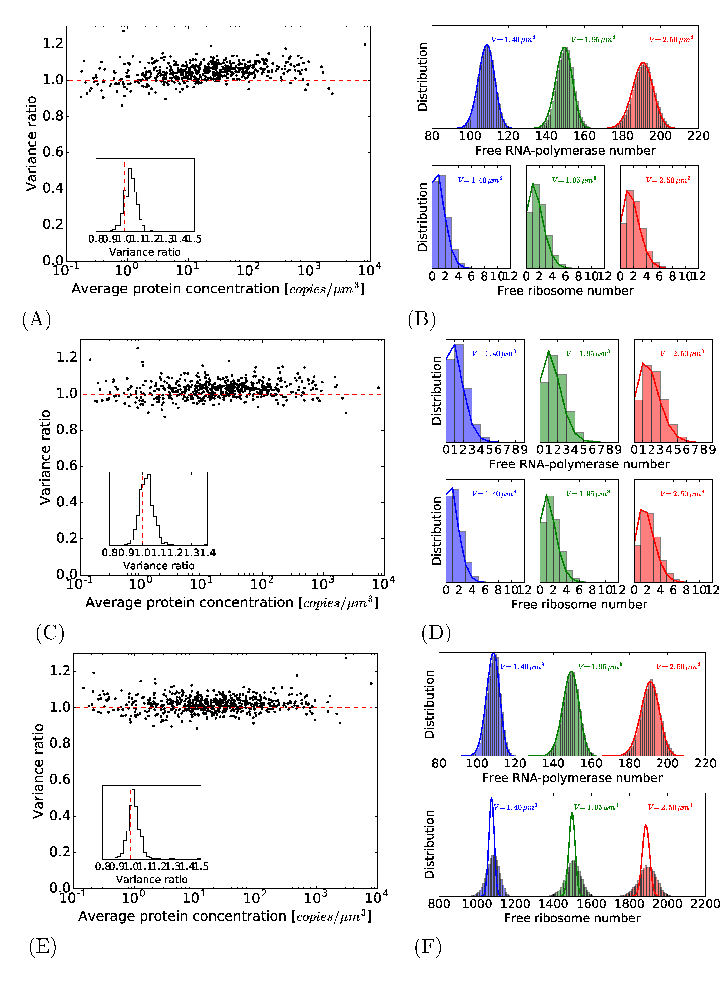}
	{
	\phantomsubcaption\label{fig:S4A}
	\phantomsubcaption\label{fig:S4B}
	\phantomsubcaption\label{fig:S4C}
	\phantomsubcaption\label{fig:S4D}
	\phantomsubcaption\label{fig:S4E}
	\phantomsubcaption\label{fig:S4F}
	}
\caption{\label{fig:S4}\textbf{Three simulations of the complete model:  different levels of free ribosomes and free RNA-polymerases} (1st line: few free ribosomes, many free RNA-polymerases; 2nd line: few free ribosomes, few free RNA-polymerases; 3rd line: many free ribosomes, many free RNA-polymerases). (A), (C) and (E): Ratio between protein variance of the multi-protein and the gene-centered models. Inset: the histogram of these variance ratios. (B), (D) and (F): Free RNA-polymerase (above) and free ribosome (below) number distribution for cells each of the volumes. In thick lines the binomial distribution predicted for the simplified model (see Section~3.8 
 of S1~Appendix).}
\end{figure}

\newpage
\renewcommand{\thefigure}{(Appendix) \arabic{figure}}
\renewcommand{\theequation}{S\arabic{equation}}
\renewcommand{\thesection}{S\arabic{section}}
\renewcommand\thethm{S\arabic{thm}}

\setcounter{figure}{0}
\setcounter{equation}{0}
\setcounter{section}{0}
\setcounter{thm}{0}
\section*{ \label{app} Appendix}

\section{Impact of Random Partitioning}

Each of the three different models are analyzed following the scheme presented in Figure~7A 
 of the main article. We begin with a theoretical analysis to derive the average production of mRNAs and proteins along the cell cycle. Using these results, we then fit the parameters of the model to make them correspond to the experimental data of \cite{taniguchi_quantifying_2010}.  Then simulations are performed to predict protein variance for each gene considered.

We present now  results for the model with volume growth, constant gene concentration and partitioning at division which has been presented in Figure~2 
 of the main article. The results of the theoretical analysis part are similar for both cases of exact or random partitioning at division.

\subsection{Theoretical Analysis}

\subsubsection{Messenger-RNA Dynamic}\label{subsec:S_mod1_mrna_dynamic}

Here is shown the Proposition~\ref{prop:EM_s} that describes the average number of mRNAs at any instants of the cell cycle.

For any time $s\in\R_{+}$, denote by $M_{s}$ the number of mRNAs at this instant. We suppose that the initial time $s=0$ is a time of division; in this case, at each time $i\cdot\tau_{D}$ with $i\in\N$ are moments of division. For any $i\in\N$ , $M_{i\tau_{D}}$ denotes the number of mRNAs at the beginning of $i$-th cell cycle and $M_{i\tau_{D}-}$ the number of mRNAs in the $(i-1)$-th cell cycle just \emph{before} division.

We suppose that a lot of cell divisions have already occurred even before time $t=0$, and hence the considered cell cycle takes place when the embedded Markov chain $\left(M_{i\tau_{D}}\right)_{i}$ has already reached its steady state: the distribution $M_{i\tau_{D}}$ is the same as the distribution of $M_{(i+1)\tau_{D}}$.  If the steady state is already reached at time $0$, it implies that the distribution of any $M_{i\tau_{D}+t}$ for any $i\in\N$ and $s\in[0,\tau_{D}[$ is equal to the distribution of $M_{t}$. As a consequence, we can only consider the first cell cycle $t\in[0,\tau_{D}[$ to fully characterize the behavior of $M_{s}$ at any time $s\in\R_{+}$.

We propose here to describe the evolution of $\left(M_{t}\right)$ between times $0$ and $\tau_{D}$ (during this period of time, the number of mRNA approximately doubles). We first divide mRNAs into two categories,
\begin{itemize}
\item First group: mRNAs which were present at the birth of the cell. Each mRNA $i$ of the first group is characterized by $E_{\sigma_{1}}^{i}$, its lifetime given by an exponential random variable of rate $\sigma_{1}$.  The $i$-th mRNA still exists at time $t$ if and only if $E_{\sigma_{1}}^{i}>t$.  As a consequence, the number of mRNAs of this group still existing at time $t$ is given by
\begin{equation}
\sum_{i=1}^{M_{0}}\ind{E_{\sigma_{1}}^{i}>t}.\label{eq:mrna-first-group}
\end{equation}
\item Second group: mRNAs which have been created since the birth of the cell. The description of the number of mRNA of this group is more complicated. It is necessary to resort to the framework of Marked Poisson Point Processes (MPPP). An MPPP is a two-dimensional process.  It is based on a Poisson process where each of its random point is ``marked'' with another random variable; each point of a MPPP is a couple $(x,y)$ where $x$ is part of a Poisson point process and $y$ is the mark distributed according to a certain distribution.  One can refer to the first Chapter of \cite{robert_stochastic_2010} or \cite{kingman_poisson_1993} for the main results concerning MPPP.

We use this tool to characterize the number of mRNAs of the second group. In our case, the first variable $x$ represents the time at which the mRNA is created and the second variable $y$ represents the mRNA lifetime. Define $\mathcal{N}$ an MPPP of intensity
\[
\nu(\diff x,\diff y)=\lambda_{1}V(x)\diff x\otimes\sigma_{1}e^{-\sigma_{1}y}\diff y.
\]

It is noticeable that the underlying Poisson Process of this MPPP is not homogeneous. If the $i$-th mRNA of this group is born at time $x_{i}$ and its lifetime is $y_{i}$, then it exists at time $t$ if and only if $(x_{i},y_{i})\in\Delta_{t}$ with
\[
\Delta_{t}=\left\{ (x,y)\in\R_{+}^{2},\,0<x<t,\,y>t-x\right\} .
\]
One can refer to \ref{fig:MPPP-1}. Therefore the number of mRNAs of this group still present at time $t$ is given by
\begin{equation}
\mathcal{N}\left(\Delta_{t}\right)=\iint_{\R_{+}^{2}}\ind{(x,y)\in\Delta_{t}}\mathcal{N}\left(\diff x,\diff y\right).\label{eq:mrna-second-group}
\end{equation}

\end{itemize}
\begin{figure}
\begin{centering}
\begin{tikzpicture}[
        scale=1.6,
        IS/.style={blue, thick},
        LM/.style={red, thick},
        axis/.style={very thick, ->, >=stealth', line join=miter},
        important line/.style={thick}, dashed line/.style={dashed, thin},
        every node/.style={color=black},
        dot/.style={circle,fill=black,minimum size=4pt,inner sep=0pt,
            outer sep=-1pt},
    ]
	\fill[black!10]  (1.8,0) --  (1.8,2) -- (0,2) --  (0,1.8)  -- cycle;
	\draw[line width=0.5mm,green] (2.8,2)--(2.8,0)
			node[pos=0.21,below,green,rotate=90]{Division};
	\draw[line width=0.5mm,green] (0.025,2)--(0.025,0)
			node[pos=0.15,below,green,rotate=90]{Birth};

	\node[black] at (1.2,1.3){$\Delta_t$};
    \draw[axis,<->] (3,0) node(xline)[right,align=left] {Time} -|
                    (0,2) node(yline)[above] {Lifetime};
	\coordinate (starpt) at (1.25,1);
	\foreach \x/\y/\z in {0.5/0.25/0, 1.4/0.6/0, 2/1.3/0.5, 0.2/0.4/0,
						2.3/0.7/0.2, 2.55/0.2/0}
		{
    	\node[dot,color=blue] at (\x,\y) {};
		\draw[dashed line,blue] (\x,\y)--($(\x+\y-\z,\z)$);
		}

	\node[dot,color=blue] at (starpt) {};
	\draw[dashed line,red] let \p1=(starpt) in
		(starpt)--($(\x1,0)$) node[below,red]{$t_i$}
		(starpt)--($(0,\y1)$)
		;
    \draw[dashed line,red] let \p1=(starpt) in
		(starpt)--($(\x1+\y1,0)$) node[below,red]{$t_i+\delta_i$};

	\draw[<->,red,line width=0.25mm] let \p1=(starpt) in
		(-0.1,0)--(-0.1,\y1) node[above,midway,rotate=90,red]{
						$\delta_i\sim{\cal E}(\sigma_{1})$};

	\draw[<->,black,line width=0.25mm] (0,-0.3)--(2.8,-0.3)
		node[below,midway]{Point Poisson Process ($\lambda_1$)};
	\draw[line width=0.5mm] (1.8,2)--(1.8,0) node[below]{$t$};

\end{tikzpicture}
\par\end{centering}
\caption{\label{fig:MPPP-1}\textbf{Illustration of the Marked Point Poisson Process describing the dynamic of mRNAs}: each mRNA is characterized by the point $(x_{i},y_{i})$ (with $(x_{i},y_{i})$ following the MPPP $\mathcal{N}$, whose distribution is of intensity $\nu$). The random variable $x_{i}$ represents the time at mRNA creation and $y_{i}$ its lifetime, hence this mRNA exists from volume $x_{i}$ up to volume $x_{i}+y_{i}$; that is to say that mRNA is still present at time $t$, if and only if the point $(x_{i},y_{i})$ is in the set with $\Delta_{t}=\left\{ (x,y)\in\protect\R_{+}^{2},\,0<x<t,\,y>t-x\right\} $. }
\end{figure}
By summing the number of mRNAs for each group (Equations~\eqref{eq:mrna-first-group}
and~\eqref{eq:mrna-second-group}), it follows the total number of
mRNAs present at time $t\in[0,\tau_{D}[$:
\begin{equation}
M_{t}=\sum_{i=1}^{M_{0}}\ind{E_{\sigma_{1}}^{i}>t}+\mathcal{N}\left(\Delta_{t}\right).\label{eq:M_s-1}
\end{equation}
This description of the dynamic of $M_{t}$, together with the steady state
hypothesis which implies that $M_{0}\overset{\mathcal{D}}{=}M_{\tau_{D}}$,
allows to prove the next proposition.
\begin{prop}
\label{prop:EM_s}At steady state, the concentration of mRNAs at time
$t\in[0,\tau_{D}[$ of the cell cycle is
\[
\E{M_{t}/V(t)}=\frac{\lambda_{1}\tau_{D}}{\sigma_{1}\tau_{D}+\log2}.
\]
\end{prop}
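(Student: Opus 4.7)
The plan is to take expectations in the decomposition $M_s = \sum_{i=1}^{M_0}\ind{E_{\sigma_1}^i>s} + \mathcal{N}(\Delta_s)$ established just before the proposition, and then close the recursion using the equilibrium hypothesis. For the first term, the lifetimes $E_{\sigma_1}^i$ are independent of $M_0$, so by Wald's identity and $\P{E_{\sigma_1}^i>s}=e^{-\sigma_1 s}$, its mean is $\E{M_0}e^{-\sigma_1 s}$. For the second term, Campbell's formula for MPPPs gives
\[
\E{\mathcal{N}(\Delta_s)} = \int_{0}^{s}\!\!\int_{s-x}^{\infty}\lambda_1 V(x)\,\sigma_1 e^{-\sigma_1 y}\diff y\diff x = \lambda_1\int_{0}^{s} V(x)\, e^{-\sigma_1(s-x)}\diff x.
\]
Substituting the exponential growth law $V(x)=V_0 e^{x\log 2/\tau_D}$ turns this into a single exponential integral, which evaluates to
\[
\E{\mathcal{N}(\Delta_s)} = \frac{\lambda_1\tau_D}{\log 2+\sigma_1\tau_D}\bigl(V(s)-V_0 e^{-\sigma_1 s}\bigr).
\]

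The second step is to determine $\E{M_0}$ from the equilibrium condition. Whether the partitioning is exact or random (binomial with parameter $1/2$), taking expectations at division yields $\E{M_0} = \tfrac12\E{M_{\tau_D-}}$, and stationarity of the embedded chain gives $\E{M_{\tau_D}}=\E{M_0}$. Evaluating the expression above for $\E{M_s}=\E{M_0}e^{-\sigma_1 s}+\E{\mathcal{N}(\Delta_s)}$ at $s=\tau_D$ with $V(\tau_D)=2V_0$ yields a linear equation for $\E{M_0}$. The factor $(2-e^{-\sigma_1\tau_D})$ appears on both sides and cancels, leaving
\[
\E{M_0} = \frac{\lambda_1\tau_D\,V_0}{\log 2+\sigma_1\tau_D}.
\]

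Plugging this value back into $\E{M_s}=\E{M_0}e^{-\sigma_1 s}+\E{\mathcal{N}(\Delta_s)}$, the two $e^{-\sigma_1 s}$ contributions cancel exactly (this cancellation is precisely what the exponential form of $V$ is designed to produce), leaving $\E{M_s}=\lambda_1\tau_D\,V(s)/(\log 2+\sigma_1\tau_D)$, and dividing by $V(s)$ yields the stated formula. I do not anticipate a real obstacle: the only subtle point is to notice that the identity $\E{M_0}=\tfrac12\E{M_{\tau_D-}}$ holds identically for both partitioning schemes by linearity of expectation, which is why the mean concentration turns out to be the same in both settings (and independent of $s$, a consequence of the gene being in constant concentration).
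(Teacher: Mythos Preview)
Your proof is correct and follows essentially the same route as the paper's: both take expectations in the decomposition $M_s=\sum_{i=1}^{M_0}\ind{E_{\sigma_1}^i>s}+\mathcal{N}(\Delta_s)$, compute the first term as $\E{M_0}e^{-\sigma_1 s}$ and the second as the Poisson mean $\nu(\Delta_s)=\frac{\lambda_1\tau_D}{\log 2+\sigma_1\tau_D}\bigl(V(s)-V_0 e^{-\sigma_1 s}\bigr)$, close the recursion via $\E{M_0}=\tfrac12\E{M_{\tau_D-}}$ to obtain $\E{M_0}=\lambda_1\tau_D V_0/(\log 2+\sigma_1\tau_D)$, and observe the cancellation of the $e^{-\sigma_1 s}$ terms. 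Your additional remark that the halving identity holds for both exact and random partitioning by linearity of expectation is a worthwhile observation that the paper leaves implicit.
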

\begin{proof}
 By taking the mean of Equation~\eqref{eq:M_s-1}, it follows for
any time $t$ of the cell cycle:
\[
\E{M_{t}}=\E{\sum_{i=1}^{M_{0}}\ind{E_{\sigma_{1}}^{i}>t}}+\E{\mathcal{N}\left(\Delta_{t}\right)}.
\]

Since all $\left(E_{\sigma_{1}}^{i}\right)_{i}$ are i.i.d. and independent
of $M_{0}$, the first term is given by
\[
\E{\sum_{i=1}^{M_{0}}\ind{E_{\sigma_{1}}^{i}>t}}  =\E{M_{0}}e^{-t\sigma_{1}}.
\]
For the second term, one has to remark that as $\mathcal{N}$ is a
MPPP, $\mathcal{N}\left(\Delta_{\tau_{D}-}\right)$ is a Poisson random
variable (Proposition~1.13.a of \cite{robert_stochastic_2010}).
The parameter of this Poisson random variable is given by
\[
\nu\left(\Delta_{t}\right)  =  \iint_{\Delta_{t}}\nu\left(\diff x,\diff y\right)
  =  V_{0}\frac{\lambda_{1}\sigma_{1}}{\log2+\sigma_{1}\tau_{D}}\left(2^{t/\tau_{D}}-e^{-\sigma_{1}t}\right).
\]
As a consequence, one gets that, for any time $t$ in the cell cycle,
\[
\E{M_{t}}=\E{M_{0}}e^{-t\sigma_{1}}+V_{0}\frac{\lambda_{1}\tau_{D}}{\tau_{D}\sigma_{1}+\log2}\cdot\left(2^{t/\tau_{D}}-e^{-t\sigma_{1}}\right).
\]

We still have to specify the mean number of mRNAs at birth $\E{M_{0}}$.
At the end of the cell cycle, for $t=\tau_{D}-$, the average number
of mRNAs is given by
\[
\E{M_{\tau_{D}-}}=\E{M_{0}}e^{-\tau_{D}\sigma_{1}}+V_{0}\frac{\lambda_{1}\tau_{D}}{\tau_{D}\sigma_{1}+\log2}\cdot\left(2-e^{-\tau_{D}\sigma_{1}}\right),
\]
and since at steady state,
\[
\E{M_{\tau_{D}}}=\E{M_{0}}=\E{M_{\tau_{D}-}}/2.
\]
Hence
\[
\E{M_{0}}\left(2-e^{-\tau_{D}\sigma_{1}}\right)=V_{0}\frac{\lambda_{1}\tau_{D}}{\tau_{D}\sigma_{1}+\log2}\cdot\left(2-e^{-\tau_{D}\sigma_{1}}\right),
\]
which gives the result.
\end{proof}

In particular, as the mean $\E{M_{t}/V(t)}$ does not change across the cell cycle, the global mRNA average
$\Ehat{M/V}$ (as it is defined in Equation~(6) of the main article), does not depend on the choice of the population distributions $u$ in this case. It is given by,
\begin{equation}
\Ehat{M/V}=\frac{\lambda_{1}\tau_{D}}{\sigma_{1}\tau_{D}+\log2}.\label{eq:EhatM/V}
\end{equation}

\subsubsection{Protein Dynamic}\label{subsec:S_mod1_prot_dynamic}
The mean number of mRNAs is now determined for any moment of the cell
cycle. Each of the mRNAs potentially produces proteins at rate $\lambda_{2}$.
As for the mRNAs, we describe the number of proteins at time $t$
by grouping them into two categories.
\begin{itemize}
\item The $P_{0}$ proteins that were present at birth and which remain
in the bacteria during all the cell cycle (as said in the main article
the proteolysis is not considered in this model).
\item The proteins that have been created during the current cell cycle.
The rate of production is depending on the current number of mRNAs.
We consider $\mathcal{N}_{\lambda_{2}}^{i}$ (for $i\in\N$ and $i\geq1$)
independent Poisson Point Processes of intensity $\lambda_{2}$. If
the $i$-th mRNA exists at time $t$ (that is to say if $i\leq M_{t}$),
then the number of proteins produced by this mRNA between $t$ and
$t+\diff t$ is $\mathcal{N}_{\lambda_{2}}^{i}(\diff t)$.
\end{itemize}
To sum up, the number of proteins at a time $t$ of the cell cycle is given by
\begin{equation}
P_{t}=P_{0}+\sum_{i=1}^{\infty}\int_{0}^{t}\ind{i\leq M_{u}}\mathcal{N}_{\lambda_{2}}^{i}\left(\diff u\right)\mbox{.}\label{eq:P_s-1}
\end{equation}
The first term is the number of proteins at birth, and the second
take into account all the proteins created between times $0$ and
$t$. One can then determine the mean number of proteins at any time
$t$ of the cell cycle.
\begin{prop}
\label{prop:EP_s}At steady state, the concentration of proteins at
any time $t\in[0,\tau_{D}[$ of the cell cycle is
\[
\E{P_{t}/V(t)}=\frac{\lambda_{2}\tau_{D}}{\log2}\cdot\frac{\lambda_{1}\tau_{D}}{\sigma_{1}\tau_{D}+\log2}.
\]
\end{prop}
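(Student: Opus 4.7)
The plan is to take expectations in the representation \eqref{eq:P_s-1}, reduce the protein dynamics to an integral against the already-computed mean messenger concentration, and then close the system by imposing stationarity of the embedded chain at division, exactly as in the proof of Proposition~\ref{prop:EM_s}.

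First I would take the mean of Equation~\eqref{eq:P_s-1}. Since the Poisson point processes $\mathcal{N}_{\lambda_2}^i$ are independent of $(M_u)$ and each has intensity $\lambda_2$, Fubini together with a standard Campbell-type computation should give
\[
\E{P_s}=\E{P_0}+\lambda_2\int_0^s \E{M_u}\,\diff u.
\]
Using Proposition~\ref{prop:EM_s}, the integrand is equal to $V(u)\cdot \lambda_1\tau_D/(\sigma_1\tau_D+\log 2)$, and the exponential volume profile $V(u)=V_0\, 2^{u/\tau_D}$ integrates explicitly: $\int_0^s V(u)\,\diff u = (\tau_D/\log 2)(V(s)-V_0)$. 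This yields
\[
\E{P_s}=\E{P_0}+\frac{\lambda_2\tau_D}{\log 2}\cdot\frac{\lambda_1\tau_D}{\sigma_1\tau_D+\log 2}\bigl(V(s)-V_0\bigr).
\]

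Next I would close the equation by using the equilibrium hypothesis at division. Since random or exact partitioning sends each protein to the tracked daughter cell with probability $1/2$, taking expectations gives $\E{P_0}=\E{P_{\tau_D-}}/2$. Setting $s=\tau_D$ in the previous display and using $V(\tau_D)=2V_0$, this relation determines
\[
\E{P_0}=\frac{\lambda_2\tau_D}{\log 2}\cdot\frac{\lambda_1\tau_D}{\sigma_1\tau_D+\log 2}\, V_0,
\]
after which substituting back gives $\E{P_s}=\frac{\lambda_2\tau_D}{\log 2}\cdot\frac{\lambda_1\tau_D}{\sigma_1\tau_D+\log 2}\,V(s)$, and dividing by $V(s)$ produces the claimed formula.

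The only delicate point, and the main obstacle I foresee, is the justification of exchanging expectation, the infinite sum over $i$, and integration in the computation of $\E{\sum_i\int_0^s \ind{i\le M_u}\mathcal{N}_{\lambda_2}^i(\diff u)}=\lambda_2\int_0^s \E{M_u}\,\diff u$. One needs to check integrability (finite $\E{M_u}$, which follows from Proposition~\ref{prop:EM_s}) and the independence between the translation processes $\mathcal{N}_{\lambda_2}^i$ and the filtration generated by $(M_u)_{u\le s}$. With those standard verifications in place, the rest is algebra and the equilibrium closure, and the argument exactly parallels the one used for the mRNAs.
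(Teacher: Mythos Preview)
Your proposal is correct and follows essentially the same route as the paper: take expectations in \eqref{eq:P_s-1} to obtain $\E{P_s}=\E{P_0}+\lambda_2\int_0^s\E{M_u}\,\diff u$, plug in Proposition~\ref{prop:EM_s}, integrate the exponential volume, and close via the equilibrium relation $\E{P_{\tau_D-}}=2\E{P_0}$. The paper does not linger on the Fubini/monotone-convergence justification you flag, but otherwise the arguments coincide step for step.
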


\begin{proof}
By taking the average of Equation~\eqref{eq:P_s-1}, one gets
\begin{multline*}
\E{P_{t}}  =\E{P_{0}}+\sum_{i=1}^{\infty}\E{\int_{0}^{t}\ind{i\leq M_{u}}\mathcal{N}_{\lambda_{2}}^{i}\left(\diff u\right)}=\E{P_{0}}+\sum_{i=1}^{\infty}\E{\int_{0}^{t}\ind{i\leq M_{u}}\lambda_{2}\,\diff u}\\
  =\E{P_{0}}+\lambda_{2}\int_{0}^{t}\E{\sum_{i=1}^{\infty}\ind{i\leq M_{u}}}\,\diff u=\E{P_{0}}+\lambda_{2}\int_{0}^{t}\E{M_{u}}\,\diff u.
\end{multline*}
As we know the mean number of mRNAs $\E{M_{u}}$ at  time $u$
of the cell cycle with Proposition~\ref{prop:EM_s},
\[
\E{P_{t}}=\E{P_{0}}+\frac{\lambda_{2}\tau_{D}}{\log2}\cdot\frac{\lambda_{1}\tau_{D}}{\sigma_{1}\tau_{D}+\log2}\cdot\left(V(t)-V_{0}\right).
\]
Since the system is at steady state, we have for time $\tau_{D}-$,
$\E{P_{\tau_{D}-}}=2\E{P_{0}}$; so
\[
\E{P_{0}}  =\frac{\lambda_{2}\tau_{D}}{\log2}\cdot\frac{\lambda_{1}\tau_{D}}{\sigma_{1}\tau_{D}+\log2}\cdot\left(V(\tau_{D}-)-V_{0}\right)
  =\frac{\lambda_{2}\tau_{D}}{\log2}\cdot\frac{\lambda_{1}\tau_{D}}{\sigma_{1}\tau_{D}+\log2}\cdot V_{0}.
\]
Consequently, for any time $t$ of the cell cycle,
\[
\E{P_{t}}=\lambda_{2}\frac{\tau_{D}}{\log2}\cdot\frac{\lambda_{1}\tau_{D}}{\sigma_{1}\tau_{D}+\log2}\cdot V_{0}\left(1+2^{t/\tau_{D}}-1\right);
\]
hence the result.
\end{proof}

In particular, as the mean $\E{P_{t}/V(t)}$ does not change across
the cell cycle, the global protein average is given by,
\begin{equation}
\Ehat{P/V}=\frac{\lambda_{2}\tau_{D}}{\log2}\cdot\frac{\lambda_{1}\tau_{D}}{\sigma_{1}\tau_{D}+\log2}.\label{eq:EhatP/V}
\end{equation}
\subsection{Parameters Estimation}\label{subsec:S_mod1_param}
For each gene measured in \cite{taniguchi_quantifying_2010}, we want to identify the set of corresponding parameters $\lambda_{1}$, $\sigma_{1}$ and $\lambda_{2}$. We also need to determine the ``global'' quantities $\tau_{D}$ and $V_{0}$. We first determine the parameters common to all genes. The division time $\tau_{D}$ is set to  $150$min in the article and the volume at birth $V_{0}$ is taken equal to $1.3\,\text{\textmu m}^{3}$.\footnote{The value of $V_{0}$, even if it is not explicitly given in \cite{taniguchi_quantifying_2010} can be deduced from the typical width given in its supplementary materials.}

Then we have to determine for each gene the three gene-specific parameters $\lambda_{1}$, $\sigma_{1}$ and $\lambda_{2}$. We consider the genes of the article for which was measured the empirical mean of messengers $\mu_{m}$ and proteins $\mu_{p}$ concentrations, as well as the mRNA half-life time $\tau_{m}$.

First we determine the rate of mRNA degradation for each gene with the measured mRNA half-life time $\tau_{m}$.  A half-life $\tau_{m}$ indicates that a mRNA has a probability $1/2$ to disappear within a duration $\tau_{m}$, hence $e^{-\sigma_{1}\tau_{m}}=1/2$. From that, we can compute the rate $\sigma_{2}$ (specific for each type of mRNA),
\[
\sigma_{1}=\log2/\tau_{m}\mbox{.}
\]

Then we can identify the averages of mRNA and protein concentrations of the model (respectively $\Ehat{M/V}$ and $\Ehat{P/V}$) with the empirical averages $\mu_{m}$ of mRNA concentration and $\mu_{p}$ of protein concentration of the article. With Equations~\eqref{eq:EhatM/V} and~\eqref{eq:EhatP/V}, the parameters $\lambda_{1}$ and $\lambda_{2}$ are
\[
\lambda_{1}  =\mu_{m}\cdot\frac{\sigma_{1}\tau_{D}+\log2}{\tau_{D}},\qquad
\lambda_{2}  =\mu_{p}\cdot\frac{\log2}{\tau_{D}}\cdot\frac{\sigma_{1}\tau_{D}+\log2}{\lambda_{1}\tau_{D}}.
\]

A summary of the different parameters can be seen in Figure~7C of the main article. 
Having determined all the parameters allows to perform simulations of the model using stochastic algorithm in order to assess the variability of every protein and compare them with those experimentally obtained in \cite{taniguchi_quantifying_2010}.

\subsection{\label{subsec:S_mod1_Gillespie-Algorithms}Simulations}
When performing simulations, one needs to take care of the non-homogeneity of the Poisson processes describing mRNA creation times: the rate of protein production $\lambda_{1}V(t)$ is not a homogeneous rate as it changes with time. That does not allow a direct application of Gillepsie method \citep{gillespie_exact_1977},  an extension for non-homogenous processes has to be used.

Reference \cite{gillespie_exact_1977} describes an algorithm to simulate stochastic trajectories such as the quantities of different chemical species interacting together. The main idea is to consider the state of a system (for instance the number of each chemical compounds) and to compute the first reaction to occur, as well as the time when it happens. Once both pieces of information computed, one change the current state of the system accordingly with the reaction, and update the time.

One important hypothesis is that all reactions occur at exponential times (even if the rates of these exponential times may depend on the current state of the system).  In the current intermediate model, at any time $t$, the state is described by $\left(M_{t},P_{t}\right)$ (respectively, the number of mRNAs and proteins), and the rate of mRNA production is $\Lambda(t)=\lambda_{1}V(t)$ with $\lambda_{1}$ a parameter and $V(t)$ the non-constant volume of the cell. The parameter $\Lambda(t)$ does not depend on the state $\left(M_{t},P_{t}\right)$ but is time dependent through $V(t)$; for this reason, it is not an exponential time.

In this case, the duration of time $T$ until the next mRNA production is characterized by
\[
\P{T>x}=\exp\left(-\int_{0}^{x}\Lambda(t)\,\diff t\right),  \quad x>0.
\]
which is not an exponential distribution as $\Lambda$ is non-constant.  To compute $T$, we consider that $\Lambda(t)$ is strictly positive for any $t\in\R_{+}$, as a consequence $F(x):=\int_{0}^{x}\Lambda(t)\,\diff t$ is strictly increasing. Let $E$ be an exponential random variable with  parameter $1$. We have hence
\[
\forall y>0\quad\P{E>y}=\exp\left(-y\right).
\]
If we consider the case of $y=F(x)$, since $F$ is strictly increasing, hence
\[
\P{E>y}=\exp\left(-F(x)\right)\qquad\text{and}\qquad\P{E>y}=\P{F^{-1}(E)>x}.
\]
As a consequence the random variable $F^{-1}(E)$ has the same distribution as $T$.

Based on that we can propose a new version of the algorithm of Gillespie that can take into account non-exponential times such as $T$.
\begin{algo}
The equivalent of Gillespie algorithm that considers non-homogeneous events is
\begin{enumerate}
\item Initialization: Initialize time of molecules in the system and the
time.
\item Next exponential event: determine the next event that occurs at an exponential
time as in Gillespie algorithm.
\item Next non-homogeneous event: determine the next event that occurs at
non-homogeneous rates with the method previously described.
\item Update: choose between events of Step 2 or Step 3 that happen first.
Update the time and the molecule count accordingly.
\item Iterate: Consider again the Step 2 unless it has reached the end of
the simulation.
\end{enumerate}
\end{algo}


\subsection{\label{subsec:S_mod1_ESD}Environmental State Decomposition}

As explained in the main article, the dual reporter technique \citep{elowitz_stochastic_2002}
that compares the expression of two similar genes (with the same promoter
and RBS, at an equivalent position on the chromosome) in the same
cell can be interpreted as an estimator of the environmental state
decomposition \citep{hilfinger_separating_2011}. If $Z$ the cell
state (i.e., the common environment in which the two genes of the
dual reporter technique are expressed), then we can apply the law
of total variance on the protein number,
\[
\Vhat{P/V}=\underbrace{\vphantom{\left[\Ehat{P|}\right]}\Ehat{\Vhat{P/V|Z}}}_{\Vhatt{P/V}{int}}+\underbrace{\Vhat{\Ehat{P/V|Z}}}_{\Vhatt{P/V}{ext}}.
\]
In order to be applied here, we need to specify what does the gene
environment $Z$ refers to. In our model, two similar genes (with
the same parameters $\lambda_{1}$, $\sigma_{1}$,$\lambda_{2}$ )
in the same cell would undergo the same volume growth. But each gene
would undergo a specific partition at division (for instance, in the
case of random partitioning, the partitions of the proteins of one
of the gene, is uncorrelated with the partition of the proteins of
the other gene).

With only considering the volume as the gene environment, we end up
with the following decomposition,
\begin{align*}
\Vhatt{P/V}{int} & =  \frac{1}{\tau_{D}}\int_{0}^{\tau_{D}}\V{P_{t}/V(t)}\diff t,\\
\Vhatt{P/V}{ext} & =  \frac{1}{\tau_{D}}\int_{0}^{\tau_{D}}\E{P_{t}/V(t)}^{2}\diff t-\Ehat{P/V}^{2}.
\end{align*}

As in this model, the $\E{P_{t}/V(t)}$ remains constant during the cell cycle, the second term of the decomposition remains null.

It has been confirmed by a simulation of the dual reporter technique, where the expression of two identical promoters in the same cell has been compared (we took the example of the protein Adk). The concentrations of the two proteins, respectively $P_1/V$ and $P_2/V$, have been compared. In particular, we measured their covariance which is much smaller than their respective variances (see S1~Fig (D)).

\subsection{\label{subsec:S_mod1_simple_model_partition}Simplified model for
the random partitioning}

Here we explain the simplified model used to make the prediction of
protein noise ratio in blue dash line of Figure~2D 
 of the main article.

For a  given  quantity $P$ associated with the gene, the partitioning can be performed in two ways, either exact or random. The result in each case  will  be denoted respectively by $P_{e}$ and $P_{r}$. During division, the volume is divided by two, changing from $2V_{0}$ to $V_{0}$. In order to be plotted in Figure~2D 
 of the main article, we need to consider the variance of protein concentration after division
\[
\eta:=\frac{\V{P_{r}/V_{0}}}{\V{P_{e}/V_{0}}}\quad\text{as a function of\quad}x:=\frac{\V{P/(2V_{0})}}{\E{P/(2V_{0})}}.
\]

\begin{prop}
The Variance ratio $\eta$ as a function of $x$ is given by
\[
\eta=\frac{2V_{0}x+1}{2V_{0}x}.
\]
\end{prop}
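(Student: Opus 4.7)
The plan is to compute $\V{P_e/V_0}$ and $\V{P_r/V_0}$ separately in terms of $\E{P}$ and $\V{P}$, and then express the ratio in terms of the Fano-factor-like quantity $x$.

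First I would treat the exact case. Under exact partitioning, $P_e = P/2$ deterministically given $P$, so
\[
\V{P_e/V_0} = \frac{1}{V_0^{2}}\V{P/2} = \frac{\V{P}}{4V_0^{2}}.
\]
For the random case, conditional on $P$ each of the $P$ compounds independently ends up in the tracked daughter cell with probability $1/2$, i.e.\ $P_r \mid P \sim \mathrm{Binomial}(P,1/2)$. By the law of total variance,
\[
\V{P_r} = \E{\V{P_r\mid P}} + \V{\E{P_r\mid P}} = \E{P/4} + \V{P/2} = \frac{\E{P} + \V{P}}{4},
\]
so $\V{P_r/V_0} = (\E{P}+\V{P})/(4V_0^{2})$.

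Taking the ratio, the $4V_0^{2}$ factors cancel and I obtain
\[
\eta = \frac{\V{P_e/V_0}}{\V{P_r/V_0}} = \frac{\V{P}}{\E{P} + \V{P}}.
\]
The last step is purely algebraic: the variable $x$ is defined relative to the predivision volume $2V_0$, so
\[
x = \frac{\V{P/(2V_0)}}{\E{P/(2V_0)}} = \frac{\V{P}/(4V_0^{2})}{\E{P}/(2V_0)} = \frac{\V{P}}{2V_0\,\E{P}},
\]
hence $\V{P} = 2V_0 x\,\E{P}$. Substituting this into the expression for $\eta$ and dividing numerator and denominator by $\E{P}$ yields
\[
\eta = \frac{2V_0 x\,\E{P}}{\E{P} + 2V_0 x\,\E{P}} = \frac{2V_0 x}{2V_0 x + 1}.
\]

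There is no real obstacle here: the only modeling input is the binomial description of random partitioning conditional on $P$, after which the law of total variance and a one-line substitution finish the argument. The one thing worth being explicit about in the write-up is that the ``$x$'' is evaluated pre-division (volume $2V_0$) while $\eta$ is evaluated post-division (volume $V_0$), which is exactly what produces the factor $2V_0$ rather than $V_0$ inside the final formula.
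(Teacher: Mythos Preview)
Your proof is correct and follows essentially the same approach as the paper: compute the post-division variances for the exact and random cases, take the ratio, and rewrite in terms of the predivision Fano-type quantity $x$. The only cosmetic difference is that the paper packages the binomial-partition moment identities into a separate lemma (derived via the moment generating function), whereas you obtain the same identity directly from the law of total variance; the substance of the argument is identical.
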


\begin{proof}
We have the quantity before division $P$. Since by definition, we have
that $P_{e}=P/2$ and
\[
\E{P_{e}/V_{0}}  =\E{P/(2V_{0})}, \quad  \V{P_{e}/V_{0}}  =\V{P/(2V_{0})}.
\]

For the effect of binomial division, see Lemma~\ref{lem:Binom_div}, it describes the effect of the binomial division on the means and on the variances of several quantities. By the volume in order to observe the concentrations, one gets that
\begin{align*}
\E{P_{r}/V_{0}} & =\frac{\E P}{2V_{0}}=\E{P/\left(2V_{0}\right)}
\end{align*}
and
\[
\V{P_{r}/V_{0}}  =\frac{\V{P_{r}}}{V_{0}^{2}}=\frac{\V P+2\E{P_{r}}}{4V_{0}^{2}}
  =\V{P/\left(2V_{0}\right)}+\frac{\E{P/(2V_{0})}}{2V_{0}}.
\]
As a consequence, this gives the relation
\[
\eta=\frac{\V{P/\left(2V_{0}\right)}+\E{P/(2V_{0})}/(2V_{0})}{\V{P/(2V_{0})}}=\frac{2V_{0}x+1}{2V_{0}x}
\]
\end{proof}

\section{Impact of Gene Replication}\label{subsec:S_mod2}

We present here results relative to the model presented in Figure~3 
 of the main article with volume growth, partition at division and gene replication.  For this intermediate model, we still follow the analysis scheme presented in Figure~7A 
 of the main article, but here, we are also able to produce analytic results for the variance of mRNAs and proteins.

\subsection{Theoretical analysis}

\subsubsection{\label{subsec:S_mod2_mRNA-number}Dynamics of mRNA number }

The aim of this section is to prove Theorem~\ref{thm:xs_rep} which states that at any time of the cell, the mRNA number follows a Poisson distribution. To do so, we first give a description of the number mRNAs at any time in the cell cycle using a Marked Poisson Point Process.  With this description, we will be able to show Proposition~\ref{prop:Distrib_M_0}, that the distribution of $M_{0}$ at the beginning of the cell cycle is a Poisson distribution. This proposition will allow to finally prove the main theorem of the subsection.

If time $t=0$ is the beginning of a new cell cycle and if the system is already at steady state in the same sense as the previous models (see Section~\ref{subsec:S_mod1_mrna_dynamic}).  We consider that $M_{0}$, the number of mRNAs at birth is known.  As in Section~\ref{subsec:S_mod1_mrna_dynamic}, we assort mRNAs in independent groups; here we consider three categories.
\begin{itemize}
\item mRNAs which were present at the birth of the cell. Each of them is characterized by its lifetime given by an exponential time of rate $\sigma_{1}$. The $i$-th mRNA is still present at time $t$ if and only if $E_{\sigma_{1}}^{i}>t$, with $\left(E_{\sigma_{1}}^{i}\right)$ being i.i.d. exponential random variables of parameter $\sigma_{1}$.
\item mRNAs created since the birth of the cell by the first copy of the gene. The $i$-th mRNA of this group is characterized by the time of creation $t_{i}$ given by a Poisson Process of rate $\lambda_{1}$ and its lifetime $\delta_{i}$ given by an exponential time of rate $\sigma_{1}$.
\item mRNAs created since the gene replication by the second copy of the gene. As in the previous group, the $i$-th mRNA is characterized by the time of creation $t_{i}$ given by a Poisson Process of rate $\lambda_{1}$ and its lifetime $\delta_{i}$ given by an exponential time of rate $\sigma_{1}$. But here, the Poisson Process of rate $\lambda_{1}$ begins at time $\tau_{R}$, the time of replication of the gene.
\end{itemize}
As in Section~\ref{subsec:S_mod1_mrna_dynamic}, one can represent the number of mRNAs of the second and the third group as two independent MPPPs $\mathcal{N}$ and $\mathcal{N}'$. The first variable $x$ of each of these MPPPs will represent the time. The intensity of each of the MPPP is the same,
\begin{align*}
\nu(\diff x,\diff y) & =\lambda_{1}\diff x\otimes\sigma_{1}e^{-\sigma_{1}y}\diff y.
\end{align*}
The only difference between $\mathcal{N}$ and $\mathcal{N}'$ is
the fact that they begin at time $0$ for $\mathcal{N}$ and at time
$\tau_{R}$ for $\mathcal{N}'$ (see \ref{fig:MPPP-2}). As
a consequence, if we consider an mRNA of either group, the conditions
of its existence at time $t\in[0,\tau_{D}[$ are respectively,
\begin{itemize}
\item if it is in the second group: $(t_{i},\delta_{i})\in\Delta_{t}$ with
$\Delta_{t}=\left\{ (x,y),\,0<x<t,\,y>t-x\right\} ,$
\item if it is in the third group: $(t_{i},\delta_{i})\in\Delta'_{t}$ with
$\Delta'_{t}=\left\{ (x,y),\,\tau_{R}<x<t,\,y>t-x\right\} .$
\end{itemize}

\begin{figure}
\caption{\label{fig:MPPP-2}\textbf{The illustration of the Marked Point Poisson Processes
that describe the dynamic of mRNAs}: each mRNA is characterized by
the point $(t_{i},\delta_{i})$, with $(t_{i},\delta_{i})$ following
MPPP $\mathcal{N}$ or $\mathcal{N}'$. Both MPPPs are of intensity
$\nu$. The random variable $t_{i}$ represents its birth time and
$\delta_{i}$ its lifetime, hence this mRNA exists from time $t_{i}$
up to time $t_{i}+\delta_{i}$. The only difference for the two processes
is the starting time: the process $\mathcal{N}$ in (A) begins
at birth (in particular, an mRNA is still present at time $t$, if
and only if the point $(t_{i},\delta_{i})$ is in the set $\Delta_{t}=\left\{ (x,y),\,0<x<t,\,y>t-x\right\} $);
the process $\mathcal{N}'$ , in (B), begins at replication
(in particular an mRNA is still present at time $t$, if and only
if the point $(t_{i},\delta_{i})$ is in the set $\Delta'_{t}=\left\{ (x,y),\,\tau_{R}<x<t,\,y>t-x\right\} $). }
\end{figure}

Hence, we can describe the number of mRNAs at any time $t\in[0,\tau_{D}[$ as follows,
\begin{equation}
M_{t}=\sum_{i=1}^{M_{0}}\ind{E_{\sigma_{1}}^{i}>t}+\mathcal{N}\left(\Delta_{t}\right)+\ind{t\geq\tau_{R}}\mathcal{N}'_{\lambda_{1}}\left(\Delta'_{t}\right).\label{eq:M_s-2}
\end{equation}
Each term corresponds to each group of mRNAs previously described.

At first we want to characterize the distribution of $M_{0}$, the
number of mRNAs at the birth of the cell. To do so, we use the steady state
hypothesis that implies that $M_{0}\overset{\mathcal{D}}{=}M_{\tau_{D}}$.
\begin{prop}
\label{prop:Distrib_M_0}At steady state, the number of mRNAs at birth
$M_{0}$ follows a Poisson distribution of parameter:
\[
x_{0}=\frac{\lambda_{1}}{\sigma_{1}}\left[1-\frac{e^{-(\tau_{D}-\tau_{R})\sigma_{1}}}{2-e^{-\tau_{D}\sigma_{1}}}\right]\mbox{.}
\]
\end{prop}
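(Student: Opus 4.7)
The plan is to analyze the embedded Markov chain $(M_{k\tau_D})_{k\in\N}$ of mRNA counts at successive birth times and show that the Poisson distribution with parameter $x_0$ is its unique invariant measure. The equilibrium hypothesis becomes the identity $M_0\stackrel{\mathcal{D}}{=}M_{\tau_D}$, where $M_{\tau_D}$ is obtained from $M_{\tau_D-}$ by a Binomial$(\cdot,1/2)$ thinning representing the random partition at division. So the whole calculation reduces to computing the distribution of $M_{\tau_D-}$ given $M_0$ and then matching parameters.

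First I would use the decomposition \eqref{eq:M_s-2} evaluated at $s=\tau_D-$: given $M_0$, the three summands are independent since they come from disjoint sources, namely the i.i.d.\ lifetimes $(E_{\sigma_1}^i)$ and the independent MPPPs $\mathcal{N}$ and $\mathcal{N}'$. The first term is Binomial$(M_0,e^{-\sigma_1\tau_D})$; by the Poisson property of MPPPs (Proposition~1.13 of \cite{robert_stochastic_2010}), the other two are independent Poisson random variables whose parameters I would compute directly as
\[
\nu(\Delta_{\tau_D})=\frac{\lambda_1}{\sigma_1}\bigl(1-e^{-\sigma_1\tau_D}\bigr),\qquad \nu(\Delta'_{\tau_D})=\frac{\lambda_1}{\sigma_1}\bigl(1-e^{-\sigma_1(\tau_D-\tau_R)}\bigr).
\]
Combining two elementary facts---(a) Binomial$(X,p)$ with $X\sim\mathrm{Poisson}(\lambda)$ is Poisson$(p\lambda)$, and (b) sum of independent Poissons is Poisson---the ansatz $M_0\sim\mathrm{Poisson}(x_0)$ propagates: $M_{\tau_D-}$ is Poisson with parameter
\[
\mu(x_0)\;=\;x_0 e^{-\sigma_1\tau_D}+\frac{\lambda_1}{\sigma_1}\bigl(1-e^{-\sigma_1\tau_D}\bigr)+\frac{\lambda_1}{\sigma_1}\bigl(1-e^{-\sigma_1(\tau_D-\tau_R)}\bigr),
\]
and after the binomial $1/2$ partition at division, $M_{\tau_D}$ is Poisson with parameter $\mu(x_0)/2$.

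The equilibrium condition $x_0=\mu(x_0)/2$ is a linear equation whose unique solution is
\[
x_0=\frac{\lambda_1}{\sigma_1}\cdot\frac{2-e^{-\sigma_1\tau_D}-e^{-\sigma_1(\tau_D-\tau_R)}}{2-e^{-\sigma_1\tau_D}}=\frac{\lambda_1}{\sigma_1}\left[1-\frac{e^{-\sigma_1(\tau_D-\tau_R)}}{2-e^{-\sigma_1\tau_D}}\right],
\]
which is the announced expression. To close the argument I would observe that the chain $(M_{k\tau_D})$ is irreducible and aperiodic on $\N$ (from any state, thanks to the strictly positive creation rate $\lambda_1$ on $[\tau_R,\tau_D)$ and the strictly positive death probability during one cycle, every state communicates with $0$ in one step), so that invariant distributions, when they exist, are unique. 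Since the Poisson distribution constructed above is invariant by the fixed-point computation, it must be \emph{the} equilibrium distribution.

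The main obstacle is not a hard calculation but a bookkeeping one: one has to be careful that the three contributions to $M_{\tau_D-}$ are genuinely independent given $M_0$ (this is where the MPPP formalism pays off, as it makes the independence of $\mathcal{N}$, $\mathcal{N}'$ and the initial lifetimes transparent), and that the Poisson structure is preserved separately by exponential thinning of each initial mRNA \emph{and} by the subsequent binomial partition at division. Everything else is a linear fixed-point equation. A fully rigorous uniqueness proof could alternatively be obtained by a coupling/contraction argument on the Markov chain, but the irreducibility argument above is sufficient at the level required here.
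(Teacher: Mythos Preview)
Your proof is correct and follows essentially the same approach as the paper: decompose $M_{\tau_D-}$ via \eqref{eq:M_s-2}, identify the three independent Poisson contributions, propagate the Poisson ansatz through the binomial partition, and solve the linear fixed-point equation for $x_0$. The only addition you make is an explicit irreducibility/aperiodicity argument for uniqueness of the invariant law, where the paper simply invokes uniqueness of the equilibrium distribution.
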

\begin{proof}
When $s{=}\tau_{D}-$,  by Relation~\eqref{eq:M_s-2},
\[
M_{\tau_{D}-}=\sum_{i=1}^{M_{0}}\ind{E_{\sigma_{1}}^{i}>\tau_{D}-}+\mathcal{N}_{\lambda_{1}}\left(\Delta_{\tau_{D}-}\right)+\mathcal{N}'_{\lambda_{1}}\left(\Delta'_{\tau_{D}-}\right)\mbox{.}
\]
The first term corresponds to initial messengers not degraded after the time $\tau_{D}$. Suppose that $M_{0}$ is distributed according to a Poisson distribution with parameter $x_{0}$, then the random variable
\[
\sum_{i=1}^{M_{0}}\ind{E_{\sigma_{1}}^{i}>\tau_{D}-}
\]
follows also a Poisson distribution with parameter $x_{0}e^{-\tau_{D}\sigma_{1}}$. Operation of thinning of Poisson processes, see \cite{kingman_poisson_1993} for example.

The second term corresponds to mRNAs that were created by the first
copy of the gene and which are still present at division. Since $\mathcal{N}$
is a MPPP, $\mathcal{N}\left(\Delta_{\tau_{D}-}\right)$ is a Poisson
random variable (Proposition~1.13 of \cite{robert_stochastic_2010})
with parameter
\[
\nu\left(\Delta_{\tau_{D}-}\right)=\int_{0}^{\tau_{D}}\int_{\tau_{D}-x}^{\infty}\lambda_{1}\sigma_{1}e^{-\sigma_{1}y}\,\diff y\,\diff x=\frac{\lambda_{1}}{\sigma_{1}}\left(1-e^{-\tau_{D}\sigma_{1}}\right)\mbox{.}
\]

The third term corresponds to mRNAs that were created by the second copy of the gene (replicated at time $\tau_{R}$) and which are still present at division. As before, $\mathcal{N}'\left(\Delta'_{\tau_{D}-}\right)$ is a Poisson random variable with parameter
\[
\nu\left(\Delta'_{\tau_{D}-}\right)=\int_{\tau_{R}}^{\tau_{D}}\int_{\tau_{D}-x}^{\infty}\lambda_{1}\sigma_{1}e^{-\sigma_{1}y}\,\diff y\,\diff x=\frac{\lambda_{1}}{\sigma_{1}}\left(1-e^{-\left(\tau_{D}-\tau_{R}\right)\sigma_{1}}\right).
\]

As $M_{\tau_{D}-}$ is the sum of three independent Poisson random
variables, one gets that
\begin{align*}
M_{\tau_{D}-} & \sim\mathcal{P}\left(x_{0}e^{-\sigma_{1}\tau_{D}}+\frac{\lambda_{1}}{\sigma_{1}}\left(1-e^{-\tau_{D}\sigma_{1}}\right)+\frac{\lambda_{1}}{\sigma_{1}}\left(1-e^{-\left(\tau_{D}-\tau_{R}\right)\sigma_{1}}\right)\right)\\
 & \sim\mathcal{P}\left(x_{0}e^{-\sigma_{1}\tau_{D}}+\frac{\lambda_{1}}{\sigma_{1}}\left(2-e^{-\tau_{D}\sigma_{1}}-e^{-\left(\tau_{D}-\tau_{R}\right)\sigma_{1}}\right)\right).
\end{align*}

Between $\tau_{D}-$ and $\tau_{D}$, with the random sampling, each
mRNA has an equal chance to stay or to disappears, therefore
\[
M_{\tau_{D}}=\sum_{1=0}^{M_{\tau_{D}-}}B_{1/2,i}
\]
with $\left(B_{1/2,i}\right)$ i.i.d. Bernoulli random variables with parameter $1/2$. The random variable $B_{1/2,i}$ determines if the $i$-th mRNA is in the next considered cell or not. The random variable $M_{\tau_{D}}$ hence follows a Poisson distribution such that
\[
M_{\tau_{D}}\sim\left.\mathcal{P}\left(\left[x_{0}e^{-\sigma_{1}\tau_{D}}+\frac{\lambda_{1}}{\sigma_{1}}\left(2-e^{-\tau_{D}\sigma_{1}}-e^{-\left(\tau_{D}-\tau_{R}\right)\sigma_{1}}\right)\right]\right/2\right).
\]
Since the system is at steady state, one has $M_{0}\overset{\mathcal{D}}{=}M_{\tau_{D}}$,
therefore
\[
x_{0}=\frac{1}{2}\left(x_{0}e^{-\sigma_{1}\tau_{D}}+\frac{\lambda_{1}}{\sigma_{1}}\left(2-e^{-\tau_{D}\sigma_{1}}-e^{-\left(\tau_{D}-\tau_{R}\right)\sigma_{1}}\right)\right),
\]
which gives
\[
x_{0}=\frac{\lambda_{1}}{\sigma_{1}}\left[1-\frac{e^{-(\tau_{D}-\tau_{R})\sigma_{1}}}{2-e^{-\tau_{D}\sigma_{1}}}\right]\mbox{.}
\]
Since the steady state distribution is unique, the number of mRNAs
at birth follows a Poisson distribution of parameter $x_{0}$ at steady state.
\end{proof}

We have determined the steady state distribution of the embedded Markov Chain $\left(M_{i\tau_{D}}\right)_{i\in\N}$. Now, we analyze the distribution of mRNA number at any instant $t$ of the cell cycle.
\begin{thm}
\label{thm:xs_rep}At steady state, at  time $t$ in the cell cycle,
the distribution of the mRNA number $M_{t}$ is Poisson with  parameter
\[
x_{t}=\frac{\lambda_{1}}{\sigma_{1}}\left[1-\frac{e^{-(t+\tau_{D}-\tau_{R})\sigma_{1}}}{2-e^{-\tau_{D}\sigma_{1}}}+\ind{t\geq\tau_{R}}\left(1-e^{-(t-\tau_{R})\sigma_{1}}\right)\right]\mbox{.}
\]
In particular, the mean and the variance of mRNA concentration are known at any time $t$ of the cell cycle,
\begin{align*}
\E{M_{t}/V(t)} & =  \frac{\lambda_{1}}{\sigma_{1}V(t)}\left[1-\frac{e^{-(t+\tau_{D}-\tau_{R})\sigma_{1}}}{2-e^{-\tau_{D}\sigma_{1}}}+\ind{t\geq\tau_{R}}\left(1-e^{-(t-\tau_{R})\sigma_{1}}\right)\right],\\
\V{M_{t}/V(t)} & =  \frac{\lambda_{1}}{\sigma_{1}V(t)^{2}}\left[1-\frac{e^{-(t+\tau_{D}-\tau_{R})\sigma_{1}}}{2-e^{-\tau_{D}\sigma_{1}}}+\ind{t\geq\tau_{R}}\left(1-e^{-(t-\tau_{R})\sigma_{1}}\right)\right].
\end{align*}

\end{thm}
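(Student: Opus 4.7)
My plan is to recycle the three-group decomposition~(\ref{eq:M_s-2}) used at $s=\tau_D-$ in the proof of Proposition~\ref{prop:Distrib_M_0}, but now evaluate the three independent summands at an arbitrary $s\in[0,\tau_D[$. Because the sum of independent Poisson random variables is Poisson with parameter the sum of the individual parameters, this will immediately give the Poisson character of $M_s$; the mean and variance formulas then follow from the identity $\E{M_s}=\V{M_s}=x_s$ and the fact that $V(s)$ is deterministic.

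First I would handle the ``initial mRNA'' term. By Proposition~\ref{prop:Distrib_M_0}, at equilibrium $M_0\sim\mathcal{P}(x_0)$ with $x_0=(\lambda_1/\sigma_1)[1-e^{-(\tau_D-\tau_R)\sigma_1}/(2-e^{-\tau_D\sigma_1})]$. Since the lifetimes $(E_{\sigma_1}^i)$ are i.i.d.\ exponential of rate $\sigma_1$ and independent of $M_0$, the standard thinning property of Poisson random variables (retention probability $e^{-s\sigma_1}$) yields that $\sum_{i=1}^{M_0}\ind{E_{\sigma_1}^i>s}$ is Poisson of parameter $x_0\,e^{-s\sigma_1}$.

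Second I would handle the two MPPP contributions. Since $\mathcal N$ and $\mathcal N'$ are Marked Poisson Point Processes, Proposition~1.13 of \cite{robert_stochastic_2010} gives that $\mathcal N(\Delta_s)$ and, when $s\geq\tau_R$, $\mathcal N'(\Delta'_s)$ are Poisson with parameters
\[
\nu(\Delta_s)=\int_{0}^{s}\!\!\int_{s-x}^{\infty}\lambda_1\sigma_1 e^{-\sigma_1 y}\,\diff y\,\diff x=\frac{\lambda_1}{\sigma_1}\bigl(1-e^{-s\sigma_1}\bigr),\qquad \nu(\Delta'_s)=\frac{\lambda_1}{\sigma_1}\bigl(1-e^{-(s-\tau_R)\sigma_1}\bigr),
\]
by the same Fubini computations as in the proof of Proposition~\ref{prop:Distrib_M_0}, only with $\tau_D-$ replaced by $s$. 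When $s<\tau_R$ the third summand in~(\ref{eq:M_s-2}) is identically zero, which is exactly what the indicator $\ind{s\geq\tau_R}$ records.

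Finally, the three summands in~(\ref{eq:M_s-2}) are independent by construction (the initial lifetimes, the process $\mathcal N$ and the process $\mathcal N'$ are independent), hence $M_s$ is Poisson with parameter equal to the sum of the three parameters above. Substituting the explicit value of $x_0$ and collecting the two contributions involving $e^{-s\sigma_1}$ produces the advertised single term $-e^{-(s+\tau_D-\tau_R)\sigma_1}/(2-e^{-\tau_D\sigma_1})$, giving $x_s$ as stated. The mean and variance formulas then follow from $\E{M_s}=\V{M_s}=x_s$ together with $\E{M_s/V(s)}=\E{M_s}/V(s)$ and $\V{M_s/V(s)}=\V{M_s}/V(s)^2$. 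There is no genuine obstacle here: all the probabilistic machinery has already been assembled in the preceding proposition, and the only care to take is to track the indicator $\ind{s\geq\tau_R}$ so that the second gene copy is not counted before replication.
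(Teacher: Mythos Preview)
Your proposal is correct and follows essentially the same approach as the paper: both use the three-term decomposition~(\ref{eq:M_s-2}), the Poisson law of $M_0$ from Proposition~\ref{prop:Distrib_M_0}, the Poisson character of the MPPP counts $\mathcal N(\Delta_s)$ and $\mathcal N'(\Delta'_s)$, and independence to conclude. The only cosmetic difference is that the paper writes out the moment-generating function of $M_s$ and identifies it as Poisson, whereas you invoke the thinning property and the sum-of-independent-Poissons rule directly; these are the same argument at different levels of unpacking.
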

\begin{proof}
At a moment $t$ of the cell cycle,  the moment-generating
function of $M_{t}$ at $\xi<0$ is given by
\[
\E{\exp\left(\xi M_{t}\right)}=\E{\exp\left(\xi\left(\sum_{i=1}^{M_{0}}\ind{E_{\sigma_{1}}^{i}>t}+\mathcal{N}\left(\Delta_{t}\right)+\ind{t\geq\tau_{R}}\mathcal{N}'_{\lambda_{1}}\left(\Delta'_{t}\right)\right)\right)}.
\]
Since $M_{0}$, $E_{\sigma_{1}}^{i}$, $\mathcal{N}_{\lambda_{1}}$
and $\mathcal{N}'_{\lambda_{1}}$ are all independent, it follows
that
\[
\E{\exp\left(\xi M_{t}\right)}=\E{\exp\left(\sum_{i=0}^{M_{0}}\xi\ind{E_{\sigma_{1}}^{i}>t}\right)}\cdot\E{\exp\left(\xi\mathcal{N}\left(\Delta_{t}\right)\right)}\cdot\E{\exp\left(\xi\ind{t\geq\tau_{R}}\mathcal{N}'\left(\Delta'_{t}\right)\right)}.
\]
For the first factor, since all the random variables $\ind{E_{\sigma_{1}}^{i}>t}$ are i.i.d. Bernoulli variables with parameter $e^{-t\sigma_{1}}$ and independent of $M_{0}$, one has
\begin{align*}
\E{\exp\left(\sum_{i=0}^{M_{0}}\xi\ind{E_{\sigma_{1}}^{i}>t}\right)} & =\E{\E{\exp\left(\xi\ind{E_{\sigma_{1}}^{1}>t}\right)|M_{0}}^{M_{0}}}\\
 &= \E{\exp\left(1+e^{-t\sigma_{1}}(e^{\xi}-1)\right)^{M_{0}}}.
\end{align*}
With Proposition~\ref{prop:Distrib_M_0}, $M_{0}$ is known to be a Poisson random variable of parameter $x_{0}$, hence, with the probability generating function of a Poisson random variable,
\[
\E{\exp\left(\sum_{i=0}^{M_{0}}\xi\ind{E_{\sigma_{1}}^{i}>t}\right)}=\E{\exp\left(x_{0}e^{-t\sigma_{1}}\left(e^{\xi}-1\right)\right)}
\]
holds. For the second factor, one can recall that $\mathcal{N}\left(\Delta_{t}\right)$ is a Poisson random variable. As in Proposition~\ref{prop:Distrib_M_0}, its parameter can be calculated
\[
\nu\left(\Delta_{t}\right)=\int_{0}^{t}\int_{\tau_{D}-x}^{\infty}\lambda_{1}\sigma_{1}e^{-\sigma_{1}y}\,\diff y\,\diff x=\frac{\lambda_{1}}{\sigma_{1}}\left(1-e^{-t\sigma_{1}}\right).
\]
Identically for the third factor, $\mathcal{N}'\left(\Delta'_{t}\right)$ is a Poisson random variable of parameter
\[
\nu\left(\Delta'_{t}\right)=\int_{\tau_{R}}^{t}\int_{\tau_{D}-x}^{\infty}\lambda_{1}\sigma_{1}e^{-\sigma_{1}y}\,\diff y\,\diff x=\frac{\lambda_{1}}{\sigma_{1}}\left(1-e^{-\left(t-\tau_{R}\right)\sigma_{1}}\right).
\]

As a consequence, the moment generating function of $M_{t}$ is
\begin{align*}
\E{\exp\left(\xi M_{t}\right)} & =  \E{\left(x_{0}e^{-t\sigma_{1}}+\frac{\lambda_{1}}{\sigma_{1}}\left(1-e^{-t\sigma_{1}}\right)+\ind{t>\tau_{R}}\frac{\lambda_{1}}{\sigma_{1}}\left(1-e^{-(t-\tau_{R})\sigma_{1}}\right)\right)\left(e^{\xi}-1\right)}\\
 & =  \E{\left(x_{0}e^{-t\sigma_{1}}+\frac{\lambda_{1}}{\sigma_{1}}\left(1-e^{-t\sigma_{1}}+\ind{t>\tau_{R}}\left(1-e^{-(t-\tau_{R})\sigma_{1}}\right)\right)\right)\left(e^{\xi}-1\right)}
\end{align*}
which is the moment-generating function of a Poisson random variable
of parameter
\[
x_{0}e^{-t\sigma_{1}}+\frac{\lambda_{1}}{\sigma_{1}}\left(1-e^{-t\sigma_{1}}+\ind{t>\tau_{R}}\left(1-e^{-(t-\tau_{R})\sigma_{1}}\right)\right).
\]
\end{proof}

\subsubsection{\label{subsec:S_mod2_Protein-number}Dynamics of Proteins }

\global\long\def\Es#1#2{\left\langle #1\right\rangle _{#2}}

As for the previous analysis of the mRNA number, we search an expression
for protein production through the cell cycle. This case is more
complicated than the mRNA case and we will only calculate analytical
expressions only for the first two moments of $P_{t}$.

Propositions~\ref{prop:EPs_and_VPs} and~\ref{prop:EPs_and_VPs-2}
are the main theoretical results of this section: for any time $t$
of the cell cycle, it gives explicit expressions for the mean $\E{P_{t}}$
and the variance $\V{P_{t}}$ of the protein number. This result is
important as it will be used to directly calculate  the mean $\Ehat{P/V}$
and variance $\Vhat{P/V}$ of the protein concentration averaged across
the cell cycle without using simulations: only with the parameters
of the model ($\lambda_{1}$, $\sigma_{1}$, $\lambda_{2}$, $\tau_{R}$
and $\tau_{D}$), we will be able to know the behavior of the protein
concentration in terms of variance.

In order to prove the Propositions~\ref{prop:EPs_and_VPs} and~\ref{prop:EPs_and_VPs-2},
we will characterize $\E{P_{t}}$ and $\V{P_{t}}$ in the two following
cases:
\begin{enumerate}
\item First, we consider the case before replication ($t<\tau_{R}$ ). We begin by considering that the state of the cell at birth $(M_{0},P_{0})$ is known and we calculate the first two moments of $P_{t}$ for any time $t<\tau_{R}$ (Corollary~\ref{cor:EPs_EPs2}). Then, we integrate over all the possible initial states $(M_{0},P_{0})$ to determine expressions for $\E{P_{t}}$ and $\V{P_{t}}$ for any time $t<\tau_{R}$ (Proposition~\ref{prop:EPs_and_VPs}). These expressions are dependent of the first moments of $(M_{0},P_{0})$: they depend on $\E{M_{0}}$, $\E{P_{0}}$, $\V{M_{0}}$, $\V{P_{0}}$ and $\Cov{M_{0},P_{0}}$.
\item Then we consider the case after replication ($t\geq\tau_{R}$ ). Similarly the first case, we will consider that the state of the cell at replication $(M_{\tau_{R}},P_{\tau_{R}})$ is known and we calculate the first two moments of $P_{t}$ for any time $\tau_{R}\leq t<\tau_{D}$ (Proposition~ \ref{prop:EPs_and_VPs-2}). After integration, expressions for $\E{P_{t}}$ and $\V{P_{t}}$ for any time $t$ after replication are determined, these expressions depend on $\E{M_{\tau_{R}}}$, $\E{P_{\tau_{R}}}$, $\V{M_{\tau_{R}}}$, $\V{P_{\tau_{R}}}$ and $\Cov{M_{\tau_{R}},P_{\tau_{R}}}$ (Proposition~\ref{prop:EPs_and_VPs-2}).
\end{enumerate}
In the end, in Propositions~\ref{prop:EPs_and_VPs} and~\ref{prop:EPs_and_VPs-2}, are presented the mean and variance of protein number at any time $t$ of the cell cycle, only depending on the first moments of $\left(M_{0},P_{0}\right)$ and $\left(M_{\tau_{R}},P_{\tau_{R}}\right)$. Additional results then determine explicitly the first moments of $\left(M_{0},P_{0}\right)$ and $\left(M_{\tau_{R}},P_{\tau_{R}}\right)$ so that the mean and variance of protein number will be fully characterized.

\paragraph*{Description of the  Process of the Number of Proteins  }
Before beginning, we describe the number of proteins $P_{t}$ at any time $t$. We will use this description in the following proofs. Similarly to mRNA case Equation~\eqref{eq:M_s-2}, we group them into two categories.
\begin{itemize}
\item The $P_{0}$ proteins that were there at birth and which remain in the cell during all the cell cycle (as said in the main article the proteolysis is not considered in this model).
\item The proteins that were created during the cell cycle. The rate of production depends on the current number of mRNAs. For that we consider $\left(\mathcal{N}_{\lambda_{2}}^{i}\right)_{i\in\N}$, a sequence of i.i.d. Poisson Point Processes of intensity $\lambda_{2}$; if the $i$-th mRNA exists at time $t$ (that is to say if $i\leq M_{t}$), then the number of proteins produced by this mRNA between $t$ and $t+\diff t$ is $\mathcal{N}_{\lambda_{2}}^{i}(\diff t)$. Hence, the total number of proteins produced between $t$ and $t+\diff t$ is then $\sum_{i=1}^{\infty}\ind{i\leq M_{u}}\mathcal{N}_{\lambda_{2}}^{i}\left(\diff t\right)$.
\end{itemize}
To summarize, the number of proteins at a time $t$ of the cell cycle is
\begin{equation}
P_{t}=P_{0}+\sum_{i=1}^{\infty}\int_{0}^{t}\ind{i\leq M_{u}}\mathcal{N}_{\lambda_{2}}^{i}\left(\diff u\right)\mbox{.}\label{eq:P_s-2}
\end{equation}
The first term is the number of proteins at birth, and the second
takes into account all proteins created between times $0$ and $t$.

\paragraph*{Protein Number Before Replication}
We begin  with the case before replication,  $t<\tau_{R}$. We use the notation $\Es{\cdot}{M_{0},P_{0}}$ as the conditional expectation given $(M_0,P_0)$, i.e. $\Es{\cdot}{M_{0},P_{0}}=\E{\cdot|\left(M_{0},P_{0}\right)}$.  We first characterize the first two moments of $P_{t}$ conditionally on $\left(M_{0},P_{0}\right)$.  As for the mRNAs, we determine at first the moment-generating function of $P_{t}$.
\begin{prop}
\label{prop:MGF_Ps}For any $t\in[0,\tau_{R}[$, the conditional moment generating
function of $P_{t}$ can be expressed as
\[
\Es{\exp\left(\xi P_{t}\right)}{M_{0},P_{0}}=\exp\left(\xi P_{0}\right)\cdot h_{t}\left(\lambda_{2}\left(e^{\xi}-1\right)\right)
\]
for any $\xi<0$ and such as $h_{t}$ is the moment generating function
of $\int_{0}^{t}M_{u}\,\diff u$. The expression of $h_{t}$ is given by
\[
h_{t}(\xi):=\exp\left[M_{0}\log\left[\frac{\sigma_{1}-\xi e^{-(\sigma_{1}-\xi)t}}{\sigma_{1}-\xi}\right]+\lambda_{1}\frac{\xi}{\sigma_{1}-\xi}\left(t-\frac{1-e^{-(\sigma_{1}-\xi)t}}{\sigma_{1}-\xi}\right)\right].
\]
\end{prop}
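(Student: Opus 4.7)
The plan is to first condition on the whole trajectory $(M_u)_{0\le u\le s}$ in order to reduce the computation to that of the moment generating function of $\int_0^s M_u\,\diff u$, and then to evaluate this latter MGF by exploiting the explicit decomposition of $M_u$ as the sum of an i.i.d. contribution from the $M_0$ initial mRNAs and of a MPPP contribution coming from the mRNAs created after birth, as in Section~\ref{subsec:S_mod2_mRNA-number}.

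First I would use representation~\eqref{eq:P_s-2} to write
\[
P_{s}-P_{0}=\sum_{i=1}^{\infty}\int_{0}^{s}\ind{i\leq M_{u}}\,\mathcal{N}_{\lambda_{2}}^{i}(\diff u).
\]
Conditionally on $(M_u)_{0\le u\le s}$ and on $(M_0,P_0)$, the right-hand side is a sum of independent Poisson point processes with total rate $\lambda_2 M_u$ at time $u$, hence a Poisson random variable with parameter $\lambda_2\int_0^s M_u\,\diff u$ (since the $\mathcal N_{\lambda_2}^i$ are independent of the $M$-process). Using the MGF of a Poisson variable and then taking the conditional expectation given only $(M_0,P_0)$ yields
\[
\Es{\exp(\xi P_s)}{M_0,P_0}=\exp(\xi P_0)\cdot\Es{\exp\!\left(\lambda_2(e^\xi-1)\int_0^s M_u\,\diff u\right)}{M_0},
\]
which is exactly $\exp(\xi P_0)\cdot h_s(\lambda_2(e^\xi-1))$ once $h_s$ is identified as the conditional MGF of $\int_0^s M_u\,\diff u$ given $M_0$. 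Note that since $s<\tau_R$, $P_0$ and $M_0$ are enough to describe the starting state and the process $\mathcal N'$ plays no role.

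Next I would compute $h_s$. Using the representation~\eqref{eq:M_s-2} restricted to $s<\tau_R$ and Fubini,
\[
\int_0^s M_u\,\diff u=\sum_{i=1}^{M_0}\min(E_{\sigma_1}^i,s)+\iint_{0<x<s}\min(s-x,y)\,\mathcal N(\diff x,\diff y).
\]
Since the exponential lifetimes $(E_{\sigma_1}^i)$ and the MPPP $\mathcal N$ are independent, $h_s(\zeta)$ factors into a product. For the first factor, conditioning on $M_0$ and using the i.i.d.\ property gives
$h_s^{(1)}(\zeta)=\bigl(\mathbb E[\exp(\zeta\min(E_{\sigma_1},s))]\bigr)^{M_0}$, and a direct split of the integral at $y=s$ produces
\[
\mathbb E\bigl[\exp(\zeta\min(E_{\sigma_1},s))\bigr]=\frac{\sigma_{1}-\zeta e^{-(\sigma_{1}-\zeta)s}}{\sigma_{1}-\zeta},
\]
which matches the first term inside the exponential in the statement. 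For the second factor, I would apply Campbell's formula for Poisson point processes (see e.g.\ \cite{kingman_poisson_1993}): for an MPPP $\mathcal N$ of intensity $\nu$,
\[
\mathbb E\Bigl[\exp\!\Bigl(\iint\phi(x,y)\,\mathcal N(\diff x,\diff y)\Bigr)\Bigr]=\exp\!\Bigl(\iint(e^{\phi(x,y)}-1)\,\nu(\diff x,\diff y)\Bigr).
\]
Applied with $\phi(x,y)=\zeta\min(s-x,y)\ind{0<x<s}$ and $\nu(\diff x,\diff y)=\lambda_1\diff x\otimes\sigma_1 e^{-\sigma_1 y}\diff y$, I split the inner $y$-integral at $y=s-x$, compute the elementary exponential integrals, and change variable $u=s-x$, which leaves the compact form
\[
\iint\bigl(e^{\phi(x,y)}-1\bigr)\,\nu(\diff x,\diff y)=\lambda_1\frac{\zeta}{\sigma_1-\zeta}\Bigl(s-\frac{1-e^{-(\sigma_1-\zeta)s}}{\sigma_1-\zeta}\Bigr),
\]
matching the second term inside the exponential of $h_s$. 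Multiplying the two factors and substituting $\zeta=\lambda_2(e^\xi-1)$ completes the identification.

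The main obstacle I foresee is the algebraic bookkeeping in the Campbell integral: the piecewise definition of $\min(s-x,y)$ forces a split of the $y$-integral and several terms cancel only after using both $\sigma_1/(\sigma_1-\zeta)-1=\zeta/(\sigma_1-\zeta)$ and a change of variable in $x$. The probabilistic content is otherwise light and relies essentially on the two standard facts that a Poisson process with a (conditionally deterministic) rate integrates to a Poisson variable, and that the Laplace functional of an MPPP has the explicit form above.
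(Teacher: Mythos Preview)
Your proposal is correct and follows essentially the same approach as the paper: condition on the whole trajectory $(M_u)_{u\le s}$, use the Laplace functional of the Poisson processes $\mathcal{N}_{\lambda_2}^i$ to reduce to $h_s(\lambda_2(e^\xi-1))$, and then compute $h_s$ from the decomposition~\eqref{eq:M_s-2}. In fact you are more explicit than the paper, which simply states that integrating~\eqref{eq:M_s-2} gives the formula for $h_s$ and refers to \cite{dessalles_stochastic_2017} for the detailed calculation; your split into the $\min(E_{\sigma_1}^i,s)$ contribution and the Campbell-formula contribution is exactly the right way to fill in that gap.
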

\begin{proof}
With Equation~\eqref{eq:P_s-2}, it is easy to show that
{\small
\[
\Es{\exp\left(\xi P_{t}\right)}{M_{0},P_{0}}  =  \exp\left(\xi P_{0}\right)\cdot\Es{\prod_{i=1}^{\infty}\Es{\exp\left(\xi\int_{0}^{t}\ind{i\leq M_{u}}\mathcal{N}_{\lambda_{2}}^{i}\left(\diff u\right)\right)|\left(M_{u}\right)_{u\leq t}}{M_{0},P_{0}}}{M_{0},P_{0}}.
\]
}
We then consider the Laplace functional of the Poisson process $\mathcal{N}_{\lambda_{2}}^{i}$,
{\small
\begin{align*}
\left.\E{\exp\left(\xi\int_{0}^{t}\ind{i\leq M_{u}}\mathcal{N}_{\lambda_{2}}^{i}\left(\diff u\right)\right)\right|\left(M_{u}\right)_{u\leq t}, P_0}& =  \exp\left[\lambda_{2}\int_{0}^{\infty}\left(\exp\left(\xi\ind{i\leq M_{u}}\ind{u\leq t}\right)-1\right)du\right]\\
 & =  \exp\left[\lambda_{2}\left(e^{\xi}-1\right)\int_{0}^{t}\ind{i\leq M_{u}}du\right].
\end{align*}
}
By making the product for $i$ from $1$ to infinity, one gets
\[
\prod_{i=1}^{\infty}\Es{\exp\left(\xi\int_{0}^{t}\ind{i\leq M_{u}}\mathcal{N}_{\lambda_{2}}^{i}\left(\diff u\right)\right)|\left(M_{u}\right)_{u\leq t}}{M_{0},P_{0}}=  \exp\left[\lambda_{2}\left(e^{\xi}-1\right)\int_{0}^{t}M_{u}du\right].
\]
As a consequence, it indeed follows that
\[
\Es{\exp\left(\xi P_{t}\right)}{M_{0},P_{0}}=\exp\left(\xi P_{0}\right)\cdot h_{t}\left(\lambda_{2}\left(e^{\xi}-1\right)\right).
\]
Using the expression~\eqref{eq:M_s-2} of $M_{t}$, integrated between time $0$ and $t<\tau_{R}$ gives the result. For more details of the calculations, see Chapter~3 of \cite{dessalles_stochastic_2017}).
\end{proof}
As the moment generating function of $P_{t}$ has been characterized, it is possible to deduce, by derivation, the first two moments of $P_{t}$ knowing $(M_{0},P_{0})$ for any time $t$ before the gene replication.
\begin{cor}
\label{cor:EPs_EPs2}At steady state, for  $t\in[0,\tau_{R}[$,
 the first two conditional moments of $P_{t}$ are given by
\begin{align*}
\Es{P_{t}}{M_{0},P_{0}} & =  P_{0}+\lambda_{2}\left(\frac{\lambda_{1}}{\sigma_{1}}t+\left(M_{0}-\frac{\lambda_{1}}{\sigma_{1}}\right)\frac{1-e^{-\sigma_{1}t}}{\sigma_{1}}\right),\\
\Es{P_{t}^{2}}{M_{0},P_{0}} & =  \left(\Es{P_{t}}{M_{0},P_{0}}\right)^{2}+M_{0}\frac{\lambda_{2}}{\sigma_{1}}\left(1-e^{-\sigma_{1}t}+\frac{\lambda_{2}}{\sigma_{1}}\left[1-e^{-\sigma_{1}t}\left(e^{-\sigma_{1}t}+2t\sigma_{1}\right)\right]\right)\\
 &  +\frac{\lambda_{1}\lambda_{2}}{\sigma_{1}^{2}}\left[t\sigma_{1}-1+e^{-\sigma_{1}t}+2\frac{\lambda_{2}}{\sigma_{1}}\left(\sigma_{1}t\left(1+e^{-\sigma_{1}t}\right)-2\left(1-e^{-\sigma_{1}t}\right)\right)\right]
\end{align*}
\end{cor}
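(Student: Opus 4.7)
The plan is to derive both conditional moments from the underlying Poisson structure of the process, rather than by direct differentiation of the moment generating function of Proposition~\ref{prop:MGF_Ps} (though the two routes are equivalent, and the differentiation approach would simply compute $\partial_\xi \Psi(0)$ and $\partial_\xi^2 \Psi(0)$ for $\Psi(\xi) = e^{\xi P_0}\,h_s(\lambda_2(e^\xi-1))$). The key observation is that, in the representation~\eqref{eq:P_s-2}, conditionally on the whole mRNA trajectory $(M_u)_{0\leq u\leq s}$ together with $(M_0, P_0)$, the number of new proteins $P_s - P_0$ is, by the superposition and thinning properties of the independent Poisson processes $\mathcal{N}_{\lambda_2}^i$, a Poisson random variable of stochastic parameter $\lambda_2 \int_0^s M_u\,\diff u$. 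Applying the tower property and the conditional mean/variance formulae gives
\begin{align*}
\Es{P_s}{M_0,P_0} &= P_0 + \lambda_2 \,\Es{\int_0^s M_u\,\diff u}{M_0,P_0},\\
\V{P_s\mid M_0,P_0} &= \lambda_2\,\Es{\int_0^s M_u\,\diff u}{M_0,P_0} + \lambda_2^2\,\V{\int_0^s M_u\,\diff u \mid M_0,P_0}.
\end{align*}

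The first moment then reduces to computing $\Es{\int_0^s M_u\,\diff u}{M_0,P_0}$. Using the decomposition $M_u = \sum_{i=1}^{M_0}\ind{E_{\sigma_1}^i > u} + \mathcal{N}(\Delta_u)$ of Section~\ref{subsec:S_mod2_mRNA-number} (valid here because $u<\tau_R$, so the ``second-copy'' MPPP $\mathcal{N}'$ contributes nothing), one gets $\Es{M_u}{M_0,P_0} = M_0 e^{-\sigma_1 u} + (\lambda_1/\sigma_1)(1-e^{-\sigma_1 u})$, and integration on $[0,s]$ yields $M_0(1-e^{-\sigma_1 s})/\sigma_1 + (\lambda_1/\sigma_1^2)(s\sigma_1 - 1 + e^{-\sigma_1 s})$. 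Multiplying by $\lambda_2$ and adding $P_0$ recovers the stated formula for $\Es{P_s}{M_0, P_0}$, and the same computation simultaneously accounts for the two $\lambda_2$-linear terms in the announced second moment: the term $M_0(\lambda_2/\sigma_1)(1-e^{-\sigma_1 s})$ and the term $(\lambda_1\lambda_2/\sigma_1^2)(s\sigma_1 - 1 + e^{-\sigma_1 s})$.

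For the remaining $\lambda_2^2$ contributions, by Fubini $\V{\int_0^s M_u\,\diff u \mid M_0,P_0} = \int_0^s\!\int_0^s \Cov{M_u, M_v \mid M_0, P_0}\,\diff u\,\diff v$. The independence of the two summands of $M_u$ together with the standard identity $\Cov{\mathcal{N}(A), \mathcal{N}(B)} = \nu(A\cap B)$ for a Poisson random measure of intensity $\nu$ yields, for $0\leq u \leq v$,
\[
\Cov{M_u, M_v \mid M_0,P_0} = M_0 e^{-\sigma_1 v}(1-e^{-\sigma_1 u}) + \frac{\lambda_1}{\sigma_1}\bigl(e^{-\sigma_1(v-u)} - e^{-\sigma_1 v}\bigr),
\]
the first term being the covariance of the binomial thinning of the initial mRNAs and the second equalling $\nu(\Delta_u \cap \Delta_v)$. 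Integrating this covariance over $\{(u,v)\in [0,s]^2: u\leq v\}$ and doubling by symmetry produces an $M_0$-proportional piece and an $M_0$-independent piece which, multiplied by $\lambda_2^2$ and added to the first-moment contribution above, reproduce exactly the decomposition of the statement. The main obstacle is purely computational: the double integral of the covariance is routine but tedious, and some care is needed to regroup the resulting exponentials and linear-in-$s$ terms into the specific form written in the corollary. The only conceptual input beyond the structural observation above is the Poisson covariance identity for MPPPs, which was already implicit in the proofs of Proposition~\ref{prop:Distrib_M_0} and Theorem~\ref{thm:xs_rep}.
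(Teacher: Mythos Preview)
Your proof is correct and, as you yourself note, equivalent at the structural level to the paper's argument. The paper simply differentiates the moment generating function from Proposition~\ref{prop:MGF_Ps}: since $h_{s}$ is by construction the conditional MGF of $\int_{0}^{s}M_{u}\,\diff u$ given $M_{0}$, one has $h_{s}'(0)=\Es{\int_{0}^{s}M_{u}\,\diff u}{M_{0}}$ and $h_{s}''(0)-h_{s}'(0)^{2}=\V{\int_{0}^{s}M_{u}\,\diff u\mid M_{0}}$, and the paper obtains these by differentiating the closed form of $h_{s}$ already established. You instead bypass the explicit $h_{s}$ and compute the same two quantities directly from the MPPP structure, via $\Es{M_{u}}{M_{0}}$ and the covariance identity $\Cov{\mathcal{N}(\Delta_{u}),\mathcal{N}(\Delta_{v})}=\nu(\Delta_{u}\cap\Delta_{v})$. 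Your covariance formula and the resulting double integral are correct and do indeed reproduce the $M_{0}$-proportional and $M_{0}$-independent $\lambda_{2}^{2}$ pieces of the statement. The trade-off is clear: the paper's route is shorter once $h_{s}$ is in hand, while yours is more transparent probabilistically and does not rely on the explicit expression of $h_{s}$ at all, only on the Poisson structure of $(M_{u})$.
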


\begin{proof}
The first two moments of $P_{t}$ can be obtained by derivation of the moment generating function of  Proposition~\ref{prop:MGF_Ps},
\[
\Es{P_{t}}{M_{0},P_{0}}  =  \lim_{\xi\rightarrow 0}\frac{d}{d\xi}\left[\exp\left(\xi P_{0}\right)h_{t}\left(\lambda_{2}\left(e^{\xi}-1\right)\right)\right]
  =  P_{0}+\lambda_{2}h_{t}'\left(0\right)
\]
and
\begin{align*}
\Es{P_{t}^{2}}{M_{0},P_{0}}  & =  \lim_{\xi\rightarrow0}\frac{d^{2}}{d\xi^{2}}\left[\exp\left(\xi P_{0}\right)h_{t}\left(\lambda_{2}\left(e^{\xi}-1\right)\right)\right]\\
  & =  \left(\Es{P_{t}}{M_{0},P_{0}}\right)^{2}+\lambda_{2}h_{t}'\left(0\right)+\left(\lambda_{2}\right)^{2}\left(h_{t}''\left(0\right)-h_{t}'\left(0\right)^{2}\right)
\end{align*}

The calculations of $h_{t}'\left(0\right)^{2}$, $h_{t}''\left(0\right)^{2}$
allow to show the result (see Chapter~3 of \cite{dessalles_stochastic_2017}
for the details of the calculation).
\end{proof}
The previous corollary gives expressions for $\Es{P_{t}}{M_{0},P_{0}}$ and $\Es{P_{t}^{2}}{M_{0},P_{0}}$. In the next proposition, we integrate these expressions over all birth states $\left(M_{0},P_{0}\right)$ to find formulas for $\E{P_{t}}$ and $\V{P_{t}}$ for any time $t<\tau_{R}$ before replication. These expression depends on joint moments of $M_{0}$ and $P_{0}$.
\begin{prop}
\label{prop:EPs_and_VPs}At any time $t\in[0,\tau_{R}[$ before replication, the mean and the variance of $P_{t}$ are given by
\begin{align*}
\E{P_{t}} & =  \E{P_{0}}+\lambda_{2}\left(\frac{\lambda_{1}}{\sigma_{1}}t+\left(x_{0}-\frac{\lambda_{1}}{\sigma_{1}}\right)\frac{1-e^{-\sigma_{1}t}}{\sigma_{1}}\right),\\
\V{P_{t}} & =  \V{P_{0}}+2\lambda_{2}\frac{1-e^{-\sigma_{1}t}}{\sigma_{1}}\Cov{P_{0},M_{0}}+\left(\lambda_{2}\frac{1-e^{-\sigma_{1}t}}{\sigma_{1}}\right)^{2}x_{0}\\
  & +x_{0}\frac{\lambda_{2}}{\sigma_{1}}\left(1-e^{-\sigma_{1}t}+\frac{\lambda_{2}}{\sigma_{1}}\left[1-e^{-\sigma_{1}t}\left(e^{-\sigma_{1}t}+2t\sigma_{1}\right)\right]\right)\\
  & +\frac{\lambda_{1}\lambda_{2}}{\sigma_{1}^{2}}\left[t\sigma_{1}-1+e^{-\sigma_{1}t}+2\frac{\lambda_{2}}{\sigma_{1}}\left(\sigma_{1}t\left(1+e^{-\sigma_{1}t}\right)-2\left(1-e^{-\sigma_{1}t}\right)\right)\right]
\end{align*}
where $x_{0}$ is defined in Proposition~\ref{prop:Distrib_M_0}.
\end{prop}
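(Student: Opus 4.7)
The plan is to lift the conditional first and second moments given in Corollary~\ref{cor:EPs_EPs2} to unconditional ones by integrating over the joint distribution of $(M_0,P_0)$. For the mean, I would simply apply the tower property, writing $\E{P_s}=\E{\Es{P_s}{M_0,P_0}}$ and plugging in the explicit formula from the corollary. Since the conditional mean is affine in $M_0$ (the coefficient of $P_0$ being $1$), this step only uses $\E{P_0}$ and $\E{M_0}$; Proposition~\ref{prop:Distrib_M_0} tells us $M_0\sim\mathcal P(x_0)$, so $\E{M_0}=x_0$, which immediately produces the stated formula for $\E{P_s}$.

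For the variance the natural tool is the law of total variance:
\[
\V{P_s}=\E{\Vs{P_s}{M_0,P_0}}+\V{\Es{P_s}{M_0,P_0}}.
\]
Inspecting Corollary~\ref{cor:EPs_EPs2}, the conditional variance $\Es{P_s^2}{M_0,P_0}-\Es{P_s}{M_0,P_0}^2$ depends \emph{linearly} on $M_0$ and not at all on $P_0$, with the $M_0$-coefficient equal to $\tfrac{\lambda_2}{\sigma_1}(1-e^{-\sigma_1 s}+\tfrac{\lambda_2}{\sigma_1}[1-e^{-\sigma_1 s}(e^{-\sigma_1 s}+2s\sigma_1)])$ and a constant term $\tfrac{\lambda_1\lambda_2}{\sigma_1^2}[s\sigma_1-1+e^{-\sigma_1 s}+2\tfrac{\lambda_2}{\sigma_1}(\sigma_1 s(1+e^{-\sigma_1 s})-2(1-e^{-\sigma_1 s}))]$. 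Taking expectation replaces $M_0$ by $x_0$ and yields the third and fourth lines of the target formula for $\V{P_s}$ directly.

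For the second term I would set $a:=\lambda_2(1-e^{-\sigma_1 s})/\sigma_1$ and write $\Es{P_s}{M_0,P_0}=P_0+aM_0+c$ where $c=\lambda_2\lambda_1 s/\sigma_1-a\lambda_1/\sigma_1$ is deterministic. Then bilinearity of variance gives $\V{\Es{P_s}{M_0,P_0}}=\V{P_0}+a^2\V{M_0}+2a\Cov{P_0,M_0}$, and using $\V{M_0}=x_0$ (again because $M_0$ is Poisson) reproduces the remaining three contributions $\V{P_0}$, $a^2 x_0$, and $2a\Cov{P_0,M_0}$ in the stated formula. Summing the two pieces finishes the proof.

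Conceptually the argument is a routine application of the towering identities for conditional expectation and variance; the only real work is algebraic bookkeeping. The main obstacle I anticipate is simply keeping track of the several terms in the conditional second moment from Corollary~\ref{cor:EPs_EPs2} and recognizing which ones come out of $\E{\Vs{P_s}{M_0,P_0}}$ versus $\V{\Es{P_s}{M_0,P_0}}$; once one notices that the conditional variance happens to be independent of $P_0$ and only linear in $M_0$, the two contributions separate cleanly and no further input is needed beyond $\E{M_0}=\V{M_0}=x_0$.
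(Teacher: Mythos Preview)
Your proposal is correct and follows essentially the same route as the paper: both arguments condition on $(M_0,P_0)$, use Corollary~\ref{cor:EPs_EPs2}, and exploit that the conditional mean is affine in $(M_0,P_0)$ while the conditional variance is linear in $M_0$ alone, together with $\E{M_0}=\V{M_0}=x_0$. The only cosmetic difference is that you invoke the law of total variance explicitly, whereas the paper computes $\E{P_s^2}-\E{P_s}^2$ directly and then recognizes $\E{\Es{P_s}{M_0,P_0}^2}-\E{P_s}^2$ as $\V{P_0}+a^2\V{M_0}+2a\,\Cov{P_0,M_0}$ by brute expansion; your organization is arguably cleaner but the content is identical.
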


\begin{proof}
By considering the mean of the random variable $\E{P_{t}|(M_{0},P_{0})}$ in Corollary~\ref{cor:EPs_EPs2}, the result for
$\E{P_{t}}$ is easy to get. For the variance, consider the expression of $\E{P_{t}^{2}|(M_{0},P_{0})}$
\begin{multline*}
\E{P_{t}^{2}} =  \E{\Es{P_{t}}{M_{0},P_{0}}^{2}}+\E{M_{0}}\frac{\lambda_{2}}{\sigma_{1}}\left(1-e^{-\sigma_{1}t}+\frac{\lambda_{2}}{\sigma_{1}}\left[1-e^{-\sigma_{1}t}\left(e^{-\sigma_{1}t}+2t\sigma_{1}\right)\right]\right)\\
 +\frac{\lambda_{1}\lambda_{2}}{\sigma_{1}^{2}}\left[t\sigma_{1}-1+e^{-\sigma_{1}t}+2\frac{\lambda_{2}}{\sigma_{1}}\left(\sigma_{1}t\left(1+e^{-\sigma_{1}t}\right)-2\left(1-e^{-\sigma_{1}t}\right)\right)\right]
\end{multline*}
and
\begin{multline*}
\V{P_{t}}  =  \E{\Es{P_{t}}{M_{0},P_{0}}^{2}}-\E{P_{t}}^{2}+\E{M_{0}}\frac{\lambda_{2}}{\sigma_{1}}\left(1-e^{-\sigma_{1}t}+\frac{\lambda_{2}}{\sigma_{1}}\left[1-e^{-\sigma_{1}t}\left(e^{-\sigma_{1}t}+2t\sigma_{1}\right)\right]\right)\\
   +\frac{\lambda_{1}\lambda_{2}}{\sigma_{1}^{2}}\left[t\sigma_{1}-1+e^{-\sigma_{1}t}+2\frac{\lambda_{2}}{\sigma_{1}}\left(\sigma_{1}t\left(1+e^{-\sigma_{1}t}\right)-2\left(1-e^{-\sigma_{1}t}\right)\right)\right].
\end{multline*}
Now, for the expression of $\E{\Es{P_{t}}{M_{0},P_{0}}^{2}}-\E{P_{t}}^{2}$,
\begin{align*}
\E{\Es{P_{t}}{M_{0},P_{0}}^{2}}&-\E{P_{t}}^{2}  =  \E{P_{0}^{2}}+\E{\left(\lambda_{2}\left(\frac{\lambda_{1}}{\sigma_{1}}t+\left(M_{0}-\frac{\lambda_{1}}{\sigma_{1}}\right)\frac{1-e^{-\sigma_{1}t}}{\sigma_{1}}\right)\right)^{2}}\\
  & +2\E{P_{0}\times\lambda_{2}\left(\frac{\lambda_{1}}{\sigma_{1}}t+\left(M_{0}-\frac{\lambda_{1}}{\sigma_{1}}\right)\frac{1-e^{-\sigma_{1}t}}{\sigma_{1}}\right)}\\
  &-\E{P_{0}}^{2}+\E{\lambda_{2}\left(\frac{\lambda_{1}}{\sigma_{1}}t+\left(M_{0}-\frac{\lambda_{1}}{\sigma_{1}}\right)\frac{1-e^{-\sigma_{1}t}}{\sigma_{1}}\right)}^{2}\\
  & -2\E{P_{0}}\E{\lambda_{2}\left(\frac{\lambda_{1}}{\sigma_{1}}t+\left(M_{0}-\frac{\lambda_{1}}{\sigma_{1}}\right)\frac{1-e^{-\sigma_{1}t}}{\sigma_{1}}\right)}\\
&=  \V{P_{0}}+\V{\lambda_{2}\left(\frac{\lambda_{1}}{\sigma_{1}}t+\left(M_{0}-\frac{\lambda_{1}}{\sigma_{1}}\right)\frac{1-e^{-\sigma_{1}t}}{\sigma_{1}}\right)}\\
&\hspace{2cm}   +2\Cov{P_{0},\lambda_{2}\left(\frac{\lambda_{1}}{\sigma_{1}}t+\left(M_{0}-\frac{\lambda_{1}}{\sigma_{1}}\right)\frac{1-e^{-\sigma_{1}t}}{\sigma_{1}}\right)}.
\end{align*}
Finally, one just has to remark that due to Proposition~\ref{prop:Distrib_M_0}
$\E{M_{0}}=\V{M_{0}}=x_{0}$.
\end{proof}

\paragraph*{Protein Number After Replication}
For a time $t$ such as $\tau_{R}\leq t<\tau_{D}$ . We adopt a similar approach as for the previous case, the state just after replication $\left(M_{\tau_{R}},P_{\tau_{R}}\right)$ is known, and we want to determine the first two moments of $P_{t}$ for any time $t$ after the replication.
\begin{prop}
\label{prop:EPs_EPs2-2}At steady state, for a time $t\in[\tau_{R},\tau_{D}[$,
conditionally on the state of the cell at replication $(M_{\tau_{R}},P_{\tau_{R}})$,
the first two moments of $P_{t}$ are given by
\[
\Es{P_{t}}{M_{\tau_{R}},P_{\tau_{R}}} =  P_{\tau_{R}}+\lambda_{2}\left(2\frac{\lambda_{1}}{\sigma_{1}}\left(t-\tau_{R}\right)+\left(M_{\tau_{R}}-2\frac{\lambda_{1}}{\sigma_{1}}\right)\frac{1-e^{-\sigma_{1}\left(t-\tau_{R}\right)}}{\sigma_{1}}\right),
\]
\begin{align*}
\Es{P_{t}^{2}}{M_{\tau_{R}},P_{\tau_{R}}}  &=  \left(\Es{P_{t}}{M_{\tau_{R}},P_{\tau_{R}}}\right)^{2}\\
   &+M_{\tau_{R}}\frac{\lambda_{2}}{\sigma_{1}}\left(1-e^{-\sigma_{1}\left(t-\tau_{R}\right)}+\frac{\lambda_{2}}{\sigma_{1}}\left[1-e^{-\sigma_{1}\left(t-\tau_{R}\right)}\left(e^{-\sigma_{1}\left(t-\tau_{R}\right)}+2\left(t-\tau_{R}\right)\sigma_{1}\right)\right]\right)\\
   &+2\frac{\lambda_{1}\lambda_{2}}{\sigma_{1}^{2}}\left[\left(t-\tau_{R}\right)\sigma_{1}-1+e^{-\sigma_{1}\left(t-\tau_{R}\right)}+\right.\\
   &\phantom{+2\frac{\lambda_{1}\lambda_{2}}{\sigma_{1}^{2}}[[}\left.2\frac{\lambda_{2}}{\sigma_{1}}\left(\sigma_{1}\left(t-\tau_{R}\right)\cdot\left(1+e^{-\sigma_{1}\left(t-\tau_{R}\right)}\right)-2\left(1-e^{-\sigma_{1}\left(t-\tau_{R}\right)}\right)\right)\right].
\end{align*}
\end{prop}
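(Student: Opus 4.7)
The proof plan is to mirror, step by step, the derivation used before replication (Proposition~\ref{prop:MGF_Ps} and Corollary~\ref{cor:EPs_EPs2}), applied to the shifted process that starts at the replication instant $\tau_R$. Concretely, I would first write, for $s\in[\tau_R,\tau_D[$,
\[
P_s = P_{\tau_R} + \sum_{i=1}^{\infty}\int_{\tau_R}^{s}\ind{i\le M_u}\,\mathcal{N}_{\lambda_2}^{i}(\diff u),
\]
which is the analogue of Equation~\eqref{eq:P_s-2} with origin shifted to $\tau_R$. Here $\left(\mathcal{N}_{\lambda_2}^{i}\right)_{i\in\mathbb{N}}$ can be taken independent of everything that happened up to $\tau_R$, by the memorylessness of Poisson processes.

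The key observation is that the post-replication mRNA process $(M_{\tau_R+u})_{u\geq 0}$ obeys the same dynamics as the pre-replication process of Section~\ref{subsec:S_mod1_mrna_dynamic}, but with two modifications: the transcription rate becomes $2\lambda_1$ (since both daughter copies produce mRNAs independently at rate $\lambda_1$, whose superposition is a Poisson process of rate $2\lambda_1$), and the initial number of mRNAs is $M_{\tau_R}$ rather than $M_0$. Consequently, the same Marked Poisson Point Process decomposition used previously describes $(M_{\tau_R+u})$, provided we substitute $\lambda_1 \mapsto 2\lambda_1$ and $M_0 \mapsto M_{\tau_R}$.

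Next, I would repeat the Laplace functional computation of Proposition~\ref{prop:MGF_Ps}, conditioning on $(M_{\tau_R},P_{\tau_R})$. By independence of the translation Poisson processes from $(M_{\tau_R+u})$, this yields
\[
\Es{\exp(\xi P_s)}{M_{\tau_R},P_{\tau_R}} = \exp(\xi P_{\tau_R})\cdot \tilde{h}_{s-\tau_R}\!\left(\lambda_2(e^{\xi}-1)\right),
\]
where $\tilde{h}_u$ is the moment generating function of $\int_0^u M_{\tau_R+v}\,\diff v$. By the previous observation, $\tilde{h}_u$ has the exact form of $h_u$ from Proposition~\ref{prop:MGF_Ps}, after the substitutions $M_0\mapsto M_{\tau_R}$ and $\lambda_1\mapsto 2\lambda_1$.

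Finally, differentiating once and twice at $\xi=0$ as in Corollary~\ref{cor:EPs_EPs2} produces the two announced moments, with the same substitutions ($s\mapsto s-\tau_R$, $M_0\mapsto M_{\tau_R}$, $P_0\mapsto P_{\tau_R}$, $\lambda_1\mapsto 2\lambda_1$); the leading factors of $2$ in $2\lambda_1/\sigma_1$ inside $\Es{P_s}{\cdot}$ and the factor $2\lambda_1\lambda_2/\sigma_1^2$ in $\Es{P_s^2}{\cdot}$ are precisely the trace of the doubled transcription rate. There is no genuine obstacle to the calculation, which is entirely parallel to the pre-replication case; the only point requiring care is the justification that $(M_{\tau_R+u},P_{\tau_R+u})_{u\geq 0}$, conditional on $(M_{\tau_R},P_{\tau_R})$, has the Markovian structure of a fresh two-stage process with transcription rate $2\lambda_1$ — a consequence of the strong Markov property at $\tau_R$ together with the independence of the superposed Poisson point processes driving transcription and translation.
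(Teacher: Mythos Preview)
Your proposal is correct and follows essentially the same approach as the paper: the paper's proof simply states that after replication the dynamics are identical to the pre-replication case with the substitutions $(M_0,P_0)\mapsto(M_{\tau_R},P_{\tau_R})$, $\lambda_1\mapsto 2\lambda_1$, and $s\mapsto s-\tau_R$, and that one can therefore adapt Proposition~\ref{prop:MGF_Ps} and Corollary~\ref{cor:EPs_EPs2} verbatim. Your write-up spells out this adaptation in slightly more detail (the shifted representation of $P_s$, the role of $\tilde h_u$, the Markov property at the deterministic time $\tau_R$), but the argument is the same.
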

\begin{proof}
After the replication, the rate of mRNA production is doubled, but otherwise, the dynamic is identical as it was before the replication.  One can hence easily adapt the proofs of Proposition~\ref{prop:MGF_Ps} and Corollary~\ref{cor:EPs_EPs2}, by replacing the initial state by the state at replication $(M_{\tau_{R}},P_{\tau_{R}})$, by considering that the mRNA production rate is $2\lambda_{1}$, and that the time spent since the initial state is $t-\tau_{R}$.
\end{proof}
We can then integrate the previous expressions on all possible states
at replication $(M_{\tau_{R}},P_{\tau_{R}})$. It follows that
\begin{prop}
\label{prop:EPs_and_VPs-2}At any time $t\in[\tau_{R},\tau_{D}[$
after replication, depending on joint moments of $P_{\tau_{R}}$ and
$M_{\tau_{R}}$, the mean and the variance of $P_{t}$ are given by
\begin{align*}
\E{P_{t}} & =  \E{P_{\tau_{R}}}+\lambda_{2}\left(2\frac{\lambda_{1}}{\sigma_{1}}\left(t-\tau_{R}\right)+\left(x_{\tau_{R}}-2\frac{\lambda_{1}}{\sigma_{1}}\right)\frac{1-e^{-\sigma_{1}\left(t-\tau_{R}\right)}}{\sigma_{1}}\right),\\
\V{P_{t}} & =  \V{P_{\tau_{R}}}+2\lambda_{2}\frac{1-e^{-\sigma_{1}\left(t-\tau_{R}\right)}}{\sigma_{1}}\Cov{P_{\tau_{R}},M_{\tau_{R}}}+\left(\lambda_{2}\frac{1-e^{-\sigma_{1}\left(t-\tau_{R}\right)}}{\sigma_{1}}\right)^{2}x_{\tau_{R}}\\
  & +x_{\tau_{R}}\frac{\lambda_{2}}{\sigma_{1}}\left(1-e^{-\sigma_{1}\left(t-\tau_{R}\right)}+\frac{\lambda_{2}}{\sigma_{1}}\left[1-e^{-\sigma_{1}\left(t-\tau_{R}\right)}\left(e^{-\sigma_{1}\left(t-\tau_{R}\right)}+2\left(t-\tau_{R}\right)\sigma_{1}\right)\right]\right)\\
  & +2\frac{\lambda_{1}\lambda_{2}}{\sigma_{1}^{2}}\left[\left(t-\tau_{R}\right)\sigma_{1}-1+e^{-\sigma_{1}\left(t-\tau_{R}\right)}\right.\\
  &\phantom{11111111111}\left.+2\frac{\lambda_{2}}{\sigma_{1}}\left(\sigma_{1}\left(t-\tau_{R}\right)\left(1+e^{-\sigma_{1}\left(t-\tau_{R}\right)}\right)-2\left(1-e^{-\sigma_{1}\left(t-\tau_{R}\right)}\right)\right)\right],
\end{align*}
with $x_{\tau_{R}}$ as defined in Theorem~\ref{thm:xs_rep}.
\end{prop}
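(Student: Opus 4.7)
The statement is the post-replication analogue of Proposition~\ref{prop:EPs_and_VPs}, and the same strategy applies: all the dynamical work has already been done in Proposition~\ref{prop:EPs_EPs2-2}, so the remaining task is to integrate the conditional moments over the distribution of $\left(M_{\tau_R},P_{\tau_R}\right)$. Concretely, my plan is to apply the tower property for the mean and the law of total variance
\[
\V{P_s}=\E{\mathrm{Var}\left[P_s\mid M_{\tau_R},P_{\tau_R}\right]}+\V{\Es{P_s}{M_{\tau_R},P_{\tau_R}}},
\]
and then substitute the formulas supplied by Proposition~\ref{prop:EPs_EPs2-2}.

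For the first moment, taking the expectation of $\Es{P_s}{M_{\tau_R},P_{\tau_R}}$ is immediate: the expression is affine in $M_{\tau_R}$ and $P_{\tau_R}$, so replacing these by $\E{M_{\tau_R}}$ and $\E{P_{\tau_R}}$ yields the claimed formula, provided $\E{M_{\tau_R}}=x_{\tau_R}$. The latter comes directly from Theorem~\ref{thm:xs_rep} evaluated at $s=\tau_R$, which also gives $\V{M_{\tau_R}}=x_{\tau_R}$ since $M_{\tau_R}$ is Poisson. I will use both identities throughout the rest of the argument.

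For the variance, the first piece $\E{\mathrm{Var}\left[P_s\mid M_{\tau_R},P_{\tau_R}\right]}$ is read off from Proposition~\ref{prop:EPs_EPs2-2} by subtracting $\left(\Es{P_s}{M_{\tau_R},P_{\tau_R}}\right)^2$ from $\Es{P_s^2}{M_{\tau_R},P_{\tau_R}}$; the result is a linear combination of $M_{\tau_R}$ and a constant, whose expectation replaces $M_{\tau_R}$ by $x_{\tau_R}$. This produces the last two lines of the claimed formula. The second piece requires computing the variance of the affine expression
\[
\Es{P_s}{M_{\tau_R},P_{\tau_R}}=P_{\tau_R}+2\frac{\lambda_1\lambda_2}{\sigma_1}\left(s-\tau_R\right)+\lambda_2\frac{1-e^{-\sigma_1(s-\tau_R)}}{\sigma_1}\left(M_{\tau_R}-2\frac{\lambda_1}{\sigma_1}\right),
\]
which yields $\V{P_{\tau_R}}$, a variance contribution from $M_{\tau_R}$ (equal to $x_{\tau_R}$ times the square of its coefficient), and twice the cross-covariance $\Cov{P_{\tau_R},M_{\tau_R}}$ multiplied by that same coefficient. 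These are exactly the first three terms of the stated formula.

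The only real subtlety is to make sure that the conditional variance and the variance of the conditional mean do not double-count the $x_{\tau_R}$ contribution: the $x_{\tau_R}\bigl(\lambda_2(1-e^{-\sigma_1(s-\tau_R)})/\sigma_1\bigr)^2$ term originates from $\V{M_{\tau_R}}$ in the law-of-total-variance expansion, whereas the $x_{\tau_R}\tfrac{\lambda_2}{\sigma_1}\bigl(\cdots\bigr)$ term originates from $\E{\mathrm{Var}[P_s\mid\cdot\,]}$; keeping these two sources separate is the only bookkeeping step where an error would be easy to make. Once this is done, collecting all contributions gives exactly the announced expression, in perfect structural parallel to Proposition~\ref{prop:EPs_and_VPs} with the substitutions $0\mapsto\tau_R$, $s\mapsto s-\tau_R$, $\lambda_1\mapsto 2\lambda_1$, and $x_0\mapsto x_{\tau_R}$.
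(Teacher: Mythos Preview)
Your proposal is correct and follows essentially the same approach as the paper: the paper's proof simply says ``It is similar to the proof of Proposition~\ref{prop:EPs_and_VPs},'' and that proof also takes expectations of the conditional moments from the preceding result (Corollary~\ref{cor:EPs_EPs2} there, Proposition~\ref{prop:EPs_EPs2-2} here), computing $\E{\Es{P_s}{M_0,P_0}^2}-\E{P_s}^2$ separately, which is exactly your law-of-total-variance decomposition written out. Your explicit identification of the substitutions $0\mapsto\tau_R$, $s\mapsto s-\tau_R$, $\lambda_1\mapsto 2\lambda_1$, $x_0\mapsto x_{\tau_R}$ and the use of $\E{M_{\tau_R}}=\V{M_{\tau_R}}=x_{\tau_R}$ from Theorem~\ref{thm:xs_rep} are precisely what the paper relies on implicitly.
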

\begin{proof}
It is similar to the proof of Proposition~\ref{prop:EPs_and_VPs}.
\end{proof}

\paragraph*{Protein Number in the Whole Cell Cycle}

In order to have an analytic expression for the mean $\E{P_{t}}$ and variance $\V{P_{t}}$ for any time $t$ of the cell cycle, we need to have expressions for the means $\E{P_{0}}$ and $\left\langle P_{\tau_{R}}\right\rangle $, the variances $\V{P_{0}}$ and $\V{P_{\tau_{R}}}$ as well as the covariances $\Cov{P_{0},M_{0}}$ and $\Cov{P_{\tau_{R}},M_{\tau_{R}}}$.  The general idea is to use the steady state properties that give a relation between the distributions at birth and at division. Indeed, it gives:
\[
P_{\tau_{D}}\overset{\cal D}{=}P_{0}\quad\text{and}\quad\left(M_{\tau_{D}},P_{\tau_{D}}\right)\overset{\cal D}{=}\left(M_{0},P_{0}\right).
\]

Indeed, between times $\tau_{D}-$ and $\tau_{D}$, the proteins undergo a random partitioning, and since the system is at steady state, the distribution of the number of proteins after division $P_{\tau_{D}}$ is the same as the distribution of proteins at birth $P_{0}$. As a consequence:
\[
\sum_{i=1}^{P_{\tau_{D}-}}B_{i,1/2}\overset{\cal D}{=}P_{0}
\]
with $\left(B_{i,1/2}\right)$ being independent Bernoulli random
variables of parameter $1/2$ and being all independent of $P_{\tau_{D}-}$.
\begin{lem}
\label{lem:Binom_div}The mean and the variance of $P_{0}$ depend on the mean and the variance of $P_{\tau_{D}-}$ in the following way
\[
\E{P_{\tau_{D}-}}  =  2\E{P_{0}}\qquad \V{P_{\tau_{D}-}}  =  4\V{P_{0}}-2\E{P_{0}}.
\]
\end{lem}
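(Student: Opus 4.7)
The plan is to exploit the identity in distribution $P_0 \stackrel{\mathcal D}{=} \sum_{i=1}^{P_{\tau_D-}} B_{i,1/2}$ together with the standard conditioning arguments (tower property for the mean, law of total variance for the variance). The key observation is that conditionally on $P_{\tau_D-}$, the sum is simply a Binomial random variable with parameters $(P_{\tau_D-}, 1/2)$, whose conditional mean and variance are explicit.

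First I would compute the mean. Conditioning on $P_{\tau_D-}$ and using independence of the $B_{i,1/2}$ from $P_{\tau_D-}$ gives $\E{P_0 \mid P_{\tau_D-}} = P_{\tau_D-}/2$. Taking expectations yields $\E{P_0} = \E{P_{\tau_D-}}/2$, which rearranges to the first identity $\E{P_{\tau_D-}} = 2\E{P_0}$.

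Next I would compute the variance by the law of total variance:
\[
\V{P_0} = \E{\V{P_0 \mid P_{\tau_D-}}} + \V{\E{P_0 \mid P_{\tau_D-}}}.
\]
Since the conditional distribution is Binomial$(P_{\tau_D-}, 1/2)$, the conditional variance equals $P_{\tau_D-}/4$ and the conditional mean equals $P_{\tau_D-}/2$. Substituting gives $\V{P_0} = \E{P_{\tau_D-}}/4 + \V{P_{\tau_D-}}/4$. Multiplying by $4$ and using the mean identity $\E{P_{\tau_D-}} = 2\E{P_0}$ already obtained yields the second claim $\V{P_{\tau_D-}} = 4\V{P_0} - 2\E{P_0}$.

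There is essentially no obstacle here: both identities reduce to standard properties of binomial thinning, and the equilibrium hypothesis only enters through the distributional equality $P_{\tau_D} \stackrel{\mathcal D}{=} P_0$ that is already invoked above the statement. The main thing to be careful about is the order of operations, namely that the mean identity is used inside the variance computation to turn $\E{P_{\tau_D-}}$ into $2\E{P_0}$, so the mean computation should come first.
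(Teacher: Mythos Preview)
Your argument is correct. Both identities are immediate consequences of binomial thinning, and your use of the tower property together with the law of total variance is the cleanest way to get them.

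The paper takes a slightly different route: it computes the moment-generating function of $P_0$ via
\[
\E{e^{\xi P_0}} = \E{\left(\frac{1+e^{\xi}}{2}\right)^{P_{\tau_D-}}} = \eta\!\left(\log\frac{1+e^{\xi}}{2}\right),
\]
where $\eta$ is the moment-generating function of $P_{\tau_D-}$, and then differentiates twice at $\xi=0$ to extract $\E{P_0}$ and $\E{P_0^2}$. This is consistent with the paper's broader strategy of handling the protein distribution through generating functions (as in Proposition~\ref{prop:MGF_Ps}), but for this particular lemma it is more work than necessary. Your conditioning argument is more elementary and avoids any differentiation; the only advantage of the MGF route is that it would generalize immediately if one wanted higher moments or the full distribution of $P_0$ in terms of that of $P_{\tau_D-}$.
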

\begin{proof}
With the moment-generating function of $P_{0}$, one gets
\[
\E{\exp\left[\xi P_{0}\right]}=\E{\prod_{1=1}^{P_{\tau_{D}-}}\E{\exp\left[B_{i,1/2}\right]}}=\E{\left(\frac{1+e^{\xi}}{2}\right)^{P_{\tau_{D}-}}}=\E{\exp\left[\log\left(\frac{1+e^{\xi}}{2}\right)P_{\tau_{D}-}\right]}
\]
As a consequence, by denoting $\eta(\xi):=\E{\exp\left[\xi P_{\tau_{D}-}\right]}$
the moment generating function of $P_{\tau_{D}-}$, it follows:
\begin{align*}
\frac{\diff}{\diff\xi}\E{\exp\left[\xi P_{0}\right]} & =  \frac{e^{\xi}}{1+e^{\xi}}\cdot\eta'\left(\log\left(\frac{1+e^{\xi}}{2}\right)\right)\\
\frac{\diff^{2}}{\diff\xi^{2}}\E{\exp\left[\xi P_{0}\right]} & =  \frac{e^{\xi}}{\left(1+e^{\xi}\right)^{2}}\cdot\eta'\left(\log\left(\frac{1+e^{\xi}}{2}\right)\right)+\left(\frac{e^{\xi}}{1+e^{\xi}}\right)^{2}\cdot\eta''\left(\log\left(\frac{1+e^{\xi}}{2}\right)\right).
\end{align*}
As $\xi$ goes to $0$, one getss
\[
\E{P_{0}}  =  \frac{\E{P_{\tau_{D}-}}}{2} \text{ and }
\E{P_{0}^{2}}  =  \frac{1}{4}\cdot\E{P_{\tau_{D}-}}+\frac{1}{4}\cdot\E{P_{\tau_{D}-}^{2}}.
\]
The lemma is proved.
\end{proof}

We then use this Lemma to calculate the means $\E{P_{0}}$ and $\left\langle P_{\tau_{R}}\right\rangle $
and the variances $\V{P_{0}}$ and $\V{P_{\tau_{R}}}$.
\begin{prop}
\label{prop:EP0_EPtauR}For $\eta=1,2$, denote,
\[
f_{\eta}(t):=\eta\frac{\lambda_{1}}{\sigma_{1}}\left(t-\tau\right)+\left(x_{\tau}-\eta\frac{\lambda_{1}}{\sigma_{1}}\right)\frac{1-e^{-\sigma_{1}\left(t-\tau\right)}}{\sigma_{1}}
\]
with $\tau=0$ in the case of $\eta=1$ (before replication) and
$\tau=\tau_{R}$ for the case $\eta=2$ (after replication). In that
case, we have that:
\[
\E{P_{0}}  =\lambda_{2}\left(f_{1}\left(\tau_{R}\right)+f_{2}\left(\tau_{D}\right)\right) \text{ and }\E{P_{\tau_{R}}}  =\lambda_{2}\left(2f_{1}\left(\tau_{R}\right)+f_{2}\left(\tau_{D}\right)\right).
\]
\end{prop}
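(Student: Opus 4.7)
The plan is to exploit equilibrium together with the two formulas for $\E{P_s}$ (before and after replication) already given in Propositions~\ref{prop:EPs_and_VPs} and~\ref{prop:EPs_and_VPs-2}, and to close the system with the division lemma~\ref{lem:Binom_div}. The notation $f_\eta$ is built precisely so that those two propositions read $\E{P_s}=\E{P_0}+\lambda_2 f_1(s)$ for $s\in[0,\tau_R[$ and $\E{P_s}=\E{P_{\tau_R}}+\lambda_2 f_2(s)$ for $s\in[\tau_R,\tau_D[$.

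First I would use the fact that replication is an event affecting only the transcription rate, not the protein number itself, so that the process $(P_s)$ has no jump at $s=\tau_R$. Taking the left limit of the pre-replication formula at $s=\tau_R$ therefore gives the continuity relation
\begin{equation*}
\E{P_{\tau_R}} \;=\; \E{P_0} + \lambda_2 f_1(\tau_R).
\end{equation*}
Next, evaluating the post-replication formula at $s=\tau_D^-$ yields
\begin{equation*}
\E{P_{\tau_D-}} \;=\; \E{P_{\tau_R}} + \lambda_2 f_2(\tau_D).
\end{equation*}
Finally, Lemma~\ref{lem:Binom_div}, applied under the equilibrium assumption $P_{\tau_D}\overset{\mathcal D}{=}P_0$, gives $\E{P_0}=\E{P_{\tau_D-}}/2$.

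Combining these three identities I would obtain the single equation
\begin{equation*}
\E{P_0} \;=\; \tfrac{1}{2}\bigl(\E{P_0} + \lambda_2 f_1(\tau_R) + \lambda_2 f_2(\tau_D)\bigr),
\end{equation*}
which solves into $\E{P_0}=\lambda_2\bigl(f_1(\tau_R)+f_2(\tau_D)\bigr)$, and plugging back into the continuity relation at $\tau_R$ yields $\E{P_{\tau_R}}=\lambda_2\bigl(2f_1(\tau_R)+f_2(\tau_D)\bigr)$, as claimed.

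There is no real obstacle in this argument: each piece has already been established. The only subtlety worth flagging explicitly is that the pre- and post-replication formulas depend on the initial moments of $(M_0,P_0)$ and $(M_{\tau_R},P_{\tau_R})$ only through the \emph{mean} $x_\tau$ of the mRNA count at the corresponding instant, and these means are known in closed form from Theorem~\ref{thm:xs_rep} (with $x_0$ given by Proposition~\ref{prop:Distrib_M_0}). Thus the functions $f_1,f_2$ are fully explicit, and the fixed-point equation for $\E{P_0}$ is genuinely scalar in $\E{P_0}$, which is what makes the solution so immediate.
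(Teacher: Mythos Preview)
Your argument is correct and follows exactly the same route as the paper: evaluate the mean formulas of Propositions~\ref{prop:EPs_and_VPs} and~\ref{prop:EPs_and_VPs-2} at $s=\tau_R$ and $s=\tau_D-$, then close the resulting relation with Lemma~\ref{lem:Binom_div} to obtain a single scalar equation for $\E{P_0}$. The paper's proof is simply a terser write-up of the same three steps.
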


\begin{proof}
With Propositions~\ref{prop:EPs_and_VPs} and~\ref{prop:EPs_and_VPs-2},
one gets
\[
\E{P_{\tau_{D}}}  =\E{P_{\tau_{R}}}+\lambda_{2}\left(f_{2}\left(\tau_{D}\right)\right)
  =\E{P_{0}}+\lambda_{2}\left(f_{1}\left(\tau_{R}\right)+f_{2}\left(\tau_{D}\right)\right).
\]
We conclude with the Lemma~\ref{lem:Binom_div}.
\end{proof}

\begin{prop}
\label{prop:EP0_VP0-1}For $\eta=1,2$, define
\begin{multline*}
g_{\eta}(t)  :=\left(\lambda_{2}\frac{1-e^{-\sigma_{1}\left(t-\tau\right)}}{\sigma_{1}}\right)^{2}x_{\tau}\\
  +x_{\tau}\frac{\lambda_{2}}{\sigma_{1}}\left(1-e^{-\sigma_{1}\left(t-\tau\right)}+\frac{\lambda_{2}}{\sigma_{1}}\left[1-e^{-\sigma_{1}\left(t-\tau\right)}\left(e^{-\sigma_{1}\left(t-\tau\right)}+2\left(t-\tau\right)\sigma_{1}\right)\right]\right)\\
  +\eta\frac{\lambda_{1}\lambda_{2}}{\sigma_{1}^{2}}\left[\left(t-\tau\right)\sigma_{1}-1+e^{-\sigma_{1}\left(t-\tau\right)} +2\frac{\lambda_{2}}{\sigma_{1}}\left(\sigma_{1}\left(t-\tau\right)\left(1+e^{-\sigma_{1}\left(t-\tau\right)}\right)-2\left(1-e^{-\sigma_{1}\left(t-\tau\right)}\right)\right)\right].
\end{multline*}
with $\tau=0$ in the case of $\eta=1$ (before replication) and
$\tau=\tau_{R}$ for the case $\eta=2$ (after replication). In that
case, we have that:
\begin{align*}
\V{P_{0}}  =&\frac{1}{3}\left\{ 2\E{P_{0}}+\phantom{\frac{1}{1}}\right.\\
            &\phantom{111} 2\frac{\lambda_{2}}{\sigma_{1}}\left[\left(1-e^{-\sigma_{1}\tau_{R}}\right)\Cov{P_{0},M_{0}}+\left(1-e^{-\sigma_{1}\left(\tau_{D}-\tau_{R}\right)}\right)\Cov{P_{\tau_{R}},M_{\tau_{R}}}\right]\\
            &\phantom{111} \left.+g_{1}\left(\tau_{R}\right)+g_{2}\left(\tau_{D}\right)\vphantom{2\E{P_{0}}+2\frac{\lambda_{2}}{\sigma_{1}}\left[\left(1-e^{-\sigma_{1}\tau_{R}}\right)\Cov{P_{0},M_{0}}+\left(1-e^{-\sigma_{1}\left(\tau_{D}-\tau_{R}\right)}\right)\Cov{P_{\tau_{R}},M_{\tau_{R}}}\right]}\right\} .
\end{align*}
\end{prop}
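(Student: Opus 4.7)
The plan is to exploit the equilibrium constraint $P_0 \overset{\mathcal{D}}{=} P_{\tau_D}$ in combination with the deterministic variance propagation given by Propositions~\ref{prop:EPs_and_VPs} and~\ref{prop:EPs_and_VPs-2}, together with the binomial partitioning relation of Lemma~\ref{lem:Binom_div}. The entire proof reduces to setting up one linear equation in $\V{P_0}$ and solving for it.

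First, I would apply Proposition~\ref{prop:EPs_and_VPs} at $s=\tau_R$. Inspecting the statement, the terms not involving $\V{P_0}$ or $\Cov{P_0, M_0}$ are exactly what is packaged into $g_1(\tau_R)$, so this gives
\[
\V{P_{\tau_R}} = \V{P_0} + 2\lambda_2 \frac{1-e^{-\sigma_1\tau_R}}{\sigma_1}\Cov{P_0, M_0} + g_1(\tau_R).
\]
Next, I would apply Proposition~\ref{prop:EPs_and_VPs-2} at $s = \tau_D-$; the same grouping gives
\[
\V{P_{\tau_D-}} = \V{P_{\tau_R}} + 2\lambda_2 \frac{1-e^{-\sigma_1(\tau_D-\tau_R)}}{\sigma_1}\Cov{P_{\tau_R}, M_{\tau_R}} + g_2(\tau_D).
\]
Substituting the first identity into the second expresses $\V{P_{\tau_D-}}$ as $\V{P_0}$ plus the two covariance contributions plus $g_1(\tau_R) + g_2(\tau_D)$.

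Then I would close the loop via Lemma~\ref{lem:Binom_div}, which under the equilibrium hypothesis gives $\V{P_{\tau_D-}} = 4\V{P_0} - 2\E{P_0}$. Equating the two expressions for $\V{P_{\tau_D-}}$ yields
\[
4\V{P_0} - 2\E{P_0} = \V{P_0} + 2\frac{\lambda_2}{\sigma_1}\bigl[(1-e^{-\sigma_1\tau_R})\Cov{P_0,M_0} + (1-e^{-\sigma_1(\tau_D-\tau_R)})\Cov{P_{\tau_R}, M_{\tau_R}}\bigr] + g_1(\tau_R) + g_2(\tau_D),
\]
and isolating $\V{P_0}$ on the left gives a factor of $3$ on that side, whence division by $3$ recovers precisely the claimed formula.

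The potential obstacle is less in the structural argument than in the bookkeeping: one must check that the $g_\eta$ functions as defined in the proposition really do collect \emph{all} the terms of $\V{P_s}$ from Propositions~\ref{prop:EPs_and_VPs} and~\ref{prop:EPs_and_VPs-2} that are independent of the unknowns $\V{P_0}, \Cov{P_0,M_0}, \Cov{P_{\tau_R},M_{\tau_R}}$. In particular, the $x_\tau$-dependent quadratic term in $g_\eta$ only arises after substituting $x_{\tau_R}$ from Theorem~\ref{thm:xs_rep}, so care must be taken to recognize that $\E{M_{\tau_R}} = \V{M_{\tau_R}} = x_{\tau_R}$ (Poisson property) so that the $\V{\cdot}$ and $\E{\cdot}$ contributions combine cleanly into a single coefficient of $x_{\tau_R}$. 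Everything else is routine linear algebra. A parallel argument, left implicit in the statement, would be needed to eliminate $\Cov{P_0, M_0}$ and $\Cov{P_{\tau_R}, M_{\tau_R}}$ and obtain a truly closed-form answer; this is presumably carried out separately in the paper.
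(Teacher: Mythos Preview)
Your proposal is correct and follows essentially the same route as the paper: apply Proposition~\ref{prop:EPs_and_VPs} at $s=\tau_R$ and Proposition~\ref{prop:EPs_and_VPs-2} at $s=\tau_D-$, substitute, invoke Lemma~\ref{lem:Binom_div} to get $\V{P_{\tau_D-}}=4\V{P_0}-2\E{P_0}$, and solve the resulting linear equation for $\V{P_0}$. Your remark about the covariances being handled separately is also accurate; the paper does this in the subsequent Proposition~\ref{prop:Cov_P0_M0}.
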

\begin{proof}
By considering the expressions of Proposition~\ref{prop:EPs_and_VPs-2}
for $t=\tau_{D}-$,
\[
\V{P_{\tau_{D}-}}=\V{P_{\tau_{R}}}+2\lambda_{2}\frac{1-e^{-\sigma_{1}\left(\tau_{D}-\tau_{R}\right)}}{\sigma_{1}}\Cov{P_{\tau_{R}},M_{\tau_{R}}}+g_{2}\left(\tau_{D}\right).
\]
Similarly, the expression of Proposition~\ref{prop:EPs_and_VPs}
for $t=\tau_{R}-$ gives the expression of $\V{P_{\tau_{R}}}$ by
continuity. We have
\begin{align*}
\V{P_{\tau_{D}-}}  = & \V{P_{0}}+2\frac{\lambda_{2}}{\sigma_{1}}\left[\left(1-e^{-\sigma_{1}\tau_{R}}\right)\Cov{P_{0},M_{0}}+\left(1-e^{-\sigma_{1}\left(\tau_{D}-\tau_{R}\right)}\right)\Cov{P_{\tau_{R}},M_{\tau_{R}}}\right]\\
   &+g_{1}\left(\tau_{R}\right)+g_{2}\left(\tau_{D}\right).
\end{align*}
Lemma~\ref{lem:Binom_div} describes the effect of the binomial sampling
between $\tau_{D}-$ and $\tau_{D}$ on the mean and the variance
of $P$. Since, we are at steady state of cell cycles, one has
\begin{align*}
3\V{P_{0}}  = & 2\E{P_{0}}+2\frac{\lambda_{2}}{\sigma_{1}}\left[\left(1-e^{-\sigma_{1}\tau_{R}}\right)\Cov{P_{0},M_{0}}+\left(1-e^{-\sigma_{1}\left(\tau_{D}-\tau_{R}\right)}\right)\Cov{P_{\tau_{R}},M_{\tau_{R}}}\right]\\
   &+g_{1}\left(\tau_{R}\right)+g_{2}\left(\tau_{D}\right).
\end{align*}
\end{proof}
The expression of $\V{P_{\tau_{R}}}$ can then be deduced from Proposition~\ref{prop:EPs_and_VPs}.
\begin{prop}
\label{prop:Cov_P0_M0}For $\eta=1,2$, define
\begin{align*}
k_{\eta}\left(t\right) & :=  \frac{\eta\lambda_{1}\lambda_{2}}{\sigma_{1}^{2}}\E{M_{\tau}}e^{-\left(t-\tau\right)\sigma_{1}}\left(\left(t-\tau\right)\sigma_{1}-\left(1-e^{-\sigma_{1}\left(t-\tau\right)}\right)\right)\\
  & +\frac{\eta\lambda_{1}}{\sigma_{1}}\E{P_{\tau}}\left(1-e^{-\left(t-\tau\right)\sigma_{1}}\right)
   +\frac{\eta\lambda_{1}\lambda_{2}}{\sigma_{1}^{2}}\E{M_{\tau}}\left(1-e^{-\left(t-\tau\right)\sigma_{1}}\right)^{2}\\
  & +\frac{\lambda_{2}}{\sigma_{1}}e^{-\left(t-\tau\right)\sigma_{1}}\left(\left(\E{M_{\tau}^{2}}-\E{M_{\tau}}\right)\left(1-e^{-\sigma_{1}\left(t-\tau\right)}\right)+\sigma_{1}\left(t-\tau\right)\E{M_{\tau}}\right)\\
  & +\frac{\eta\lambda_{1}\lambda_{2}}{\sigma_{1}^{2}}\left[\frac{\eta\lambda_{1}}{\sigma_{1}}\left(1-e^{-\left(t-\tau\right)\sigma_{1}}\right)\left(\left(t-\tau\right)\sigma_{1}-\left(1-e^{-\sigma_{1}\left(t-\tau\right)}\right)\right)+\right.\\
  & \phantom{111111111} \left.\left(1-e^{-\sigma_{1}\left(t-\tau\right)}\left(\left(t-\tau\right)\sigma_{1}+1\right)\right)\right].
\end{align*}
with $\tau=0$ in the case of $\eta=1$ (before replication) and
$\tau=\tau_{R}$ for the case $\eta=2$ (after replication). In that
case,  the covariances can be expressed as
\[
\Cov{M_{0},P_{0}}=\frac{1}{\left(4-e^{-\tau_{D}\sigma_{1}}\right)}\left\{ k_{1}\left(\tau_{R}\right)e^{-\left(\tau_{D}-\tau_{R}\right)\sigma_{1}}+k_{2}\left(\tau_{D}\right)\right\} -\E{M_{0}}\E{P_{0}}
\]
and
\[
\Cov{M_{\tau_{R}},P_{\tau_{R}}}=\left(\Cov{M_{0},P_{0}}+\E{M_{0}}\E{P_{0}}\right)e^{-\tau_{R}\sigma_{1}}+k_{1}\left(\tau_{R}\right)-\E{M_{\tau_{R}}}\E{P_{\tau_{R}}}.
\]
\end{prop}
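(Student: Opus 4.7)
The plan is to extend the same strategy used for the variance computation: derive an ODE governing $\E{M_s P_s}$ on each of the two intervals $[0,\tau_R)$ and $[\tau_R,\tau_D)$, solve it with the inputs already provided by Theorem~\ref{thm:xs_rep} and Propositions~\ref{prop:EPs_and_VPs}--\ref{prop:EPs_and_VPs-2}, and then close the system using the equilibrium identity together with the binomial partitioning at division.

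First I would write down the infinitesimal dynamics of the product $M_s P_s$. The three transitions that change it are: an mRNA creation (at rate $\eta\lambda_1$) shifts $MP\mapsto MP+P$; an mRNA degradation (at rate $\sigma_1 M_s$) shifts $MP\mapsto MP-P$; a protein creation (at rate $\lambda_2 M_s$) shifts $MP\mapsto MP+M$. Taking expectations conditional on the state at time $\tau$ (where $\tau=0$, $\eta=1$ before replication and $\tau=\tau_R$, $\eta=2$ after), one gets the linear ODE
\[
\frac{d}{ds}\E{M_s P_s\mid\mathcal{F}_\tau}=\eta\lambda_1\E{P_s\mid\mathcal{F}_\tau}-\sigma_1\E{M_s P_s\mid\mathcal{F}_\tau}+\lambda_2\E{M_s^2\mid\mathcal{F}_\tau}.
\]
Multiplying by $e^{\sigma_1 s}$ turns this into a pure integration, with inhomogeneous terms coming from the already known $\E{P_s}$ (Propositions~\ref{prop:EPs_and_VPs} and \ref{prop:EPs_and_VPs-2}) and $\E{M_s^2}=\E{M_s}+\E{M_s}^2$ (since $M_s$ is Poisson). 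After taking full expectations, the solution takes the form $\E{M_s P_s}=e^{-\sigma_1(s-\tau)}\E{M_\tau P_\tau}+k_\eta(s)$, and the long but mechanical reorganization of the resulting integrals produces precisely the expression for $k_\eta$ stated in the proposition.

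Second, I would chain the two phases. Specializing at $s=\tau_R$ and at $s=\tau_D-$ gives
\[
\E{M_{\tau_R}P_{\tau_R}}=e^{-\sigma_1\tau_R}\E{M_0P_0}+k_1(\tau_R),\qquad \E{M_{\tau_D-}P_{\tau_D-}}=e^{-\sigma_1(\tau_D-\tau_R)}\E{M_{\tau_R}P_{\tau_R}}+k_2(\tau_D),
\]
noting that the replication event leaves $M$ and $P$ unchanged and only doubles the transcription rate.

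Third, I would use the binomial partitioning. Because each mRNA and each protein is assigned independently to the tracked daughter with probability $1/2$ and these assignments are all independent, conditioning on $(M_{\tau_D-},P_{\tau_D-})$ gives $\E{M_{\tau_D}P_{\tau_D}\mid M_{\tau_D-},P_{\tau_D-}}=M_{\tau_D-}P_{\tau_D-}/4$. Combined with the equilibrium identity $\E{M_0P_0}=\E{M_{\tau_D}P_{\tau_D}}$, this collapses the two-step recursion into
\[
(4-e^{-\sigma_1\tau_D})\E{M_0P_0}=e^{-\sigma_1(\tau_D-\tau_R)}k_1(\tau_R)+k_2(\tau_D),
\]
which yields the formula for $\Cov{M_0,P_0}=\E{M_0P_0}-\E{M_0}\E{P_0}$; substitution into the first chaining relation then gives the expression for $\Cov{M_{\tau_R},P_{\tau_R}}$. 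The serious obstacle is really only the bookkeeping in step one: verifying that the explicit integration produces the stated closed form for $k_\eta(s)$ is a lengthy but routine calculation, entirely analogous to the manipulations already performed for $\V{P_s}$ in the proof of Proposition~\ref{prop:EPs_and_VPs}.
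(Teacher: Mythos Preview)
Your proposal is correct and follows essentially the same strategy the paper indicates: compute the mixed moment $\E{M_sP_s}$ on each sub-interval, chain the two phases, and close the recursion with the binomial partitioning at division together with the equilibrium condition $\E{M_0P_0}=\E{M_{\tau_D}P_{\tau_D}}$. The paper's own proof is only a one-line reference (``follows the same arguments as in the proof of the previous propositions''), and your outline is precisely that argument made explicit.

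The only methodological difference is that the paper, in the earlier propositions, accesses the moments by differentiating the moment-generating function of $P_s$ (Proposition~\ref{prop:MGF_Ps}), whereas you write down the Kolmogorov forward equation for $\E{M_sP_s}$ directly via the generator. Both routes are standard and lead to the same linear ODE; your version is arguably the more economical one for a single mixed moment, while the MGF route would be preferable if one wanted a full joint distribution. Either way the remaining work is exactly the bookkeeping you describe, and your identification $\E{M_{\tau_D}P_{\tau_D}\mid M_{\tau_D-},P_{\tau_D-}}=M_{\tau_D-}P_{\tau_D-}/4$ (which relies on the independence of the Bernoulli assignments across mRNAs and proteins) is the key step that produces the factor $4-e^{-\sigma_1\tau_D}$.
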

\begin{proof}
The proof follows the same arguments as in the proofs of the previous propositions. Details of the calculations can be found in Chapter~3 of \cite{dessalles_stochastic_2017}.
\end{proof}

\subsection{\label{subsec:S_mod2_param}Parameter Estimation}

As in the previous intermediate model, we set the doubling time $\tau_{D}$ to $150\,\text{min}$ and the volume at birth $V_{0}=1.3\,\text{\textmu m}^{3}$.  For each gene, we have to determine four different parameters $\lambda_{1}$, $\sigma_{1}$, $\lambda_{2}$ and $\tau_{R}$. We have considered the genes of \cite{taniguchi_quantifying_2010} for which the empirical mean of messengers $\mu_{m}$ and proteins $\mu_{p}$ concentrations, as well as the mRNA half-life time $\tau_{m}$ have been measured.  We still deduce the mRNA degradation rate $\sigma_{1}$ with the mRNA half-life time $\tau_{m}$ (such that $\sigma_{1}=\log2/\tau_{m}$).


\subsubsection{\label{subsec:AppGene-replication-times-1}A Model for the Instants of Gene Replication}

In this model, the time at which each gene is replicated is estimated as follows: we first determine the time of DNA replication initiation (the time $\tau_{I}$ in the cell cycle); as we consider that the DNA-polymerase replicates DNA at constant speed, we can deduce the time of replication of each gene only by knowing its position in the DNA.

The article \cite{wallden_fluctuations_2015} investigates the replication initiation. It is shown that the initiation occurs at a fixed volume per replication origin, and thus independently from the time since the previous division. Furthermore, this  volume seems to be constant for different conditions. For slow growing bacteria (with only one DNA replication per cell cycle), such as those in \cite{taniguchi_quantifying_2010}, the volume at which DNA replication initiation occurs is $V_{I}=1.8\,\text{\textmu m}^{3}$.  As in our model, the volume is considered as growing exponentially, we define the time of replication initiation $\tau_{I}$ as
\[
\tau_{I}=\frac{\tau_{D}}{\log2}\log\frac{V_{I}}{V_{0}}.
\]
The initiation of DNA replication occurs at $\tau_{I}$, the remaining delay to gene replication of each gene is considered as deterministic (we consider the speed of DNA replication as  constant).  The whole chromosome is replicated in around $40\,\text{min}$ \citep{grant_dnaa_2011}, therefore the distance of the gene from the origin of replication is sufficient to determine the time it takes for the DNA-polymerase to replicate it. The position of each gene was determined with Ecogene database \citep{zhou_ecogene_2013}.

\subsubsection{Estimation of $\lambda_{1}$ and $\lambda_{2}$ in an  Homogeneous  Population}

We still have to determine the rates $\lambda_{1}$ and
$\lambda_{2}$. One can interpret the empirical
average mRNA and protein concentration of the experiment (respectively
$\mu_{m}$ and $\mu_{p}$) as the global average of mRNA and protein
concentrations of the model (respectively $\Ehat{M/V}$ and $\Ehat{P/V}$).

Contrary to the previous intermediate model, the mean concentrations
$\E{M_{t}/V(t)}$ and $\E{P_{t}/V(t)}$ change during the cell cycle.
As depicted in Equation~(6)  
 of the main article, in
order to consider the concentrations $\Ehat{M/V}$ and $\Ehat{P/V}$
averaged over the cell population, one have to explicit the age distribution
$u$ of the population. We consider at first that the distribution
is homogeneous between age $0$ and $\tau_{D}$ (see Section~\ref{subsec:other_pop_distrib}
for a more realistic distribution). Then, the global averages are
known through the integration over the cell cycle of the mean formulas
of Theorem~\ref{thm:xs_rep} and the Propositions~\ref{prop:EPs_and_VPs}
and~\ref{prop:EPs_and_VPs-2} we can write the global average of
mRNA and protein concentrations as
\begin{align*}
\Ehat{M/V} & =  \frac{\lambda_{1}}{\sigma_{1}}\frac{1}{\tau_{D}}\int_{0}^{\tau_{D}}\frac{1}{V_{0}2^{t/\tau_{D}}}\left(1-\frac{e^{-(t+\tau_{D}-\tau_{R})\sigma_{1}}}{2-e^{-\tau_{D}\sigma_{1}}}+\ind{t\geq\tau_{R}}\left(1-e^{-(t-\tau_{R})\sigma_{1}}\right)\right)\,\diff t,\\
\Ehat{P/V} & =  \lambda_{2}\frac{1}{\tau_{D}}\int_{0}^{\tau_{D}}\frac{1}{V_{0}2^{t/\tau_{D}}}\left(f_{1}\left(\tau_{R}\right)+f_{2}\left(\tau_{D}\right)+f_{1}\left(\tau_{R}\wedge t\right)+\ind{t\geq\tau_{R}}f_{2}(t)\right)\,\diff t.
\end{align*}
As a consequence, parameters $\lambda_{1}$ and $\lambda_{2}$ can
be expressed as follows:
\begin{align*}
\lambda_{1} & =  \sigma_{1}\tau_{D}\mu_{m}\left(\int_{0}^{\tau_{D}}\frac{1}{V_{0}2^{t/\tau_{D}}}\left(1-\frac{e^{-(t+\tau_{D}-\tau_{R})\sigma_{1}}}{2-e^{-\tau_{D}\sigma_{1}}}+\ind{t\geq\tau_{R}}\left(1-e^{-(t-\tau_{R})\sigma_{1}}\right)\right)\,\diff t\right)^{-1},\\
\lambda_{2} & =  \tau_{D}\mu_{p}\left(\int_{0}^{\tau_{D}}\frac{1}{V_{0}2^{t/\tau_{D}}}\left(f_{1}\left(\tau_{R}\right)+f_{2}\left(\tau_{D}\right)+f_{1}\left(\tau_{R}\wedge t\right)+\ind{t\geq\tau_{R}}f_{2}(t)\right)\,\diff t\right)^{-1}.
\end{align*}
For each gene, all parameters can be hence determined.

\subsubsection{\label{subsec:other_pop_distrib}Impact of the Distribution of the Population of Cells}

As previously noticed, the definitions of $\Ehat{M/V}$ and $\Ehat{P/V}$ depends on the population age distribution. In real experimental populations of cells (like in \cite{taniguchi_quantifying_2010}) the number of cells in the population is exponentially growing: any dividing cell gives birth to two daughter cells. The distribution of ages is therefore not uniform.

Using a classic age distribution $u$ in the definitions $\Ehat{M/V}$ and $\Ehat{P/V}$ (Equation~(6) of the main article) for exponentially growing populations (see \cite{collins_rate_1962,sharpe_bacillus_1998,robert_division_2014} for instance), we have performed a parameter estimation that takes into account this effect. For any gene, protein variance is estimated in both cases: either with an uniform population or an exponentially growing population. The variances in both cases are almost identical (the histogram ratio of both variances is centered around $1$ with a standard deviation of $8\cdot10^{-3}$).

The distribution considered does  not have a significant impact on the variance of the model. This is due to the fact that the mean concentration $\V{P_{t}/V(t)}$ of any protein remains approximately constant during the cell cycle, there is therefore no significant difference of protein concentration dosage at the beginning or at the end of the cell cycle. We observe the same effect in the case of the complete model of next section (with the sharing of RNA-polymerases and ribosomes).

\section{Impact of the Sharing of RNA-Polymerases and Ribosomes}

\subsection{\label{subsec:S_mod3_Detailed_descr}A Detailed Description of the Model}

The unit of production of one particular protein is presented in \ref{fig:app_mod3_Model-schema}.
We recall that, for any time $t$, the copy number of the $i$-th
gene is $G_{i}(t)$, the number of mRNA is $M_{i}(t)$ and the number
of proteins is $P_{i}(t)$, the number of RNA-polymerases sequestered
on the $i$-th gene is $E_{Y,i}(t)$ and the number of ribosomes sequestered
on an mRNA of type $i$ is $E_{R,i}(t)$. The number of non-sequestered
RNA-polymerases and ribosomes are respectively denoted as $F_{Y}(t)$
and $F_{R}(t)$.

\begin{figure}
    \centering
    \resizebox{\textwidth}{!}{
    \begin{tikzpicture}
    \matrix [column sep=29mm, row sep=10mm,ampersand replacement=\&] {
      \&
      \node[gray] (fy) [draw, shape=rectangle] {$F_Y$}; \&
      \&
      \node[gray] (fr) [draw, shape=rectangle] {$F_R$}; \&
      \\

      \node (g) [draw, shape=rectangle,blue] {$G_i$}; \&
      \node (ey) [draw, shape=rectangle,blue] {$E_{Y,i}$}; \&
      \node (m) [draw, shape=rectangle,blue] {$M_i$}; \&
      \node (er) [draw, shape=rectangle,blue] {$E_{R,i}$}; \&
      \node (p) [draw, shape=rectangle,blue] {$P_i$};\\

      \&
      \node (none1) [draw=none,fill=none,blue] {}; \&
      \node (empty) [draw=none,fill=none,blue] {$\emptyset$}; \&
      \node (none2) [draw=none,fill=none] {}; \&
      \node (none3) [draw=none,fill=none,color=red,align=center] {
    	Divisions at \\the volume $2 V_0$}; \&

      \\
       \node (gj) [draw, shape=rectangle,green] {$G_j$}; \&
      \node (eyj) [draw, shape=rectangle,green] {$E_{Y,j}$}; \&
      \node (mj) [draw, shape=rectangle,green] {$M_j$}; \&
      \node (erj) [draw, shape=rectangle,green] {$E_{R,j}$}; \&
      \node (pj) [draw, shape=rectangle,green] {$P_j$}; \\
      \&
      \node (none1j) [draw=none,fill=none,green] {}; \&
      \node (emptyj) [draw=none,fill=none,green] {$\emptyset$}; \&
      \node (none2j) [draw=none,fill=none,green] {}; \&
      \node (none3j) [draw=none,fill=none,color=red,align=center] {
    	Divisions at \\the volume $2 V_0$}; \\
    };
    \draw[->,blue!50,thick] (fy.west)  to [left,bend right=70]  (ey.west);
    \draw[->,blue!50,thick] (ey.east)  to [left,bend right=70]  (fy.east);
    \draw[->,blue!50,thick] (fr.west)  to [left,bend right=70]  (er.west);
    \draw[->,blue!50,thick] (er.east)  to [left,bend right=70]  (fr.east);
    \draw[->,green!50,thick] (fy.west)  to [left,bend right=70]  (eyj.west);
    \draw[->,green!50,thick] (eyj.east)  to [left,bend right=70]  (fy.east);
    \draw[->,green!50,thick] (fr.west)  to [left,bend right=70]  (erj.west);
    \draw[->,green!50,thick] (erj.east)  to [left,bend right=70]  (fr.east);

    \draw[->,blue] (g) -- (ey);
      \draw[->,blue] (g.east) -- +(1,0) |- (none1.center)  -- +(1.7,0) |- (m.west);
      \draw[blue] (none1.center) node[above] {$\lambda_{1,i} G_i  F_Y/V$};
    \draw[blue] (ey.east)+(-0.85,.9) node[right] {$\mu_{1,i} E_{Y,i}$};
      \draw[gray] (fy.east)+(0.3,0)  node[right,above] {\tiny +1};
      \draw[gray] (fy.west)+(-0.3,0) node[left,above]  {\tiny -1};
    \draw[->,blue] (m) -- (empty) node[midway,right] {$\sigma_{1,i} M_i$};
    \draw[->,blue] (m) -- (er) node[above,midway] {$\lambda_{2,i} M_i  F_R /V$};
    \draw[->,blue] (er) -- (p) node[above,midway] {$\mu_{2,i} E_{R,i}$};
      \draw[gray] (fr.east)+(0.3,0)  node[right,above] {\tiny +1};
      \draw[gray] (fr.west)+(-0.3,0) node[left,above]  {\tiny -1};
      \draw[blue] (m.west)+(-0.2,0) node[left,above]  {\tiny +1};
      \draw[blue] (m.south)+(-0.2,0) node[left,below]  {\tiny -1};
      \draw[blue] (p.west)+(-0.2,0) node[left,above]  {\tiny +1};

    \draw[-> ,red] (m) -- (none3);
    \draw[-> ,red] (p) -- (none3);

    \draw[->,green] (gj) -- (eyj);
      \draw[->,green] (gj.east) -- +(1,0) |- (none1j.center)  -- +(1.7,0) |- (mj.west);
      \draw[green] (none1j.center) node[above] {$\lambda_{1,j} G_j  F_Y/V$};
    \draw[green] (eyj.east)+(-0.4,.9) node[right] {$\mu_{1,j} E_{Y,j}$};
    \draw[->,green] (mj) -- (emptyj) node[midway,right] {$\sigma_{1,j} M_j$};
    \draw[->,green] (mj) -- (erj) node[above,midway] {$\lambda_{2,j} M_j  F_R /V$};
    \draw[->,green] (erj) -- (pj) node[above,midway] {$\mu_{2,j} E_{R,j}$};
      \draw[green] (mj.west)+(-0.2,0) node[left,above]  {\tiny +1};
      \draw[green] (mj.south)+(-0.2,0) node[left,below]  {\tiny -1};
      \draw[green] (pj.west)+(-0.2,0) node[left,above]  {\tiny +1};

    \draw[-> ,red] (mj) -- (none3j);
    \draw[-> ,red] (pj) -- (none3j);

\end{tikzpicture}}
\caption{\label{fig:app_mod3_Model-schema}\textbf{Production unit of the
$i$-th and $j$-th protein with the common pools of free RNA-polymerases
and ribosomes.}}
\end{figure}

\begin{description}
\item [{Transcription}] In the current model, the process of mRNA production is considered as taking part in two steps: first, the binding of the RNA-polymerase and initiation; and second, the elongation and termination of the mRNA. For the first step, \emph{inside a unit volume}, the rate at which an RNA-polymerase binds on the promoter of the $i$-th gene is given by the law of mass action
\[
\lambda_{1,i}\frac{G_{i}(t)}{V(t)}\frac{F_{Y}(t)}{V(t)}.
\]
with $\lambda_{1,i}$ accounts for the specificity of the promoter (its affinity for the RNA-polymerase, the chromosome conformation, etc.). As we are interested in the rate of reactions inside the whole cell of volume $V(t)$, the rate of reaction is then
\[
\lambda_{1,i}G_{i}(t)\frac{F_{Y}(t)}{V(t)}.
\]
The elongation time is given by an exponential random variable of rate $\mu_{1,i}$. Once the elongation terminates, the RNA-polymerase is released in the cytoplasm (increasing the number of free RNA-polymerases $F_{Y}$ by one unit). A messenger is considered created as soon as its elongation begins: the reason for it is that in bacteria (unlike eukaryotes), since transcriptions and translations happen in the same medium, a translation can begin on an mRNA on which the transcription is not finished. As for the previous models, each messenger of type $i$ has a lifetime given by an exponential random variable of rate $\sigma_{1,i}$.
\item [{Translation}] Similarly to the transcription, the rate at which a ribosome encounters an mRNA of type $i$ and initiate translation is $\lambda_{2,i}M_{i}(t)F_{R}(t)/V(t)$ where $\lambda_{2,i}$ will account for mRNA specific aspects (RBS affinity for ribosomes, etc.). The total number of ribosomes sequestered on messengers of type $i$ is $E_{Y,i}(t)$ and each elongation time follows an exponential distribution of rate $\mu_{2,i}$. Here we consider that the protein is created after the termination (since the protein is usually fully functional once its translation is completed); the number of proteins $P_{i}(t)$ is then increased by one unit.  As previously we do not consider protein proteolysis since it usually occurs at much longer timescale than cell cycle.
\item [{DNA~Replication~and~Division~of~the~Cell}] At a time $t$, each gene $i\in\{1,...,K\}$ is characterized by the gene copy number $G_{i}(t)$. As previously only one DNA replication per cell cycle is considered: as a consequence then, for each $i\in\{1,...,K\}$, $G_{i}(t)$ is constant and equal to $1$ (before replication) or to $2$ (after replication). There is two modeling choice for when the DNA replication is initiated: it can occur at a fixed time after the last division or when the cell reaches a certain volume $V_{I}$.  The first simulations are made by considering the volume-dependent initiation event, but as we will see in Section~\ref{subsec:Other-deterministic-times}, simulations with the other modeling choice show no noticeable difference.  The volume $V_{I}$ is fixed to $1.8\,\text{\textmu m}^{3}$ (see \cite{wallden_fluctuations_2015} and Section~\ref{subsec:AppGene-replication-times-1} about this choice). We consider the speed of DNA replication as constant; as a consequence, once known the replication time $\tau_{I}$, the delay until the replication of $i$-th gene is fixed, and is given by the gene position.

For the division, we considered at first that, like in the previous models, the division occurs when the cell reaches exactly the volume $2V_{0}$ (with $V_{0}=1.3\,\text{\textmu m}^{3}$ as it was the case for the  intermediate models considered earlier). We will consider in Section~\ref{subsec:Uncertainty-in-divi-init} the case where the division timing is not as precise. As before, the effect of septation is a random sampling of messengers and proteins: each of them has an equal chance to be in the next considered cell or not. Moreover, at division, all genes have only one copy.
\item [{Volume~Increase}] As said in the main article, the volume $V(t)$ is no longer deterministic as it was the case in the previous intermediate models and it is considered as proportional to the current total mass of proteins in the cell. We denote by $\beta_{P}$ represents ratio mass-volume and by $w_{i}$ the mass of a type $i$ protein. In that case, we have by definition
\begin{equation}
V(t)=\sum_{i=1}^{K}w_{i}P_{i}(t)/\beta_{P}.\label{eq:volume_mass}
\end{equation}
Thus each protein of type $i$ created increases the total
volume of the cell with respect to the factor $w_{i}/\beta_{P}$.
The mass $w_{i}$ of a protein is determined according to its gene
length.
\item [{Production~of~RNA-polymerases~and~ribosomes}] The total number
of RNA-polymerases and ribosomes (whether allocated or not) are respectively
denoted by $N_{Y}(t)$ and $N_{R}(t)$. In a first step, we consider
that the both these quantities are in constant concentration, that
is to say
\[
N_{Y}(t)=\left\lfloor \beta_{Y}V(t)\right\rfloor \quad\text{and}\quad N_{R}(t)=\left\lfloor \beta_{R}V(t)\right\rfloor ,
\]
with $\beta_{Y}$ and $\beta_{R}$ constant parameters and where $\left\lfloor \centerdot\right\rfloor $ is the notation for the floor function. As the cell grows, new RNA-polymerases and ribosomes are added to the system in the corresponding proportion. When division occurs, ribosomes and RNA-polymerases will be set accordingly to the new volume. In Section~\ref{sec:S_mod3_Other-influences} we will consider the more complex case where both RNA-polymerases and ribosomes are directly produced through a gene expression process.
\end{description}

\subsection{Theoretical Analysis}

This complete model is more complex than the previous ones. It is
due in part to the feedback loop that proteins have on their own production:
the more proteins, the more the volume increases, thereby increasing
the total amount of ribosomes and hence the translation rates. This
complicates the complete analytical description of mRNA and protein
mean productions. In this section, we propose a description that mimics
the average behavior of our stochastic model: the goal is to be able
to fit parameters to real measures and use them for stochastic simulations.

\subsubsection{\label{subsec:S_mod3_DeterModel-pres}Presentation of the Deterministic
Production Model}

The description chosen to reflect the average behavior of the stochastic
model previously described is a system of ordinary differential equations
(ODEs) that describes the kinetics of each compound concentration
of the system.

We consider $K$ genes, each of them has a corresponding type of mRNA and protein. For a gene of type $i$, the concentration of gene copies is given by $g_{i}(t)$; mRNAs and protein concentrations are denoted by $m_{i}(t)$ and $p_{i}(t)$. Similarly, $f_{Y}(t)$ and $f_{R}(t)$ respectively represent the concentrations of free RNA-polymerases and free ribosomes; while $e_{Y,i}(t)$ and $e_{R,i}(t)$ denote the concentrations of RNA-polymerases and ribosomes currently sequestered to produce type $i$ proteins. All these quantities correspond to concentrations and not numbers of entities (their stochastic counterparts would be the concentrations $G_{i}(t)/V(t)$, $M_{i}(t)/V(t)$, $P_{i}(t)/V(t)$, etc.).

The reactions between different compounds are given by the \emph{law of mass action}, that is to say that the rate of chemical reaction is proportional to the reactants abundance. We will study the evolution of $m_{i}$, the concentration of mRNAs of type $i$ . The creation of a type $i$ mRNA is the result of a reaction between a free RNA-polymerase (whose concentration is $f_{Y}(t)$) and the gene $i$ (whose concentration is $g_{i}(t)$); $\lambda_{1,i}$ is interpreted as the affinity constant of the reaction. The type $i$ mRNA degradation is the result of a reaction that occurs at rate $\sigma_{1,i}$.

As in the usual description of the cell (see \cite{goelzer_cell_2011} for instance), one also must consider the dilution: without any molecule creation, the concentration of the compound still decreases as the volume grows due to dilution. If we consider that cell volume is growing exponentially, doubling of volume in a time $\tau_{D}$, then the rate of dilution is $\log2/\tau_{D}$. The exponential growth corresponds to the volume dynamics of real bacteria \citep{wang_robust_2010}, and we will see in Section~\ref{subsec:DeterModel-validation} that it is a good approximation of the growth of cells in stochastic simulations.

All these aspects considered altogether, the kinetics of the concentration
of mRNAs of type $i$ is given by the ODE:
\begin{equation}
\frac{\diff m_{i}}{\diff t}(t)=\lambda_{1,i}g_{i}(t)\cdot f_{Y}(t)-\sigma_{1,i}m_{i}(t)-\frac{\log2}{\tau_{D}}\cdot m_{i}(t).\label{eq:sys_mi}
\end{equation}
The first term represents the mRNA creation; the second, the mRNA
degradation; and the last, the dilution.

For the other reactions, for $i\in\{1,...,K\}$, one has
\begin{align}
\frac{\diff p_{i}}{\diff t}(t) & =\mu_{2,i}e_{R,i}(t)-\frac{\log2}{\tau_{D}}\cdot p_{i}(t),\label{eq:sys_pi}\\
\frac{\diff e_{Y,i}}{\diff t}(t) & =\lambda_{1,i}g_{i}(t)\cdot f_{Y}(t)-\mu_{1,i}e_{Y,i}(t)-\frac{\log2}{\tau_{D}}\cdot e_{Y,i}(t),\label{eq:sys_eyi}\\
\frac{\diff e_{R,i}}{\diff t}(t) & =\lambda_{2,i}m_{i}(t)\cdot f_{R}(t)-\mu_{2,i}e_{R,i}(t)-\frac{\log2}{\tau_{D}}\cdot e_{R,i}(t).\label{eq:sys_eri}
\end{align}

As for the stochastic model of the previous section, assume that the concentration of RNA-polymerases (allocated or not) is constant and equal to $\beta_{Y}$, i.e.
\begin{equation}
\beta_{Y}=f_{Y}(t)+\sum_{i=1}^{K}e_{Y,i}(t),\label{eq:sys_fy}
\end{equation}
since $\sum_{i}e_{Y,i}$ and $f_{Y}$ represent the concentrations
of respectively the allocated and non-allocated RNA-polymerases. It
is similar to the ribosomes as we have:

\begin{equation}
\beta_{R}=f_{R}(t)+\sum_{i=1}^{K}e_{R,i}(t).\label{eq:sys_fr}
\end{equation}

The classic strategy in literature to study such system (an analogous model is presented in \cite{borkowski_translation_2016}) is to consider the system in steady state growth: the gene concentration $g_{i}$ is considered as constantly equal to its average value during the cell cycle, and then one can calculate the concentrations of $m_{i}$, $p_{i}$, $e_{Y,i}$ and $e_{R,i}$ at steady state by writing the Equations~\eqref{eq:sys_mi} to~\eqref{eq:sys_fr} with the derivative term as null. Using such method to determine parameters are unfortunatly not precise enough: there is a clear shift between the stochastic protein concentration and the one that shoule be obtained.

We have described the cell during one cycle with a non-constant gene concentration. The instant  of replication of gene $i$ replication within the cycle  is denoted by $\tau_{R,i}$. in particular, at time $t$, the $i$-th gene copy number is known: $g_{i}(t)=\left(1+\ind{t\geq\tau_{R,i}}\right)/(V_{0}2^{t/\tau_{D}})$ (the factor $V_{0}2^{t/\tau_{D}}$ represents the volume). By analogy with the steady state condition presented in Section~\ref{subsec:S_mod1_mrna_dynamic}, it is likely that  a large number of cell cycles have already occurred, so that the concentration of any entities is the same at the beginning and at the end of the cell cycle. For each unit of production, the concentrations $m_{i}$, $p_{i}$, $e_{Y,i}$ and $e_{R,i}$ are such as
\begin{equation}
\forall i\in\{1,...,K\}\quad\begin{cases}
p_{i}(0)=p_{i}(\tau_{D}),\quad & m_{i}(0)=m_{i}(\tau_{D}),\\
e_{Y,i}(0)=e_{Y,i}(\tau_{D}),\quad & e_{R,i}(0)=e_{R,i}(\tau_{D}).
\end{cases}\label{eq:bondery-conditions}
\end{equation}
With these considerations, we have a system of ODEs to describe the average behavior of the main stochastic model during the cell cycle. In the next section, under some simplifications, we propose to give expressions for $m_{i}(t)$, $p_{i}(t)$, $e_{Y,i}(t)$, $e_{R,i}(t)$, $f_{Y}(t)$ and $f_{R}(t)$ as a function of all  parameters ($\lambda_{1,i}$, $\sigma{}_{1,i}$, etc.) and $g_{i}(t)$.

\subsubsection{\label{subsec:S_mod3_DeterModel-dyn}Dynamics of the Average Production Model}

In order to estimate the parameters, one needs to have expressions
for $m_{i}$, $e_{Y,i}$, $p_{i}$, $e_{R,i}$, $f_{R}$ and $f_{Y}$
of the previous ODEs for any time $t$ of the cell cycle. But the
interdependence between $e_{Y,i}$ and $f_{Y}$ on one hand and $e_{R,i}$
and $f_{R}$ on the hand raises difficulties when integrating these
equations. Explicit solution for the dynamics $m_{i}$, $e_{Y,i}$,
$p_{i}$, $e_{R,i}$, $f_{R}$ and $f_{Y}$ are therefore not easy
to obtain directly.

In order to have expressions for these quantities, we have chosen to   some simplifications. In the next sections, the stochastic simulations show a good correspondence between their average concentration of free RNA-polymerase and ribosomes and the ones predicted here; it will therefore justify \emph{a~posteriori }the simplifications that we make in this section.

For the RNA-polymerases, we denote by $\widetilde{\mu_{1}}:=\sum_{i}\mu_{1,i}/K$
the average elongation rates of transcription and the function $h$ such
as
\[
h(t):=\sum_{i=1}^{K}e_{Y,i}(t)\frac{\widetilde{\mu_{1}}}{\mu_{1,i}}.
\]
The dynamic of $h$ is given by summing the Equations~\eqref{eq:sys_eyi}
for $i$ from $1$ to $K$, and by using Equation~\eqref{eq:sys_fy}:
\begin{equation}
\frac{d}{\diff t}h(t)=f_{Y}(t)\cdot\widetilde{\mu_{1}}\left(1+\sum_{i=1}^{K}\frac{\lambda_{1,i}}{\mu_{1,i}}g_{i}(t)\right)-\beta_{Y}\widetilde{\mu_{1}}-\frac{\log2}{\tau_{D}}\cdot h(t).\label{eq:d_ey_i}
\end{equation}

The $h$ is simply a weighted sum of the $e_{Y,i}$ allocated RNA-polymerases. We decided to consider that such weighting has little influence, and that $h$ does not greatly differ from the uniform sum $\sum_{i}e_{Y,i}$, that is to say:
\[
h(t)=\sum_{i=1}^{K}e_{Y,i}(t)\frac{\widetilde{\mu_{1}}}{\mu_{1,i}}\quad\simeq\quad\sum_{i=1}^{K}e_{Y,i}(t)=\beta_{Y}-f_{Y}(t).
\]
It would be in particular true if all elongation rates $\mu_{1,i}$
are identical for all genes (i.e. if $\mu_{1,i}\equiv\widetilde{\mu_{1}}$
for all $i$).

With this simplification, from Relation~\eqref{eq:d_ey_i}, one obtains a differential equation for  $f_{Y}$
\begin{equation}
\frac{\diff}{\diff t}f_{Y}(t)=\widetilde{\mu_{1}}\beta_{Y}\left(\frac{\log2}{\widetilde{\mu_{1}}\tau_{D}}+1\right)-\widetilde{\mu_{1}}\left(1+\frac{\log2}{\widetilde{\mu_{1}}\tau_{D}}+\sum_{i=1}^{K}\frac{\lambda_{1,i}}{\mu_{1,i}}g_{i}(t)\right)f_{Y}(t).\label{eq:d_fy}
\end{equation}
One can remark that the concentrations of free RNA-polymerases is on a quick timescale. Indeed, as there are of the order of  $1.4\times10^{3}$ mRNAs in the cell, see~\cite{neidhardt_chemical_1996} that last approximately $4$ minutes, see~\cite{taniguchi_quantifying_2010}, it gives of the order of  $6$ translations per second. As a consequence, one can expect that that $f_{Y}$ quickly reaches its steady state during the cell cycle. This consideration will be justified \emph{a~posteriori} by the agreement with  stochastic simulations.

With these considerations, we set  the derivative term of Equation~\eqref{eq:d_fy} to be null, hence
\[
f_{Y}(t)=\beta_{Y}\frac{{\displaystyle 1+\frac{\log2}{\widetilde{\mu_{1}}\tau_{D}}}}{{\displaystyle \sum_{i=1}^{K}\frac{\lambda_{1,i}}{\mu_{1,i}}g_{i}(t)+1+\frac{\log2}{\widetilde{\mu_{1}}\tau_{D}}}}.
\]

In the next section it will be shown that $\log2/(\widetilde{\mu_{1}}\times\tau_{D})\sim10^{-3}\ll1$, we will therefore neglect the contribution of this term.
 With a similar argument for free ribosomes, we get
\[
f_{Y}(t)=\beta_{Y}\frac{{\displaystyle 1}}{1+{\displaystyle \sum_{i=1}^{K}\frac{\lambda_{1,i}}{\mu_{1,i}}g_{i}(t)}}\quad\text{and}\quad f_{R}(t)=\beta_{R}\frac{{\displaystyle 1}}{1+{\displaystyle \sum_{i=1}^{K}\frac{\lambda_{2,i}}{\mu_{2,i}}m_{i}(t)}}.
\]

With global quantities $f_{Y}$ and $f_{R}$ known, we are able
to give expression for gene-specific variables. For each $i\in\{1,...,K\}$,
one can integrate Equation~\eqref{eq:sys_mi} and find that:
\[
\frac{\diff m_{i}}{\diff t}(t)=\lambda_{1,i}g_{i}(t)\cdot f_{Y}(t)-\sigma_{1,i}m_{i}(t)-\frac{\log2}{\tau_{D}}\cdot m_{i}(t).
\]
With the boundary conditions of Equation~\eqref{eq:bondery-conditions},
it is easy to deduce that:
\begin{equation}
m_{i}(t)=\lambda_{1,i}\frac{e^{-\sigma_{1,i}t}}{2^{t/\tau_{D}}}\left[\int_{0}^{t}2^{u/\tau_{D}}e^{\sigma_{1,i}u}g_{i}(u)f_{Y}(u)\,\diff u+\frac{\int_{0}^{\tau_{D}}2^{u/\tau_{D}}e^{\sigma_{1,i}u}g_{i}(u)f_{Y}(u)\,\diff u}{2e^{\sigma_{1,i}\tau_{D}}-1}\right].\label{eq:mi_final}
\end{equation}
Since the quantities $g_{i}$ , $f_{Y}$ are known, we have an explicit
solution for $m_{i}$.

Similarly for $e_{Y,i}(t)$ and $e_{R,i}(t)$,
\begin{align*}
e_{Y,i}(t) & =\lambda_{1,i}\frac{e^{-\mu_{1,i}t}}{2^{t/\tau_{D}}}\left[\int_{0}^{t}2^{u/\tau_{D}}e^{\mu_{1,i}u}g_{i}(u)f_{Y}(u)\,\diff u+\frac{\int_{0}^{\tau_{D}}2^{u/\tau_{D}}e^{\mu_{1,i}u}g_{i}(u)f_{Y}(u)\,\diff u}{2e^{\mu_{1,i}\tau_{D}}-1}\right],\\
e_{R,i}(t) & =\lambda_{2,i}\frac{e^{-\mu_{2,i}t}}{2^{t/\tau_{D}}}\left[\int_{0}^{t}2^{u/\tau_{D}}e^{\mu_{2,i}u}m_{i}(u)f_{R}(u)\,\diff u+\frac{\int_{0}^{\tau_{D}}2^{u/\tau_{D}}e^{\mu_{2,i}u}m_{i}(u)f_{R}(u)\,\diff u}{2e^{\mu_{2,i}\tau_{D}}-1}\right].
\end{align*}
Consider now the type $i$ protein concentration. By integrating
the Equation~\eqref{eq:sys_mi}, and by considering the boundary
condition of Equation~\eqref{eq:bondery-conditions}, one gets the relation
\begin{equation}
p_{i}(t)=\frac{\mu_{2,i}}{2^{t/\tau_{D}}}\int_{0}^{\tau_{D}}\left(1+\ind{u<t}\right)2^{u/\tau_{D}}e_{R,i}(u)\,\diff u.\label{eq:pi_final}
\end{equation}

As in the previous models, we are interested in average concentrations
over the cell cycle. Since, in the system of ODEs, we define average
concentrations over the cell cycle of free RNA-polymerases and ribosomes
respectively as
\begin{equation}
\overline{f_{Y}}=\frac{1}{\tau_{D}}\int_{0}^{\tau_{D}}\beta_{Y}\frac{1}{{\displaystyle \sum_{i=1}^{K}\frac{\lambda_{1,i}}{\mu_{1,i}}g_{i}(t)}+1}\,\diff t\quad\text{and}\quad\overline{f_{R}}=\frac{1}{\tau_{D}}\int_{0}^{\tau_{D}}\beta_{R}\frac{1}{{\displaystyle \sum_{i=1}^{K}\frac{\lambda_{2,i}}{\mu_{2,i}}m_{i}(t)+}1}\,\diff t.\label{eq:fy-fr}
\end{equation}
We defined similarly the concentrations $\overline{m_{i}}$ and $\overline{p_{i}}$
averaged over the cell cycle. By integrating Equations~\eqref{eq:pi_final}
and~\eqref{eq:pi_final}, it follows:

\begin{align}
\overline{m_{i}}=&\frac{\lambda_{1,i}}{\sigma_{1,i}\tau_{D}+\log2}\int_{0}^{\tau_{D}}g_{i}(u)f_{Y}(u)\,\diff u\\
\overline{p_{i}}=&\frac{\lambda_{2,i}\mu_{2,i}\tau_{D}}{\log2\left(\mu_{2,i}\tau_{D}+\log2\right)}\int_{0}^{\tau_{D}}m_{i}(u)f_{R}(u)\,\diff u.\label{eq:av-mi-pi}
\end{align}

Now we have expressions of the average concentrations of $\overline{m_{i}}$,
$\overline{p_{i}}$, $\overline{f_{R}}$ and $\overline{f_{Y}}$ for
any time $t$ in the cell cycle that will be used in the next subsection
to determine the parameters.

\subsection{\label{subsec:S_mod3_param}Estimation of Parameters}
The stochastic model of this section are used to describe the production of all proteins of the cell. Recall that \cite{taniguchi_quantifying_2010} has only considered $1018$ genes, out of which only $841$ have their mRNA production measured. In a first step, we only take into account the $841$ genes with protein and mRNA production measured and consider that it would represent the whole genome; in Section~\ref{subsec:S_Other-Additional-genes} we will study the case of a simulation with a complete set of genes representing a full genome of about $2000$ genes.

The determination of the model parameters $\sigma_{1,i}$ of mRNA
degradation of type $i$ , of the doubling time $\tau_{D}$ and the
time $\tau_{R,i}$ of gene replication is the same as for the previous
intermediate models (see Sections~\ref{subsec:S_mod1_param} and~\ref{subsec:S_mod2_param}).

We still need to determine all reaction rates for every protein type ($\lambda_{1,i}$, $\mu_{1,i}$, $\lambda_{2,i}$ and $\mu_{2,i}$ for $i\in\{1,...,K\}$) as well as concentration parameters of RNA-polymerases, and ribosomes (respectively $\beta_{Y}$ and $\beta_{R}$ ), the proportion between the volume and the protein mass $\beta_{P}$, the mass of each proteins $w_{i}$ and the copy number $g_{i}$ of any gene.

Reference \cite{taniguchi_quantifying_2010} does not give the quantities of non-allocated RNA-polymerases or ribosomes. To determine the set of parameters, we fix the average concentration of free RNA-polymerases and ribosomes. Note that we can have multiple sets of parameters depending on this choice.  In the simulations, we will examine several simulations with different values for average free RNA-polymerase and ribosome concentrations to see their impact on the dynamic of the model (Section~\ref{sec:Impact-RNAP-ribo}).

The rates $\mu_{1,i}$, $\mu_{2,i}$ of mRNAs and protein elongation rates can be deduced from the gene length of the $i$-th gene. In the description of the model, we have considered that the length of the mRNA is characterized by its length; so a rate the parameter $\mu_{1,i}$ is given by the mRNA elongation speed ($39\,\text{Nucl}/\text{s}$ in \cite{bremer_modulation_1996} for slowly growing cells) divided by the length of the $i$-th gene. Similarly $\mu_{1,i}$ is given by the protein elongation speed ($12\,\text{aa}/\text{s}$ in \cite{bremer_modulation_1996} for slowly growing cells) divided by the number of amino-acids coded by the $i$-th gene divided . The mass of each protein $w_{i}$ is also deduced from the length of the gene as it determines the number of amino-acids of the protein.

What remains to determine are the concentration parameters of RNA-polymerases, and ribosomes ($\beta_{Y}$ and $\beta_{R}$), the proportion between the volume and the mass of proteins $\beta_{P}$, as well as the activities of the gene and the mRNA (respectively $\lambda_{1,i}$ and $\lambda_{2,i}$) in each unit of production $i\in\{1,...,K\}$. To do so, we interpret the mRNA and protein concentration of each type measured in \cite{taniguchi_quantifying_2010} as the average concentration of each mRNA and proteins over the cell cycle of this model (respectively $\overline{m_{i}}$ and $\overline{p_{i}}$).  Moreover, as previously said, the average concentrations of free RNA-polymerases $\overline{f_{Y}}$ and free ribosomes $\overline{f_{R}}$ are fixed.

We want now to compute $\beta_{P}$, $\beta_{Y}$, $\beta_{R}$, $\lambda_{1,i}$ and $\lambda_{2,i}$ based on known values for $\overline{f_{Y}}$, $\overline{f_{R}}$, $\overline{m_{i}}$ and $\overline{p_{i}}$.  We determine the parameter $\beta_{P}$. In the description of the stochastic model, Equation~\eqref{eq:volume_mass} states that at any moment, the volume is considered to be proportional to the total mass of proteins. Interpreting $\overline{p_{i}}$ as the average concentration of the protein of type $i$ leads by integration of Equation~\eqref{eq:volume_mass} to
\[
\beta_{P}=\sum_{i=1}^{K}w_{i}\overline{p_{i}}.
\]

We continue with the parameters relevant to the transcription:
$\lambda_{1,i}$ and $\beta_{Y}$. With Equations~\eqref{eq:fy-fr}
and~\eqref{eq:av-mi-pi}, $\beta_{Y},\lambda_{1,1},...,\lambda_{1,K}$
are solution of the system
\begin{equation}
\begin{cases}
\beta_{Y} & =\overline{f_{Y}}\left({\displaystyle \frac{1}{\tau_{D}}\int_{0}^{\tau_{D}}{\displaystyle \left(\sum_{i=1}^{K}\frac{\lambda_{1,i}}{\mu_{1,i}}g_{i}(t)+1\right)^{-1}}\,\diff t}\right)^{-1}\\
\lambda_{1,i} & =\overline{m_{i}}\cdot{\displaystyle \left(\sigma_{1,i}\tau_{D}+\log2\right)\cdot\left({\displaystyle \int_{0}^{\tau_{D}}g_{i}(u)f_{Y}(u)\,\diff u}\right)^{-1}}\quad\forall i\in\{1,...,K\}.
\end{cases}\label{eq:fixe_point1}
\end{equation}
Since $\overline{f_{Y}}$, $\overline{m_{i}}$ and $g_{i}(t)$ have
already been settled, we can use a fixed point optimization procedure
to determine $\beta_{Y}$ and all $\lambda_{1,i}$. Then, as these
parameters are determined, we now have an explicit expression for
$f_{Y}(t)$ for any time $t$ of the cell cycle.

We have to determine the parameters relevant to translation, namely $\lambda_{2,i}$ and $\beta_{R}$. Here again, we use a fixed point optimization procedure to deliver the result. With Equation~\eqref{eq:fy-fr} and the expression of $\overline{p_{i}}$ in Equation~\eqref{eq:av-mi-pi}, $\beta_{R},\lambda_{2,1},...,\lambda_{2,K}$ are solutions of the system
\begin{equation}
\begin{cases}
\beta_{R} & =\overline{f_{R}}\times\left({\displaystyle \frac{1}{\tau_{D}}\int_{0}^{\tau_{D}}\left(\sum_{i=1}^{K}\frac{\lambda_{2,i}}{\mu_{2,i}}m_{i}(t)+1\right)^{-1}\,\diff t}\right)^{-1}\\
\lambda_{2,i} & =\overline{p_{i}}\times\left({\displaystyle {\displaystyle \frac{\mu_{2,i}\tau_{D}}{\log2\left(\mu_{2,i}\tau_{D}+\log2\right)}\int_{0}^{\tau_{D}}m_{i}(u)f_{R}(u)}\,\diff u}\right)^{-1}\quad\forall i\in\{1,...,K\}.
\end{cases}\label{eq:fixe_point2}
\end{equation}
By fixing the average amount of free RNA-polymerases and ribosomes, it is possible, through this procedure to determine  parameters with  the experimental measures.

\subsection{\label{subsec:DeterModel-validation}Validation of the Average Production Model}


The description of the average production through the system of ODE (Section~\ref{subsec:S_mod3_DeterModel-pres}) makes the computation of parameters of the stochastic model possible. We need to check that the deterministic description globally corresponds to the average behavior of the stochastic model; for instance, one has to validate that stochastic simulations with the  parameters previously determined, are consistent with the number  of mRNAs and proteins observed.

Here, we present the results of a particular simulation, whose parameters are presented  in S3 Fig(A).
Its average behavior will be compared with the expressions derived from the system of ODEs. The simulation presented here takes into account  the $841$ genes with protein and mRNA production described in \cite{taniguchi_quantifying_2010}, and we have fixed the number of free RNA-polymerases and ribosomes in order to compute the parameters.

The system of ODEs assumes volume growth is exponential with rate $\log2/\tau_{D}$. In S3 Fig(A), 
the volume of the cell indeed seems to grow exponentially in the simulations; the growth rate corresponds to the expected a doubling time of $\tau_{D}$.

For each type of gene, S3 Fig(B), 
 shows the ratio between protein production observed in the simulations divided by protein production expected (and similarly for the mRNAs in inset). It appears that the correspondence is correct, especially for the highly expressed proteins. It is less precise for the protein less expressed but, globally, the correspondence  seems good enough.

Computed from the stochastic simulations, the main S3~Fig(C) 
 and S3~Fig(D) 
present the mean number of free RNA-polymerases and ribosomes as a function of cell volume. The mean of each free entity is not constant during the cell cycle. The dashed lines represent the expected value of free entities given by the model of ODEs (Equation~\eqref{eq:fy-fr}). It is indeed a good approximation for the behavior of free RNA-polymerases and ribosomes.  The stochastic simulation displays relative quick timescale for the evolution of free RNA-polymerases (of the order of the second) and even quicker for the free ribosomes (insets of S3 Fig(C) and S3 Fig(D)).

All these results support the idea that the expressions derived from the system of ODEs are accurate to describe the average behavior of the stochastic model.
\subsection{\label{sec:Impact-RNAP-ribo}Impact of Free RNA-polymerases and Ribosomes}
As in Section~\ref{subsec:S_mod3_DeterModel-pres}, the parameter computation supposes that  the average concentrations of free RNA-polymerases $\overline{f_{Y}}$ and ribosomes $\overline{f_{R}}$ are fixed.  In
S4 Fig  
is presented several simulations where the average concentrations of these free entities are changed.

\subsubsection{\label{subsec:F-RNAP-and_ribo}Few Free Ribosomes and Many Free RNA-polymerases}

The first simulation, corresponding to S4 Fig(A) 
 and S4 Fig(B), 
 considers a low concentration of free ribosomes and a high concentration of free RNA-polymerases. This situation seems to reasonable in this biological setting.   Consequently, free ribosomes and free polymerases are subject to a large competition between transcripts (see \cite{warner_economics_2001} in the case of the yeast). At the same time, the parameters are fixed  in such a way that most of  RNA-polymerases are non-allocated, i.e. not bound on the DNA, see~\cite{klumpp_growth-rate-dependent_2008}, see Figure~4 of the main article and
S3 Fig(A). S4 Fig(A) compares the variance of the complete stochastic model, with the one predicted by the previous model (with gene replication and random partitioning); it shows that for $90\%$ of the genes, protein variance ratio is above $0.9$ (the mean of the ratio is $0.96$).

In these simulations, we look at the distributions of free RNA-polymerases and ribosomes. In S4 Fig(B), we show these distributions at three different phases in the cell cycle: we have selected cells of a given volume, either $1.40\,\text{\textmu m}^{3}$,$1.95\,\text{\textmu m}^{3}$ or $2.50\,\text{\textmu m}^{3}$, which correspond to the beginning, middle and end of the cell cycle. These distributions change as the volume increase (so that the average follows the curves shown in S3 Fig(C) and S3 Fig(D)).

In order to interpret the observed distributions of free RNA-polymerases and ribosomes at a certain extend, we can propose a simplified model of RNA-polymerase and ribosome allocation for which  translation and the translation are then considered separately, and that there is no notion of cell growth. The idea would be to approach the ``local'' steady state of RNA-polymerases and ribosomes before any significant change in the volume (for more details about the simplified model, see Section~\ref{subsec:S_mod3_simplified_models}).

This simplified description predicts that for a given volume $V$, the distribution of free RNA-polymerases and ribosomes would be both a binomial distribution. These predicted binomial distributions are plotted in
S4 Fig(B) in thick lines. In the RNA-polymerase case, the binomial distribution globally fit the histograms.  The ribosome distribution is  singular: the parameters of the binomial distribution $\left(N,\phi\right)$ are such that $\phi\ll N$. It is due to the low concentration of free ribosomes chosen for the parameters computation. But even this denatured case shows a good correspondence between the binomial distribution and the simulation histograms.


\subsubsection{\label{subsec:S_mod3_Impact-free-rnap}Influence of Free RNA-polymerase Concentration}

By keeping the low concentration of free ribosomes, we have produced a series of parameters where the average concentration of free RNA-polymerases was fixed successively to $1$, $10$, $100$ and $1000$ $\text{copies}/\text{\textmu m}^{3}$.  In each case, we have deduced a set of parameters, where the affinity constants $\lambda_{1,i}$, $\lambda_{2,i}$ are still calculated so that average mRNA and protein concentrations still correspond to the experimental measures. By performing simulations, we observe that protein variability remains in the same order of magnitude.  Even more, as shown in S4 Fig(C), the gap between the multi-protein model and the previous gene centered model (with volume growth, random partition and gene replication), seems to be reduced for the lowest free RNA-polymerase concentration: the variance ratio between the two models is $0.96$ ($90\%$ of the genes have a variance ratio above $0.92$).

Even for extremely low RNA-polymerase concentrations, the distribution
of free RNA-polymerases and ribosomes are still well predicted by
the simplified model presented in Section~\ref{subsec:S_mod3_simplified_models}
(see S4 Fig(D)).

\subsubsection{\label{subsec:S_mod3_Impact-free-ribo}Influence of Free Ribosome Concentration}

With a high concentration of free RNA-polymerases, we observe the influence of the quantity of free ribosomes on protein variance. We have computed a set of parameters based on average concentrations of non-allocated ribosomes of $1$, $10$, $100$ and $1000$ $\text{copies}/\text{\textmu m}^{3}$.  It can be first remarked that for very high free concentrations, the binomial fit of the simplified model (described in Section~\ref{subsec:S_mod3_simplified_models}) is not relevant to describe the free ribosome distribution.

In this case again, changes to the average concentration of free ribosomes are negligible. As the average concentration of free ribosomes increases, the variance of each protein decreases. As shown in S4 Fig(E), for a concentration of $1000\,\text{copies}/\text{\textmu m}^{3}$, the variance of the multi-protein represents on average $0.98$ of the one predicted by the gene-centered model ($90\%$ of the genes have a variance ratio above $0.93$).

Fluctuations in the number of free ribosomes seem to be the main source of the small additional variability observed in the multi-protein model (compared with the previous intermediate model); and this effect seems less important as the number of free ribosomes is high. But in real bacteria, the number of free ribosomes usually seems quite low due to the high cost of ribosome production; then, a low number of free ribosomes (like in the simulation of Section~\ref{subsec:F-RNAP-and_ribo}) seems more plausible than this simulation.

To confirm the specific influence of fluctuations of ribosomes on protein variability, we have performed  simulations with a modified version of the model. The multi-protein model has been changed in such a way that the concentration of non-allocated ribosomes is fixed  during the whole simulation (meanwhile the free RNA-polymerases are still fluctuating). Results about protein variability are similar to what is shown in S4~Fig(E): the variance of each protein concentration is equivalent to what was described by the gene-centered model.

\subsection{\label{sec:S_mod3_Other-influences}Other Possible Influences on Protein Variability }

\begin{figure}
    \centering
    \begin{tikzpicture}[scale=4]
    \node[draw,shape=rectangle] (fy) at (0,0){$F_Y$};
    \node[draw,shape=rectangle] (sy) at (-1,0){$D_Y$};
    \node[draw,shape=rectangle] (eyi) at (1,0){$E_{Y,i}$};

    \node[draw,shape=rectangle] (ey1) at (1,0.4){$E_{Y,1}$};
    \node[draw,shape=rectangle] (eyk) at (1,-0.4){$E_{Y,K}$};
    \node[]  at (1,0.2){...};
    \node[]  at (1,-0.2){...};

    \draw[transform canvas={yshift=.3ex},->]
    	(fy) -- (sy) node[above,midway]
    	{$\lambda_{+}F_{Y}G/(K\cdot V)$};
    \draw[transform canvas={yshift=-.3ex},->]
    	(sy) -- (fy) node[below,midway] {$\lambda_{-}\cdot D_Y$};
    \draw[transform canvas={yshift=.3ex},->]
    	(eyi) -- (fy) node[above,midway] {$\lambda_{1,i} G_i  F_Y/V$};
    \draw[transform canvas={yshift=-.3ex},->]
    	(fy) -- (eyi) node[below,midway] {$\mu_{1,i} E_{Y,i}$};

    \draw[->,black!60]
    	[bend right=27] ($(ey1.west)+(0,0.015)$) to
    	 ($(fy.east)+(0,0.055)$);
    \draw[->,black!60]
    	($(fy.east)+(0,0.025)$) to
    	[bend right=-27] ($(ey1.west)+(0,-0.015)$);

    \draw[->,black!60]
    	[bend right=-27]	($(eyk.west)+(0,-0.015)$) to
    	($(fy.east)+(0,-0.055)$);
    \draw[->,black!60]
    	($(fy.east)+(0,-0.025)$) to
    	[bend right=27] ($(eyk.west)+(0,0.015)$);

    \draw (sy.north)+(0,0.3) node[above]{Non-specifically bound};
    \draw (fy.north)+(0,0.3) node[above]{Free};
    \draw (ey1.north)+(0,0.05) node[above]{Specifically bound};

    \draw (ey1.north west)+(-0.05,0.05)   node (a) {};
    \draw (eyk.south east)+(0.05,-0.05)   node (b) {};
    \path[draw]  (a) rectangle (b);

    \end{tikzpicture}

\caption{\label{fig:Model-seques-rnap}\textbf{Model variant with non-specific RNA-polymerase binding.}
An RNA-polymerase can be either among the specifically bound on the
DNA (in $E_{Y,i}$ for $i\in\{1,...,K\}$) or free (among $F_{Y}$)
or non-specifically bound on the DNA (among $D_{Y}$).}
\end{figure}

In this section, based on the set of parameters of the simulation
Section~\ref{subsec:F-RNAP-and_ribo} (with few free ribosomes and
many free RNA-polymerases), we make variations on some modeling choices
for some cellular mechanisms: a larger set of genes, RNA-polymerases
and ribosomes as a result of gene expression, the introduction of
RNA-polymerase non-specific binding on the DNA, considering uncertainty
in the division and DNA replication processes, etc. We will show that
 protein variability is quite robust to any of these changes: as
for the results presented in Section~\ref{subsec:F-RNAP-and_ribo},
 protein variance is still increased by at most $10\%$ compared
to the gene-centered model.

\subsubsection{\label{subsec:S_Other-Additional-genes}Additional Genes }
The genome of \emph{E. coli } has approximately $2000$ expressed genes.  But the measures of \cite{taniguchi_quantifying_2010} take into account only a part of it.  A total $1018$ protein types were considered in this reference, and among them, only $841$ types have their mRNA production estimated. In order to better represent the complete genome of the bacteria, we have created a set of parameters with an extended pool of additional randomly created genes so that the total number of genes would be $2000$.

For each new gene, we have sampled its average protein and mRNA concentration, an mRNA lifetime and  gene position. By studying the data of \cite{taniguchi_quantifying_2010}, we have investigated the possible statistical correlations between these quantities; it appears that only the mRNA and protein concentration are correlated (as it is shown in the Figure~7B 
 of the main article). We therefore have sampled the mRNAs lifetime and the gene position and length independently from the two other quantities.

As in the dataset, the genes are evenly distributed in the DNA, the gene position is assumed to be  uniformly distributed. The empirical mRNA lifetime distribution fitted a log-normal distribution; we have chosen the mRNA lifetime accordingly.

For the mRNA and the protein expression, we have taken into account their correlation. The first step of the procedure is to sample protein production of the new gene according protein empirical distribution of the dataset. We then obtain a realistic protein production for the new gene. In a second step, the mRNA production is chosen depending on its protein production already determined. We have subdivised the dataset in 10 classes according to protein production (see the different colors of Figure~5A 
  of the main article); we then consider the classes in which protein production of the new gene falls in. We then sample its mRNA production according to the empirical distribution of the mRNA productions of this specific class only. Thus,  the mRNA production of the new gene is sampled with a specific distribution that depends on its protein production With this procedure, the newly created genes have a protein and mRNA productions that seem to be in adequation with the original dataset (see Figure~5A 
		of the main article).

    \subsubsection{\label{subsec:S_Other-Production-of-RNAP-ribo}Production of RNA-polymerase and Ribosomes}
Simulations with the completed genome show no significant difference in terms of protein variability. In particular, the variance ratio between protein concentration of the gene-centered model and the multi-protein model is not different as in Section~\ref{subsec:F-RNAP-and_ribo}.

In the complete stochastic model as it was presented in Section~\ref{subsec:S_mod3_Detailed_descr}, all ribosomes and all RNA-polymerases are supposed to have constant concentrations (respectively $\beta_{R}$ and $\beta_{Y}$). In reality, both RNA-polymerases and ribosomes are composed of different subunits, each subunit is either a protein or, in the case of ribosomes, a functional RNA. The variability of the production of these subunits can have an overall impact on the global production.

We have performed a preliminary simulation that takes into account this aspect: the goal is not to have a precise description of mechanisms of RNA-polymerase and ribosome productions, but rather to have an insight in the magnitude of additional variability it can induce.  In this version of the model, the expression of one gene represents RNA-polymerase production and the expression of another gene represents the ribosome production. It refers to a case where the RNA-polymerases and ribosomes would be composed of only one proteic subunit.

We therefore created two genes, whose protein production was fixed to correspond to the measured concentration of RNA-polymerases and ribosomes.  The mRNA production and lifetime, the gene position and length have been chosen by the same procedure as described in the previous subsection.

This simulation brings an additional variability in the growth rate: volume growth is more variable. These fluctuations are directly correlated with the number of ribosomes in the cell (Figure~5B  
  of the main article, above). But surprisingly, these additional variability has no significant impact in protein variability. The Figure~5B  
 of the main article (below) also shows the distribution of the protein FabH for cells of different volumes. This case does not differ from the case where the total amount of RNA-polymerases and ribosomes are in constant concentration.

We can propose a possible interpretation of these results. The fluctuations in the total number of ribosomes seems to influence primarily the speed of growth as shown in Figure~5B 
  of the main article.  When ribosomes are produced, it accelerates the global production of all types of proteins thus increasing the volume. As a consequence, both the production of each type of protein and the volume are co-regulated.  Fluctuations in the total number of ribosomes affect volume growth and the production of the $i$-th protein in the same way such as in a cell of a given volume, the $i$-th protein distribution is relatively unchanged.

\subsubsection{\label{subsec:Other-Sequestered-polymerases}Non-specifically Bound
Polymerases}

In the complete stochastic model as it was described in Section~\ref{subsec:S_mod3_Detailed_descr}, RNA-polymerases are either on the DNA involved in a transcription process, or is among the $F_{Y}$ free RNA-polymerases that freely evolve in the cytoplasm. But it has been experimentally shown that large portion of RNA-polymerases can bind non-specifically on the DNA, without initiating transcription. For instance, \cite{klumpp_growth-rate-dependent_2008} estimated that around $90\%$ of the RNA-polymerases are non-specifically bound to the DNA.

We have created an alternative version of the stochastic model to introduce a third possible class for RNA-polymerases. RNA-polymerases can also bound non-specifically on the DNA. The binding rate is modeled as follows: at any time $t$, a free RNA-polymerase bind on the DNA at a rate that depends on the number of free RNA-polymerases $F_{Y}(t)$ and on the DNA concentration $G(t)/(K\cdot V(t))$ (where $G(t)=\sum_i G_i(t)$); the global rate is hence $\lambda_{+}F_{Y}(t)G(t)/(K\cdot V(t))$ where $\lambda_{+}$ is a parameter that represents the natural affinity of RNA-polymerases for the DNA. Once an RNA-polymerase is bound, it is released in a time represented by an exponential random variable of rate $\lambda_{-}$ (see \ref{fig:Model-seques-rnap}).

We performed a simulation where the parameters $\lambda_{+}$ and $\lambda_{-}$ are chosen such that around $90\%$ of the RNA-polymerases are sequestered on the DNA at any time, as it was the case in \cite{klumpp_growth-rate-dependent_2008}.  The variability of protein  concentration does not seem to be impacted in this case.

\subsubsection{\label{subsec:Uncertainty-in-divi-init}Uncertainty in the
Replication Initiation and Division Timing}

In the complete stochastic model as it was initially described, replication initiation and division occur when the cell reaches the respective volumes of $V_{I}$ and $2V_{0}$. In practice it does not occur in this way. We propose here a modification of the stochastic model to take into account this aspect.

The way replication and division occur is still a disputed topic, see for example \cite{tyson_sloppy_1986,wang_robust_2010,soifer_single-cell_2014,osella_concerted_2014}.  For the division, one hypothesis (referred as ``sizer model'') is that the division decision depends on the current size of the cell (the size can refer to the mass or the volume, but as explained in Section~\ref{subsec:S_mod3_Detailed_descr}, the density constraint, see \cite{marr_growth_1991}, ensures a proportionality relation between these two quantities). With this hypothesis, at any instant, the instantaneous probability to divide depends only on the current cell size. In a first approximation, the cell size distributions observed experimentally can be explained by this ``sizer model'', see \cite{robert_division_2014,osella_concerted_2014}.  It is therefore this framework that we have considered to represent the cell division decision.

At time $t$ of the simulation, with a cell of volume $V(t)$, we introduce an instantaneous division rate $b_{D}(V(t))$ with $b_{D}$ is a positive function (the probability to divide between times $t$ and $t+\diff t$ is given by $b_{D}(V(t))\,\diff t$ ). The division decision is hence only volume dependent. The function $b_{D}$ is chosen so that the division occurs around the volume $2V_{0}$ with $V_{0}=1.3\,\text{\textmu m}^{3}$ and division precision can be fixed (for more information about the function $b_{D}$, see Chapter~4 of \cite{dessalles_stochastic_2017}).

Similarly for the replication initiation decision, the stochastic model initially described considers a fixed volume $V_{I}$ at which the DNA replication is initiated. We introduce variability in this cell decision, in the same way as we do for the division: at time $t$, we consider a replication initiation rate $b_{I}(V(t))$ such as the function $b_{I}$ is chosen in order to have a replication initiation that occurs around volume $V_{I}$.

We did several simulations with different functions $b_{D}$ and $b_{I}$ in order to have different precisions in the division and replication initiation decisions. Protein variability does seem to be changed significantly by any of these scenarios.

\subsubsection{\label{subsec:Other-deterministic-times}Deterministic Time for Replication }

When the stochastic model has been initially presented (in Section~\ref{subsec:S_mod3_Detailed_descr}), we have proposed two ways to model the time of DNA replication initiation $\tau_{I}$ . It can either be a deterministic time after the last division, or it can happen when the cell reaches the specific volume $V_{I}$. We have checked that this modeling choice has no significant influence on the global dynamic of the system, in particular in the protein noise.

\subsection{\label{subsec:S_mod3_ESD}Environmental State Decomposition}

In order to perform the environmental state decomposition, one has to specify the environment $Z$ with which the conditioning is made in Equation~(8) of the main article. In order to have a decomposition analog to the result of the dual reporter technique, one has to take into account the environmental aspects that would be similar in the expression of two identical genes in the same cell. In our model, two gene expressions in the same cell would undergo the same volume growth (as it was the case for the first two intermediate models), the same number of free RNA-polymerases, and the same number of free ribosomes. But on the contrary, the division would be specific to each expression. For a protein of type $i$, the decomposition would therefore be:
\[\small
\V{P_{i}/V}=\underbrace{\vphantom{\left[\Ehat{P_{i}|}\right]}\Ehat{\Vhat{P_{i}(t)/V(t)|\left(V(t),F_{Y}(t),F_{R}(t)\right)}}}_{\Vhatt{P_{i}/V}{int}}+\underbrace{\Vhat{\Ehat{P_{i}(t)/V(t)|\left(V(t),F_{Y}(t),F_{R}(t)\right)}}}_{\Vhatt{P_{i}/V}{ext}}.
\]
Estimations of $\Vhatt{P_{i}/V}{int}$ and $\Vhatt{P_{i}/V}{ext}$ are directly made on the results of the simulations. The result is that the extrinsic contribution of the variance $\Vhatt{P_{i}/V}{ext}$ represent only a very small portion of $\Vhat{P_{i}/V}$ for any gene $i$.

\subsection{\label{subsec:S_mod3_simplified_models}Simplified Models for Transcription
and Translation}

Reference \cite{fromion_stochastic_2015} proposes a multi-protein model for translation, with a  limited number of ribosomes, where each type of mRNA is supposed to be in constant quantities and where the maximum number of ribosomes on one single mRNA is limited.  The system also evolves in a fixed volume as it is the case for classic models.

In order to have a prediction for the number of respectively free
RNA-polymerases and free ribosomes; we consider two analog models
that are slightly simplified versions of the model of \cite{fromion_stochastic_2015}.
They respectively represent the transcription and translation part
and they are completely independent.

The goal is, for each of these models, to provide the equivalent of
the first results of \cite{fromion_stochastic_2015} and we will show
that the expected distribution of free RNA-polymerases (or free ribosomes)
is binomial in these simplified cases.

\subsubsection*{A Model for Transcription}

As explained in Section~\ref{subsec:F-RNAP-and_ribo}, one can interpret the model of \cite{fromion_stochastic_2015} as taking place in a fixed volume $V$ (it would correspond to a small portion of the cell cycle in the complete stochastic model, a portion where the volume of the cell does not change much). We also consider that the gene copy of each unit of production remains constant; as a consequence, the gene copy number of the $i$-th gene $G_{i}$ is constant and known. As in the complete stochastic model, and contrary to the model of \cite{fromion_stochastic_2015}, we consider that there is no limiting number of elongating RNA-polymerases on one gene.

In a pool of $K$ genes, denote by $N_{Y}$ the constant total number of polymerases. We consider the random variables $E_{Y,i}$ for $i\in\{1,...,K\}$ be the number of RNA-polymerases attached to the $i$-th gene. As a consequence, the random variable
\begin{equation}
F_{Y}:=N_{Y}-\sum_{i=1}^{K}E_{Y,i}\label{eq:rnap-decomp}
\end{equation}
is the number of free RNA-polymerases in the system.

The process $X(t)=\left(E_{Y,i}(t),i\in\{1,...K\}\right)$ takes place
in the state place $t$ the subset $S$ of $\mathbb N^{K}$ such as
\[
S:=\left\{ x\in\N^{K},\sum_{i=1}^{K}x_{i}\leq N_{Y}\right\} .
\]
There are at most $N_{Y}$ RNA-polymerases that can be attached to genes at the same time. We can describe the Markov process transition by the following \emph{Q}-matrix: by setting the vector $e_{i}=(\delta_{i'=i})_{i'\in\{1,...,K\}}$ ($\delta$ is used here as the Kronecker delta), for any $x,y\in S$,
\[
\begin{cases}
q(x,x+e_{i})=\lambda_{1,i}G_{i}\lambda_{1,i}f(x)/V & \text{for any }i\in\{1,...K\},\\
q(x,x-e_{i})=\mu_{1,i}x_{i} & \text{for any }i\in\{1,...K\}\text{, if \ensuremath{x_{i}>0}},\\
q(x,y)=0 & \text{if }\left\Vert x-y\right\Vert >1.
\end{cases}
\]
where
\[
f_{Y}(x):=N_{Y}-\sum_{i=1}^{K}x_{i}
\]
the number of free RNA-polymerases. Equation~\eqref{eq:rnap-decomp}
leads in particular to $f(x-e_{i})=f(x)\boldsymbol{+}1$ for all $i$.

As in \cite{fromion_stochastic_2015}, we are looking for an invariant reversible probability measure $\pi$ of this Markov process, i.e.  for any $i\in\{1,\dots,K\}$,
\[
\pi(x)\mu_{1,i}x_{i}  = \pi(x-e_{i})\cdot\lambda_{1,i}G_{i}\left(f_{Y}(x)+1\right)/V.\label{eq:equilibre-point-fixe}
\]
\begin{prop}
The invariant measure $\pi$ of the number of RNA-polymerases in each gene has the following form
\[
\pi(x)=\frac{1}{Z}\cdot\frac{1}{f_{Y}(x)!}\prod_{i=1}^{K}\frac{\left(G_{i}\lambda_{1,i}/(V\mu_{1,i})\right)^{x_{i}}}{x_{i}!}
\]
for any $x\in S$ and with $Z>0$ the normalization constant.
\end{prop}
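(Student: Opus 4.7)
The plan is to confirm the claimed $\pi$ is the (unique) reversible invariant probability measure by directly verifying the detailed balance equations stated in the excerpt, and then invoking finiteness of the state space $S$ to ensure existence and uniqueness.

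First, I would introduce the shorthand $a_{i} := G_{i}\lambda_{1,i}/(V\mu_{1,i})$ so that the candidate takes the compact product form
\[
\pi(x) \;=\; \frac{1}{Z}\cdot\frac{1}{f_{Y}(x)!}\prod_{i=1}^{K}\frac{a_{i}^{x_{i}}}{x_{i}!}, \qquad x\in S.
\]
Then, for any $i$ and any $x\in S$ with $x_{i}\geq 1$, I would compute the ratio $\pi(x-e_{i})/\pi(x)$. Using $f_{Y}(x-e_{i}) = f_{Y}(x)+1$ and the identity $a_{i}^{x_{i}-1}/(x_{i}-1)! = (x_{i}/a_{i})\cdot a_{i}^{x_{i}}/x_{i}!$, this ratio simplifies to
\[
\frac{\pi(x-e_{i})}{\pi(x)} \;=\; \frac{x_{i}}{a_{i}\,(f_{Y}(x)+1)}.
\]

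Next, I would plug this into the detailed balance relation. The right-hand side of the reversibility equation becomes
\[
\pi(x-e_{i})\cdot\frac{\lambda_{1,i}G_{i}(f_{Y}(x)+1)}{V} \;=\; \pi(x)\cdot\frac{x_{i}\,\lambda_{1,i}G_{i}}{a_{i}V} \;=\; \pi(x)\,\mu_{1,i}\,x_{i},
\]
where the last equality uses $a_{i}V = G_{i}\lambda_{1,i}/\mu_{1,i}$. This is exactly $\pi(x)\mu_{1,i}x_{i}$, the left-hand side, so detailed balance holds for every $i$ and every $x\in S$ with $x_{i}\geq 1$.

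Finally, since the state space $S\subset\mathbb{N}^{K}$ is finite (there are at most $N_{Y}$ RNA-polymerases in total), the normalization constant $Z = \sum_{x\in S} f_{Y}(x)!^{-1}\prod_{i} a_{i}^{x_{i}}/x_{i}!$ is finite and positive, so $\pi$ is a well-defined probability distribution on $S$. Because the chain is irreducible on $S$ (from any state one can reach $\mathbf{0}$ by successive unbinding events, and from $\mathbf{0}$ any state by successive bindings, all of which have strictly positive $Q$-matrix entries), the reversible distribution is the unique invariant probability measure. The most delicate point is simply bookkeeping the factor $(f_{Y}(x)+1)$ produced by the $1/f_{Y}(x)!$ factor when comparing $\pi(x-e_{i})$ to $\pi(x)$; once this is handled correctly, the verification is routine and no new analytical difficulty arises.
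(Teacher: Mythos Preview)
Your proof is correct and follows essentially the same approach as the paper: both verify the detailed balance equation $\pi(x)\mu_{1,i}x_{i}=\pi(x-e_{i})\,\lambda_{1,i}G_{i}(f_{Y}(x)+1)/V$ by direct computation, using $f_{Y}(x-e_{i})=f_{Y}(x)+1$. Your version is slightly more complete, since you also justify that $Z$ is finite and that irreducibility gives uniqueness, points the paper leaves implicit.
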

\begin{proof}
We only need to check that $\pi$ satisfies Equation~\eqref{eq:equilibre-point-fixe}. For a gene $i\in\{1,...,K\}$, we take $x\in S$ such that $x-e_{i}\in S$ .
\begin{align*}
\pi(x)\mu_{1,i}x_{i} & =  \frac{1}{Z}\cdot\frac{1}{f_{Y}(x)!}\prod_{i'=1}^{K}\frac{\left(G_{i'}\lambda_{1,i'}/(V\mu_{1,i'})\right)^{x_{i'}}}{x_{i'}!}\mu_{1,i}x_{i}\\
 & =  \frac{1}{Z}\cdot\frac{1}{(f_{Y}(x)+1)!}\prod_{i'\ne i}^{K}\left(\frac{\left(G_{i'}\lambda_{1,i'}/(V\mu_{1,i'})\right)^{x_{i'}}}{x_{i'}!}\right)\cdot\frac{\left(G_{i}\lambda_{1,i}/(V\mu_{1,i})\right)^{(x_{i}-1)}}{(x_{i}-1)!}\cdot\\
 &\phantom{111111111111111111111111111}\frac{G_{i}\lambda_{1,i}}{V}(f_{Y}(x)+1)\\
 & =  \pi(x-e_{i})\cdot G_{i}\lambda_{1,i}(f_{Y}(x)+1)/V.
\end{align*}
So $\pi$ satisfies Equation~\eqref{eq:equilibre-point-fixe}.
\end{proof}
We can now derive from the previous proposition the steady state distribution of $F_{Y}$, the number of free RNA-polymerases of the process.
\begin{prop}
\label{prop:app-FY}The distribution of the  number of free polymerases $F_{Y}$ is given by
\[
\mathbb P\left[F_{Y}=n\right]=\binom{N_{Y}}{n}\frac{\Lambda_{Y}^{N_{Y}-n}}{(1+\Lambda_{Y})^{N_{Y}}}\mbox{,}
\]
with $\Lambda$ defined such as
\[
\Lambda_{Y}:=\sum_{i=1}^{K}G_{i}\lambda_{1,i}/(V\mu_{1,i}).
\]
\end{prop}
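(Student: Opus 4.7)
The plan is to compute $\mathbb{P}[F_Y=n]$ directly from the explicit invariant measure $\pi$ established in the preceding proposition, by summing $\pi(x)$ over the level set $\{x \in S : \sum_i x_i = N_Y - n\}$ and then identifying the normalizing constant $Z$.

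First I would introduce the shorthand $\alpha_i := G_i \lambda_{1,i}/(V\mu_{1,i})$, so that $\Lambda_Y = \sum_i \alpha_i$ and the invariant measure takes the compact form
\[
\pi(x) = \frac{1}{Z \cdot f_Y(x)!} \prod_{i=1}^K \frac{\alpha_i^{x_i}}{x_i!}.
\]
Since $F_Y = N_Y - \sum_i E_{Y,i}$, the event $\{F_Y = n\}$ coincides with $\{\sum_i x_i = N_Y - n\}$, on which $f_Y(x)! = n!$ is constant. Hence
\[
\mathbb{P}[F_Y = n] = \frac{1}{Z \, n!} \sum_{\substack{x \in \mathbb{N}^K \\ x_1+\cdots+x_K = N_Y - n}} \prod_{i=1}^{K} \frac{\alpha_i^{x_i}}{x_i!}.
\]

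The key step is the multinomial identity
\[
\sum_{\substack{x \in \mathbb{N}^K \\ x_1+\cdots+x_K = m}} \prod_{i=1}^{K} \frac{\alpha_i^{x_i}}{x_i!} = \frac{1}{m!}\left(\sum_{i=1}^{K}\alpha_i\right)^{m} = \frac{\Lambda_Y^{m}}{m!},
\]
which I would apply with $m = N_Y - n$ to obtain
\[
\mathbb{P}[F_Y = n] = \frac{1}{Z} \cdot \frac{\Lambda_Y^{N_Y - n}}{n!\,(N_Y-n)!}.
\]

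Finally, I would pin down $Z$ by the normalization requirement $\sum_{n=0}^{N_Y} \mathbb{P}[F_Y = n] = 1$: a second application of the binomial theorem gives
\[
\sum_{n=0}^{N_Y} \frac{\Lambda_Y^{N_Y-n}}{n!\,(N_Y-n)!} = \frac{1}{N_Y!}\sum_{n=0}^{N_Y}\binom{N_Y}{n}\Lambda_Y^{N_Y-n} = \frac{(1+\Lambda_Y)^{N_Y}}{N_Y!},
\]
so $Z = (1+\Lambda_Y)^{N_Y}/N_Y!$, and substituting back yields the claimed formula. There is no real obstacle here: the only point requiring minor care is verifying that the level set $\{\sum_i x_i = N_Y - n\}$ lies inside $S$ for every $0 \le n \le N_Y$ (which is immediate from the definition of $S$), so that no truncation terms are lost when summing the invariant measure.
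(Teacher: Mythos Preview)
Your proof is correct and follows essentially the same route as the paper: sum the invariant measure $\pi$ over the level set $\{\sum_i x_i = N_Y-n\}$, collapse the inner sum, and then normalize via the binomial theorem. The only cosmetic difference is that the paper phrases the collapsing step probabilistically---it recognizes $\sum_{\sum x_i = m}\prod_i \alpha_i^{x_i}/x_i!$ as (up to a factor $e^{-\Lambda_Y}$) the probability that a sum of independent Poisson variables equals $m$---whereas you invoke the multinomial identity directly; these are the same computation.
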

$F_{Y}$ follows a binomial distribution ${\cal B}\left(\phi,N\right)$ with parameters $\phi=(1+\Lambda_{Y})^{-1}$ and $N=N_{Y}$.
\begin{proof}
From  Equation~\eqref{eq:equilibre-point-fixe}, it follows that for  $n\in\{1,...,N_{Y}\}$,
\begin{align*}
\mathbb P\left[F_{Y}=n\right] 
 & =  \sum_{x\in S}\frac{1}{Z}\cdot\frac{1}{n!}\prod_{i=1}^{K}\frac{\left(G_{i}\lambda_{1,i}/(V\mu_{1,i})\right)^{x_{i}}}{x_{i}!}\ind{\sum_{i}x_{i}=N_{Y}-n}\\
 & =  \frac{1}{Z}\cdot\frac{1}{n!}\sum_{x\in S}\prod_{i=1}^{K}\frac{\left(G_{i}\lambda_{1,i}/(V\mu_{1,i})\right)^{x_{i}}}{x_{i}!}\ind{\sum_{i}x_{i}=N_{Y}-n}\\
 & =  \frac{1}{Z_{F_{Y}}}\cdot\frac{1}{n!}\mathbb P\left[\sum_{i=1}^{K}\sum_{k=1}^{G_{i}}C_{i,k}=N_{Y}-n\right]
\end{align*}
with $\forall i\in\{1,\dots,K\}$ and $\forall k\in\{1,\dots,G_{i}\}$,
$C_{i,k}\sim\cal P(\lambda_{1,i}/(V\mu_{1,i}))$.

Since the independent random variables $C_{p,k}$ are Poisson,  their sum is also Poisson  with  parameter $\Lambda:=\sum_{i=1}^{K}G_{i}\lambda_{1,i}/(V\mu_{1,i})$.
By summing up the previous relation, one gets that

\begin{align*}
1 & =  \frac{1}{Z_{F_{Y}}}\cdot\sum_{n=0}^{N_{Y}}\frac{1}{n!}\mathbb P\left[C_{1,1}=N_{Y}-n\right] \\
& =  \frac{1}{Z_{F_{Y}}}\cdot\sum_{n=0}^{N_{Y}}\frac{1}{n!}e^{-\Lambda}\frac{\Lambda^{N_{Y}-n}}{(N_{Y}-n)!}
& =  \frac{1}{Z_{F_{Y}}}\cdot\frac{1}{N_{Y}!}e^{-\Lambda}\cdot\sum_{n=0}^{N_{Y}}\binom{N_{Y}}{n}\Lambda^{N_{Y}-n},
\end{align*}
hence $Z_{F_{Y}}=e^{-\Lambda}(1+\Lambda)^{N_{Y}}/N_{Y}!$.
\end{proof}

\subsubsection*{A Model for Translation}
The model for translation considered here is analogous to the transcription case. We still consider that the volume $V$ is fixed and that for each gene, the number $M_{i}$ of mRNAs of type $i$ is known and constant (because of these, the process describes here is independent of transcription).  As in the complete stochastic model, and contrary to the model of \cite{fromion_stochastic_2015}, we consider that there is no limiting number of elongating ribosomes on one mRNA.

Similarly to the transcription, we can define $N_{R}$ (the total number of ribosomes), $E_{R,i}$ (the number of ribosomes elongating an mRNA of type $i$) and $F_{R}$ (the number of free ribosomes) such as
\[
F_{R}:=N_{R}-\sum_{i=1}^{K}E_{R,i}.
\]
The rate at which a ribosome is sequestered on a type $i$ mRNA is
therefore $M_{i}\lambda_{2,i}/V$, and the rate at which an elongation
terminates on a type $i$ mRNA is $\mu_{2,i}E_{R,i}$.

As this model is analog to the transcription case, we can also prove
that
\begin{prop}
\label{prop:app-FR}The number of free ribosomes $F_{R}$ follows
\[
\mathbb P\left[F_{R}=n\right]=\binom{N_{R}}{n}\frac{\Lambda_{R}^{N_{R}-n}}{(1+\Lambda_{R})^{N_{R}}}\mbox{,}
\]
with $\Lambda_{R}$ defined such as
\[
\Lambda_{R}:=\sum_{i=1}^{K}M_{i}\lambda_{2,i}/(V\mu_{2,i}),
\]
 $F_{R}$ follows a binomial distribution ${\cal B}\left(\phi,N\right)$
for which $\phi=(1+\Lambda_{R})^{-1}$ and $N=N_{R}$.
\end{prop}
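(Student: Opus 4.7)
The plan is to carry out exactly the same argument as in the proof of Proposition~\ref{prop:app-FY}, since the translation model is structurally identical to the transcription model with the substitutions $G_i \mapsto M_i$, $\lambda_{1,i} \mapsto \lambda_{2,i}$ and $\mu_{1,i} \mapsto \mu_{2,i}$. Concretely, I would first set up the continuous-time Markov chain $X(t)=(E_{R,i}(t),\,i\in\{1,\dots,K\})$ on the state space
\[
S_R:=\Bigl\{x\in\mathbb{N}^{K},\ \sum_{i=1}^{K}x_{i}\leq N_{R}\Bigr\}
\]
with \emph{Q}-matrix $q(x,x+e_i)=M_{i}\lambda_{2,i}f_{R}(x)/V$ and $q(x,x-e_i)=\mu_{2,i}x_{i}$, where $f_{R}(x):=N_{R}-\sum_i x_i$ counts free ribosomes.

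Second, I would look for a reversible invariant probability by directly guessing the product-form ansatz
\[
\pi(x)=\frac{1}{Z}\cdot\frac{1}{f_{R}(x)!}\prod_{i=1}^{K}\frac{\left(M_{i}\lambda_{2,i}/(V\mu_{2,i})\right)^{x_{i}}}{x_{i}!},
\]
by analogy with the transcription case. Checking detailed balance reduces to verifying $\pi(x)\mu_{2,i}x_i = \pi(x-e_i)\cdot M_{i}\lambda_{2,i}(f_{R}(x)+1)/V$ for each $i$ such that $x-e_i\in S_R$, which is a one-line algebraic identity using $f_R(x-e_i)=f_R(x)+1$, exactly as in the proof of Proposition~\ref{prop:app-FY}.

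Third, to extract the law of $F_R$, I would marginalize $\pi$ over all configurations with $\sum_i x_i=N_R-n$. Introducing independent Poisson random variables $C_{i,k}\sim\mathcal{P}(\lambda_{2,i}/(V\mu_{2,i}))$ for $i\in\{1,\dots,K\}$ and $k\in\{1,\dots,M_i\}$, the sum $\sum_{i,k}C_{i,k}$ is Poisson with parameter $\Lambda_{R}=\sum_{i=1}^{K}M_{i}\lambda_{2,i}/(V\mu_{2,i})$, and the marginal probability $\mathbb{P}[F_R=n]$ becomes proportional to $\mathbb{P}[\sum_{i,k}C_{i,k}=N_R-n]/n!$. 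Summing over $n$ from $0$ to $N_R$ to normalize yields $Z_{F_R}=e^{-\Lambda_R}(1+\Lambda_R)^{N_R}/N_R!$ by the binomial expansion of $(1+\Lambda_R)^{N_R}$, which gives the claimed binomial form with $\phi=(1+\Lambda_R)^{-1}$.

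There is no real obstacle here: the translation dynamics in the simplified model differ from the transcription dynamics only in the names of the parameters (with $M_i$ playing the role previously played by $G_i$), and the transcription and translation subsystems are assumed independent in the simplified decoupled model, so the earlier combinatorial argument transports verbatim. The one point to state explicitly is the independence assumption that makes $M_i$ constant in the chain governing $(E_{R,i})_i$, so that the rate $M_i\lambda_{2,i}F_R/V$ depends on the state only through $F_R$; beyond that, the proof is mutatis mutandis identical to the transcription case.
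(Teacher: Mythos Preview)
Your proposal is correct and matches the paper's approach exactly: the paper does not give a separate proof for Proposition~\ref{prop:app-FR} but simply states that the translation model is analogous to the transcription case, so the argument of Proposition~\ref{prop:app-FY} applies \emph{mutatis mutandis} under the substitutions you identify. Your outline spells out precisely this transport of the proof, including the product-form invariant measure, the detailed-balance check, and the Poisson-sum marginalization yielding the binomial law.
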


\end{document}